\documentclass[11pt, a4paper, noarxiv, onecolumn]{quantumarticle}

\usepackage{amsthm}

\usepackage{silence}
\WarningFilter{caption}{Unknown document class (or package)}
\WarningFilter{latexfont}{Font shape `}

\theoremstyle{plain} % or definition/remark
\newtheorem{definition}{Definition}
\newtheorem{theorem}{Theorem}
\newtheorem{lemma}{Lemma}
\newtheorem{example}{Example}
\newtheorem{notation}{Notation}
\newtheorem{operator}{Operation}

\usepackage{graphicx}

\usepackage[most]{tcolorbox}
\usepackage{varwidth}
\usepackage{xspace}
\usepackage{complexity}
\usepackage[export]{adjustbox}
\tcbuselibrary{listings,breakable}
\tcbset{colback=cyan!5!white,colframe=cyan!75!black,fonttitle=\bfseries}
\undef\explainerbox

\tcbset{colback=black!5!white,colframe=white!75!black,fonttitle=\bfseries}

\renewcommand\implies{\Rightarrow}
\renewcommand\iff{=} % Cannot use \Leftrightarrow because it has lower precedence than eg \land

\newtcolorbox{ruleblock}[2][]
{
    title    = {#2},
    top=-1em,
    skin first=enhanced,
    skin middle=enhanced,
    skin last=enhanced,
    #1,
}

\newcommand\concept[1]{\textit{#1}\xspace}

\usepackage{listings}
\PassOptionsToPackage{dvipsnames}{xcolor} % very top of preamble
\usepackage{xcolor}
\definecolor{LightGrey}{gray}{0.96}
\definecolor{CodeBlue}{rgb}{0.0,0.0,0.7}
\definecolor{CodeGreen}{rgb}{0.0,0.5,0.0}

\lstnewenvironment{pyconcode}{
  \lstset{
    language=Python,
    backgroundcolor=\color{LightGrey},
    frame=single,
    framerule=0.4pt,
    rulecolor=\color{black},
    basicstyle=\ttfamily\small,
    breaklines=true,
    showstringspaces=false,
    keepspaces=true,
    xleftmargin=1em,
    emph={from, import, print, lambda, for},
    emphstyle=\color{CodeGreen},
    emph={[2]wcnf_matrix, functools},
    emphstyle={[2]\color{CodeBlue}},
    literate={>>>}{{\textcolor{black}{>>>}}}3,
  }
}{}

\title{Quantum Physics using Weighted Model Counting}
\author{
  Dirck van den Ende, Joon Hyung Lee, Alfons Laarman, Henning Basold}
  \affiliation{  Leiden Institute of Advanced Computer Science}

% \date{May 2025}

\usepackage{commands}
\usepackage{format}
\usepackage{doi}

\begin{document}

\maketitle

% abstract
\begin{abstract}
    \noindent
    Weighted model counting (WMC) has proven effective at a range of tasks within computer science, physics, and beyond. However, existing approaches for using WMC in quantum physics only target specific problem instances, lacking a general framework for expressing problems using WMC. This limits the reusability of these approaches in other applications and risks a lack of mathematical rigor on a per-instance basis. We present an approach for expressing linear algebraic problems, specifically those present in physics and quantum computing, as WMC instances. We do this by introducing a framework that converts Dirac notation to WMC problems. We build up this framework theoretically, using a type system and denotational semantics, and provide an implementation in Python. We demonstrate the effectiveness of our framework in calculating the partition functions of several physical models: The transverse-field Ising model (quantum) and the Potts model (classical). 
    The results suggest that heuristics developed in automated reasoning can be systematically applied to a wide class of problems in quantum physics through our framework.
\end{abstract}

\bigskip

\pagenumbering{gobble}
\thispagestyle{empty}
\tableofcontents
\thispagestyle{empty}
\pagenumbering{arabic}

\setcounter{page}{1}

%Intro
\section{Introduction}
\label{sec:intro}

\let\oldpar\paragraph
\renewcommand\paragraph[1]{\oldpar{#1.}}

\paragraph{Problems in Quantum Physics}
Quantum physics describes the behavior of fundamental particles, atoms, and molecules. Quantum computing promises breakthroughs in areas such as drug development, traffic optimization, and artificial intelligence~\cite{Romero2025,Villanueva2025,Li2022}. However, large-scale quantum hardware remains scarce and error-prone, 
making classical simulation of quantum systems essential for validating 
algorithms, studying physical models, and benchmarking quantum devices. 
Yet classical simulation is inherently challenging due to the 
exponential growth of the solution space.  A central example is 
computing partition functions, which needs to sum over all possible 
configurations of a system and is crucial for understanding 
thermodynamic properties. Direct computation quickly becomes 
intractable as system size grows.

\paragraph{The Potential of Weighted Model Counting}

One promising classical technique to compute partition functions is weighted model counting (WMC).
Weighted model counting computes the total weight of all satisfying assignments to a \concept{weighted Boolean formula}.  Importantly, this framework extends to quantum physics: partition 
functions of quantum systems, such as the transverse-field Ising 
model, can be encoded as WMC instances, enabling classical solvers 
to directly tackle quantum problems.  We write:
\begin{equation}
    \WMC(\phi, W) = \sum_{\tau \models \phi} W(\tau),
\end{equation}
where $W$ assigns weights to assignments $\tau$ satisfying the Boolean formula $\phi$. This task is \#\P-hard~\cite{valiant1979complexity,Hunt1989}, yet modern solvers based on clause learning, algebraic decision diagrams, or tensor networks can handle large instances~\cite{Sang2005,Sharma2019,Dudek2020,Dudek2021,Dudek2021ProCount}.
Initial applications include probabilistic reasoning~\cite{Dilkas2021,Chavira2008,Paredes2019}.

These advances in clause learning and tensor network methods have enabled WMC applications in statistical physics and quantum computing:
Mei et al.~\cite{Mei2024b,Mei2024c,Mei2024} demonstrated that WMC can simulate quantum circuits by reducing gate operations to Boolean formulas. Even complex-valued simulations can be reduced to real or Boolean WMC instances. In parallel, Nagy et al.~\cite{Nagy2024} showed that the Ising model partition function can be reduced to the WMC problem, and that tensor-based WMC solvers outperform traditional physics tools like CATN~\cite{Pan2020}.

\paragraph{Contributions}

This paper makes the following contributions:
\begin{enumerate}
    \item A general encoding framework for $q^n \times q^m$ matrices using WMC, supporting operations such as addition, multiplication, and computing the trace. We define a formal language for encoding Dirac notation and prove correctness of the encoding.
    \item A Python implementation, \texttt{DiracWMC}, available at~\cite{DiracWMC}, used to generate the experiments in Section~\ref{sec:ising}, \ref{sec:quantumising}, and \ref{sec:potts}.
    \item Applications to the partition function of the transverse-field Ising model (quantum) and the Potts model (classical), demonstrating WMC beyond previously explored domains.
\end{enumerate}

Figure~\ref{fig:intro_figure} summarizes our framework pipeline: the user provides a physics problem and selects a WMC solver; the rest of the workflow is automated by our system.

\begin{figure}[h]
    \centering
    \includegraphics[width=\linewidth,page=1]{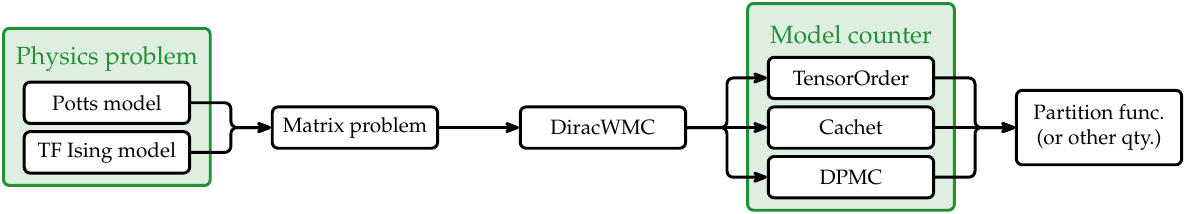}
    \caption{Workflow of solving physics problems using our framework.}
    \label{fig:intro_figure}
\end{figure}

\paragraph{Overview}

The matrices arising in partition function computations decompose as 
tensor products of small local operators, yielding matrices that are 
exponentially large but highly sparse. Direct matrix computation fails 
to exploit this sparsity; our approach instead encodes matrix entries 
as weighted Boolean formulas and reduces linear algebra to WMC, which 
is specifically designed to handle such structure efficiently. We define 
a formal language for Dirac notation, prove correctness of the encoding 
(Section~\ref{sec:matrix}), and implement the framework as 
\texttt{DiracWMC}. We demonstrate the approach on three models: the 
classical Ising model (Section~\ref{sec:ising}), the transverse-field 
Ising model (Section~\ref{sec:quantumising}), and the Potts model 
(Section~\ref{sec:potts}), with experimental comparisons against 
TensorOrder\if{and CATN}\fi included throughout. Preliminary 
definitions from Boolean logic, weighted model counting, and Dirac 
notation are collected in Section~\ref{sec:prelims}. We conclude in 
Section~\ref{sec:conclusion}.

% prelim

\section{Preliminaries}\label{sec:prelims}

This section fixes notation used throughout. Unless stated otherwise, $\mathbb B=\{0,1\}$ and all matrices are over a field $\mathbb F$ (typically $\mathbb R$ or $\mathbb C$).

\subsection{Boolean logic and weighted model counting}\label{sec:boolean_logic}

Let $V$ be a finite set of Boolean variables, $\tau:V\to\mathbb B$ an assignment, and $\phi$ a Boolean formula over $V$. We write $\phi[\tau]\in\mathbb{B}$ for its truth value under~$\tau$. For $v\in V$, we define \concept{literals} $v$ and $\overline v$ (the negation $\neg v$ of $v$). A formula is in conjunctive normal form (CNF) if
\[
\phi \equiv \bigwedge_{i=1}^n \bigvee_{j=1}^m x_{ij},
\]
with each $x_{ij}$ a literal. All instances in our experiments are supplied to counters in CNF.

To assign weights to formulas, we use a weight function on assignments as introduced in Section~\ref{sec:intro}.
To enable powerful heuristics~\cite{Chavira2008}, the weight function is limited to literals as Definition~\ref{def:wmc} shows.
\begin{definition}[Weighted model counting]\label{def:wmc}
Let 
$W:V\times\mathbb B\to\mathbb F$ be a function called the \emph{weight function}. The \emph{weighted model count} of $\phi$ w.r.t.\ $W$ is
\begin{equation}
\WMC(\phi,W)\;=\;\sum_{\tau:V\to\mathbb B}\;\phi[\tau]\;\prod_{v\in V} W\bigl(v,\,\tau(v)\bigr).
\label{eq:wmc}
\end{equation}
We use $\dom(W)=V$, and abbreviate $W(v)=W(v,1)$ and $W(\overline v)=W(v,0)$.
\end{definition}

\subsection{Quantum notation}\label{sec:dirac}
We adopt standard Dirac notation and Kronecker algebra for $q^n\times q^m$ matrices.
\paragraph{Bras, kets, and matrix elements}
Moreover, we denote by $\bra{i}_q$ (row) and $\ket{i}_q$ (column) the computational basis vectors with a $1$ at position $i$ (zero-based) and $0$ elsewhere. For a matrix $M$, $\bra{j} M \ket{i} = M_{ij}$.
\paragraph{Kronecker product}
For matrices $A,B$, their Kronecker or tensor product is given by
\[
A\otimes B=\begin{bmatrix}
A_{0,0}B & \cdots & A_{0,r-1}B\\[-2pt]
\vdots & \ddots & \vdots\\[-2pt]
A_{c-1,0}B & \cdots & A_{c-1,r-1}B
\end{bmatrix},\qquad
(A_1\!\otimes\! B_1)(A_2\!\otimes\! B_2)=A_1A_2\!\otimes\! B_1B_2.
\]
\paragraph{Trace}
$\tr(M)$ denotes the matrix trace; $\tr(A\otimes B)=\tr(A)\tr(B)$.
\paragraph{Single-qubit gates}
As usual, we let
\[
X=\begin{bmatrix}0&1\\[2pt]1&0\end{bmatrix},\quad
Y=\begin{bmatrix}0&-i\\[2pt] i&0\end{bmatrix},\quad
Z=\begin{bmatrix}1&0\\[2pt]0&-1\end{bmatrix},\quad
H=\tfrac{1}{\sqrt2}\begin{bmatrix}1&1\\[2pt]1&-1\end{bmatrix}.
\]
All are involutions; we will use $X=HZH$ in Sec.~\ref{sec:quantumising}.
\paragraph{Matrix exponential}
For a square matrix $M$, $\exp(M)=\sum_{k\ge0} M^k/k!$; if $M$ is diagonal, $\exp(M)$ is the diagonal of entrywise exponentials. For large non-diagonal $M$ we use standard scaling–and–squaring with Padé approximants~\cite{Al-Mohy2010}.

%matrix

\section{Matrix Computations using WMC}
\label{sec:matrix}

Here, we introduce weighted model counting (WMC) representations of general matrices. A WMC representation of quantum circuits was previously given in~\cite{Mei2024,Mei2024b,Mei2024c}. We aim to generalize and formalize this work by allowing an arbitrary base dimension of subspaces $q$ (qudits). We also add support for any $q^n \times q^m$ matrix, instead of just row/column vectors and square matrices.

We represent matrices as tuples $(\phi, W, x, y, q)$ of a Boolean formula, weight function, input and output variables, and some base size, respectively. The formula and weight function form the basis of model counting instances, used for every entry in the matrix. The input/output variables act as pointers to the specific entries in the matrix, which are obtained by adding restrictions to the values of these variables to the formula $\phi$. Scalars are represented by WMC instances $(\phi, W)$, where the value of the scalar is $\WMC(\phi, W)$.

We also provide encodings for several common matrix operations that can be performed on these representations directly, such as matrix multiplication, taking the trace, and computing the Kronecker product. To formalize the operations and prove their correctness, we first introduce a language of scalars and matrices, built from scalar constants, bras, and kets. This language is similar to D-Hammer, introduced by Xu et al.~\cite{Xu2025}. However, the language we introduce in this work is much simpler and neither supports contexts nor labeled matrices.

We introduce two kinds of denotational semantics on this language: $\denot{\cdot}_v$ returns the actual matrix or scalar that an expression represents. In contrast, $\denot{\cdot}_r$ returns an equivalence class of (matrix or scalar) representations that corresponds with the matrix or scalar.

Before we introduce this formal language, however, we give some code examples in our implementation of the language \texttt{DiracWMC}.

\subsection{Code examples}

The language is implemented using Python in the package \texttt{DiracWMC}~\cite{DiracWMC}. Full instructions on how to install and use the package are included there. Consider the following basic example, which calculates the product of a ket and a bra, and displays the result:
\begin{pyconcode}
>>> from wcnf_matrix import *
>>> I = Index(2)
>>> M = ket(I[0]) * bra(I[1])
>>> print(value(M))
[ 0.0  1.0
  0.0  0.0 ]
\end{pyconcode}
First, we import all of the contents of the \texttt{DiracWMC} package (called \texttt{wcnf\_matrix} in the Python code). Then we create a space in which to perform operations, which we do using \texttt{Index}. The number $2$ indicates that we use $q = 2$, i.e., we are working with qubits. The third line then creates two objects, a ket and a bra. These do not store the values of the two vectors explicitly, but rather store the tuples $(\phi, W, x, y, q)$ that can be used to calculate the entries in the matrices. In this example, we multiply the column vector $(1, 0)^T$ with the row vector $(1, 0)$. This then results in a new object, which again stores a tuple $(\phi, W, x, y, q)$ instead of the entries in the matrix. To get the actual entries of the matrix, we use \texttt{value(M)}.

We can replace the index number $2$ with $3$ to get a $3 \times 3$ matrix instead:
\begin{pyconcode}
>>> from wcnf_matrix import *
>>> I = Index(3)
>>> M = ket(I[0]) * bra(I[1])
>>> print(value(M))
[ 0.0  1.0  0.0
  0.0  0.0  0.0
  0.0  0.0  0.0 ]
\end{pyconcode}
Matrices can be multiplied and added, and the Kronecker product of two matrices can be determined:
\begin{pyconcode}
>>> from wcnf_matrix import *
>>> I = Index(3)
>>> M1 = ket(I[0]) * bra(I[1])
>>> M2 = ket(I[2]) * bra(I[0])
>>> print(value(M1 * M2))
[ 0.0  0.0  0.0
  0.0  0.0  0.0
  0.0  0.0  0.0 ]
>>> print(value(3.3 * M1 + M2))
[ 0.0  3.3  0.0
  0.0  0.0  0.0
  1.0  0.0  0.0 ]
>>> print(value(M1 ** M2))
[ 0.0  0.0  0.0  0.0  0.0  0.0  0.0  0.0  0.0
  0.0  0.0  0.0  0.0  0.0  0.0  0.0  0.0  0.0
  0.0  0.0  0.0  0.0  0.0  0.0  0.0  0.0  0.0
  0.0  0.0  0.0  0.0  0.0  0.0  0.0  0.0  0.0
  0.0  0.0  0.0  0.0  0.0  0.0  0.0  0.0  0.0
  0.0  0.0  0.0  0.0  0.0  0.0  0.0  0.0  0.0
  0.0  1.0  0.0  0.0  0.0  0.0  0.0  0.0  0.0
  0.0  0.0  0.0  0.0  0.0  0.0  0.0  0.0  0.0
  0.0  0.0  0.0  0.0  0.0  0.0  0.0  0.0  0.0 ]
\end{pyconcode}
The true power of the package lies in cases where the explicit values of the matrix are not required. For example, the trace can be calculated using \texttt{M.trace\_formula()}. This returns \texttt{CNF} and \texttt{WeightFunction} objects. The trace can be calculated by passing the \texttt{CNF} formula as an argument to a call to the \texttt{WeightFunction} object.
\begin{pyconcode}
>>> from wcnf_matrix import *
>>> from functools import reduce
>>> I = Index(2)
>>> M1 = 2 * ket(I[0]) * bra(I[0]) + ket(I[1]) * bra(I[1])
>>> print(value(M1))
[ 2.0  0.0
  0.0  1.0 ]
>>> M2 = reduce(lambda x, y: x ** y, [M1]*100)
>>> cnf, weight_func = M2.trace_formula()
>>> print(weight_func(cnf))
5.15378e+47
\end{pyconcode}
In this example, we use the Python built-in \texttt{reduce} to create an object \texttt{M2} that represents a $2^{100} \times 2^{100}$ matrix. Then we calculate the trace of this large matrix using a model counter.

We could also retrieve entries in this matrix by multiplying it by bras and kets. In the following example, we calculate the top-left entry in the matrix:
\begin{pyconcode}
... (continued) ...
>>> B = reduce(lambda x, y: x ** y, [bra(I[0])]*100)
>>> K = reduce(lambda x, y: x ** y, [ket(I[0])]*100)
>>> print(value(B * M2 * K))
[ 1.26765e+30 ]
\end{pyconcode}
It is also possible to label the different dimensions of the matrix, such that matrices acting on different subspaces can be multiplied and added. In the following example, we apply a matrix \texttt{M} on a subspace labeled with \texttt{r1} (using the syntax \texttt{M | r1}), while the vector we apply it to acts on subspaces \texttt{r1} and \texttt{r2}. This example also shows the use of \texttt{uset}, which returns an iterable \texttt{I[0]}, \texttt{I[1]}, \dots, \texttt{I[q-1]}.
\begin{pyconcode}
>>> from wcnf_matrix import *
>>> from functools import reduce
>>> I = Index(2)
>>> r1, r2 = Reg(I), Reg(I)
>>> phi = lambda index: reduce(lambda x, y: x + y, (ket(nv, nv) for nv in
... uset(index))) # Returns column vector (1, 1, 1, 1)^T
>>> M = ket(I[0]) * bra(I[0]) # Matrix [(1, 0), (0, 0)]
>>> print(value((M | r1) * (phi(I) | (r1, r2))))
[ 1.0
  0.0
  0.0
  0.0 ] | (reg0, reg1)
\end{pyconcode}
Note that the resulting matrix is still labeled. To get a matrix \texttt{M} without labels, use \texttt{M.mat}.

By default, the package uses the DPMC model counter. However, it is possible to change this to Cachet or TensorOrder using the \texttt{set\_model\_counter} method. It is also possible to set the type of variable encoding used in the matrix representations, which may have a performance impact for $q > 2$. Variable encodings are discussed in more detail later in this section and in Appendix~\ref{chapter:encodings}.
\begin{pyconcode}
>>> from wcnf_matrix import *
>>> set_model_counter(DPMC) # Default
>>> set_model_counter(Cachet)
>>> set_model_counter(TensorOrder)
>>> set_var_rep_type(LogVarRep) # Default
>>> set_var_rep_type(OrderVarRep)
>>> set_var_rep_type(OneHotVarRep)
\end{pyconcode}

\subsection{Language Syntax}
\label{sec:syntax}

The formal language we introduce has two types of expressions: scalars and matrices. Scalars from a field $\mathbb F$ are of type $\mathcal S$. Matrices have a type that contains the size of the matrix and its base size. A base size of $q \in \mathbb Z_{\geq 2}$ is used to represent $q^n \times q^m$ matrices. The intuition of this number is that it represents the dimension of the smallest vector space that all of our matrices act on. For a system of qubits, for example, a base size of $q = 2$ would be used, since elementary operations on qubits are performed using $2 \times 2$ unitary matrices. Any unitary acting on multiple qubits has dimensions that are a power of two.

We write the type of a matrix as $\mathcal M(q, m \to n)$, representing a $q^n \times q^m$ matrix, with $n,m \in \mathbb Z_{\geq0}$. Note that $m$ and $n$ do not indicate the size of the matrix directly, but rather the number of ``input and output subspaces''. Also note the reversal of the order of $n$ and $m$. We use this notation because a $q^n \times q^m$ matrix ($q^n$ rows and $q^m$ columns) is generally interpreted as a linear map $\mathbb F^{q^m} \to \mathbb F^{q^n}$.

More formally, for $n, m \in \mathbb Z_{\geq0}$ and $q \in \mathbb Z_{\geq2}$ the type syntax is
\begin{align}
    T ::= \mathcal S \mid \mathcal M(q, m \to n)
\end{align}
The syntax of expressions $e$ is split up into scalars $s$ and matrices $M$.
\begin{align}
    e &::= s \mid M \\
    s &::= \alpha \mid s_1 \cdot s_2 \mid s_1 + s_2 \mid \mathsf{tr}(M) \mid \mathsf{entry}(i,j,M) \mid \mathsf{apply}(f,s) \\
    M &::= \mathsf{bra}(i,q) \mid \mathsf{ket}(i,q) \mid M_2 \cdot M_1 \mid M_1 + M_2 \mid M_1 \otimes M_2 \\
    &\quad\ \ \mid s \cdot M \mid \mathsf{trans}(M) \mid \mathsf{apply}(f,M)
\end{align}
Here $\alpha \in \mathbb F$ is an arbitrary constant and $f: \mathbb F \to \mathbb F$ is an arbitrary field endomorphism. %Field endomorphisms, by definition, have the properties $f(xy) = f(x)f(y)$ and $f(x+y) = f(x)+f(y)$. The complex conjugate is a notable example of a field endomorphism.

Scalar expressions can be combined using multiplication and addition. Applying a field endomorphism to a scalar also results in another scalar. In addition, taking the trace or getting a specific entry from a matrix gives a scalar.

The most basic matrices are the bra and ket, which are expressions for length-$q$ row and column computational basis vectors, respectively. These vectors have zeros everywhere except at the entry with index $0 \leq i < q$. Matrices can also be multiplied and added. In addition, we have syntax for taking the Kronecker product, matrix-scalar multiplication, and taking the transpose of a matrix. We also add support for applying a field endomorphism $f$ to every entry of the matrix.

\subsection{Type system}
\label{sec:type_system}

\setlength{\jot}{12pt}

The type system associates expressions with types. We say that the expression $e$ has type $T$ if $\vdash e : T$ can be proven using the type rules below.

\subsubsection{Scalar type rules}

The usual rules for scalars apply: Multiplying or adding two scalars results in a scalar, and applying a field endomorphism to a scalar yields a scalar as well. In addition, any element of $\mathbb F$ is a scalar.
\begin{gather}
    \begin{prooftree}
        \hypo{\alpha \in \mathbb F}
        \infer1[(Const)]{\vdash \alpha : \mathcal S}
    \end{prooftree} \quad \begin{prooftree}
        \hypo{\vdash s_1 : \mathcal S}
        \hypo{\vdash s_2 : \mathcal S}
        \infer2[(Mul)]{\vdash s_1 \cdot s_2  : \mathcal S}
    \end{prooftree} \quad \begin{prooftree}
        \hypo{\vdash s_1 : \mathcal S}
        \hypo{\vdash s_2 : \mathcal S}
        \infer2[(Add)]{\vdash s_1 + s_2  : \mathcal S}
    \end{prooftree} \\
    \begin{prooftree}
        \hypo{\vdash s : \mathcal S}
        \hypo{f: \mathbb F \to \mathbb F\text{ is a field endomorphism}}
        \infer2[(Apply)]{\vdash \mathsf{apply}(f,s)  : \mathcal S}
    \end{prooftree}
\end{gather}
Getting an entry from a matrix or calculating the trace of a square matrix also results in a scalar:
\begin{gather}
    \begin{prooftree}
        \hypo{\vdash M : \mathcal M(q, m \to n)}
        \infer1[(Entry)]{\vdash \mathsf{entry}(i,j,M) : \mathcal S}
    \end{prooftree} \qquad \begin{prooftree}
        \hypo{\vdash M : \mathcal M(q, n \to n)}
        \infer1[(Trace)]{\vdash \mathsf{tr}(M) : \mathcal S}
    \end{prooftree}
\end{gather}
where $0 \leq i < q^n$ and $0 \leq j < q^m$.

\subsubsection{Matrix type rules}

The bra and ket form the basis for matrix expressions. These have the types $\mathcal M(q, 1 \to 0)$ and $\mathcal M(0 \to 1)$ respectively, as they can be interpreted as linear maps $\mathbb F^q \to \mathbb F$ and $\mathbb F \to \mathbb F^q$. For $0 \leq i < q$ we have
\begin{align}
    \begin{prooftree}
        \infer0[(Bra)]{\vdash \mathsf{bra}(i,q) : \mathcal M(q, 1 \to 0)}
    \end{prooftree} \quad \begin{prooftree}
        \infer0[(Ket)]{\vdash \mathsf{ket}(i,q) : \mathcal M(q, 0 \to 1)}
    \end{prooftree}
\end{align}
Matrix multiplication is essentially the composition of maps $\mathbb F^{q^m} \to \mathbb F^{q^k}$ and $\mathbb F^{q^k} \to \mathbb \mathbb F^{q^n}$ to one map $\mathbb F^{q^m} \to \mathbb F^{q^n}$. However, do note that the composition is read from right to left. Hence $M_2$ and $M_1$ are swapped. 
\begin{align}
    \begin{prooftree}
        \hypo{\vdash M_1 : \mathcal M(q, m \to k)}
        \hypo{\vdash M_2 : \mathcal M(q, k \to n)}
        \infer2[(MatMul)]{\vdash M_2 \cdot M_1 : \mathcal M(q, m \to n)}
    \end{prooftree}
\end{align}
Adding two matrices of the same type results in a matrix with that type. Multiplying a matrix by a scalar results in a matrix of the same type, and so does applying a field endomorphism entry-wise.
\begin{gather}
    \begin{prooftree}
        \hypo{\vdash M_1 : \mathcal M(q, m \to n)}
        \hypo{\vdash M_2 : \mathcal M(q, m \to n)}
        \infer2[(MatAdd)]{\vdash M_1 + M_2 : \mathcal M(q, m \to n)}
    \end{prooftree} \\ \begin{prooftree}
        \hypo{\vdash s : \mathcal S}
        \hypo{\vdash M : \mathcal M(q, m \to n)}
        \infer2[(ScaMul)]{\vdash s \cdot M : \mathcal M(q, m \to n)}
    \end{prooftree} \\ \begin{prooftree}
        \hypo{\vdash M : \mathcal M(q, m \to n)}
        \hypo{f: \mathbb F \to \mathbb F\text{ is a field endomorphism}}
        \infer2[(MatApply)]{\vdash \mathsf{apply}(f,M) : \mathcal M(q, m \to n)}
    \end{prooftree}
\end{gather}
Taking the transpose of an $q^n \times q^m$ matrix results in a $q^m \times q^n$ matrix:
\begin{gather}
    \begin{prooftree}
        \hypo{\vdash M : \mathcal M(q, m \to n)}
        \infer1[(Trans)]{\vdash \mathsf{trans}(M) : \mathcal M(q, n \to m)}
    \end{prooftree}
\end{gather}
The Kronecker product of a $q^{n_1} \times q^{m_1}$ matrix and a $q^{n_2} \times q^{m_2}$ is a $q^{n_1+n_2} \times q^{m_1+m_2}$ matrix:
\begin{align}
    \begin{prooftree}
        \hypo{\vdash M_1 : \mathcal M(\mathbb F, q, m_1 \to n_1)}
        \hypo{\vdash M_2 : \mathcal M(\mathbb F, q, m_2 \to n_2)}
        \infer2[(Kron)]{\vdash M_1 \otimes M_2 : \mathcal M(q, m_1 + m_2 \to n_1 + n_2)}
    \end{prooftree}
\end{align}

\if{\begin{example}\label{ex:type_system}
    As an example we prove that $(3 \cdot \mathsf{ket}(0,2) \cdot \mathsf{bra}(1,2)) \otimes \mathsf{ket}(0,2)$ has type $\mathcal M(2, 1 \to 2)$, where we have $q = 2$. We use the field of complex numbers $\mathbb F = \mathbb C$.
    First we prove that $3 \cdot \mathsf{ket}(0,2) \cdot \mathsf{bra}(1,2)$ is of type $\mathcal M(2, 1 \to 1)$:
    \begin{align}
        \begin{prooftree}
                    \hypo{3 \in \mathbb C}
                \infer1{\vdash 3 : \mathcal S}
                    \infer0{\vdash \mathsf{bra}(1,2) : \mathcal M(2, 1 \to 0)}
                    \infer0{\vdash \mathsf{ket}(0,2) : \mathcal M(2, 0 \to 1)}
                \infer2{\vdash \mathsf{ket}(0,2) \cdot \mathsf{bra}(1,2) : \mathcal M(2, 1 \to 1)}
            \infer2{\vdash 3 \cdot \mathsf{ket}(0,2) \cdot \mathsf{bra}(1,2) : \mathcal M(2, 1 \to 1)}
        \end{prooftree}
    \end{align}
    Applying the Kronecker product and using the type of $\mathsf{ket}(0,2)$ gives
    \begin{align}
        \begin{prooftree}
                \hypo{\vdash 3 \cdot \mathsf{ket}(0,2) \cdot \mathsf{bra}(1,2) : \mathcal M(2, 1 \to 1)}
                \infer0{\vdash \mathsf{ket}(0,2) : \mathcal M(2, 0 \to 1)}
            \infer2{\vdash (3 \cdot \mathsf{ket}(0,2) \cdot \mathsf{bra}(1,2)) \otimes \mathsf{ket}(0,2) : \mathcal M(2, 1 \to 2)}
        \end{prooftree}
    \end{align}
\end{example}}\fi

\subsection{Value denotational semantics}
\label{sec:value_semantics}

As a baseline for the representation semantics we introduce later, we define the denotational semantics $\denot{\cdot}_v$ as the ``value'' of an expression, i.e., the concrete scalar or matrix that the expression represents. For example, the value of $\mathsf{ket}(1,2) \cdot \mathsf{bra}(0, 2)$ is a $2 \times 2$ matrix with a $1$ in the bottom-left corner and $0$ everywhere else. We define the denotational semantics inductively on the type derivations of the expression, meaning expressions without a type (e.g. $\mathsf{ket}(1,2) \cdot \mathsf{ket}(0,2)$) do not get a value. Note that, due to the way the type system is defined, there is at most one type derivation for each expression.

For scalars, the denotational semantics are defined as follows:
\begin{gather}
    \begin{aligned}
        \denot{\alpha}_v &= \alpha \\
        \denot{s_1 + s_2}_v &= \denot{s_1}_v + \denot{s_2}_v \\
        \denot{s_1 \cdot s_2}_v &= \denot{s_1}_v \cdot \denot{s_2}_v
    \end{aligned}\qquad \begin{aligned}
        \denot{\mathsf{apply}(f,s)}_v &= f(\denot{s}_v) \\
        \denot{\mathsf{entry}(i,j,M)}_v &= (\denot{M}_v)_{ij} \\
        \denot{\mathsf{tr}(M)}_v &= \tr(\denot{M}_v)
    \end{aligned}
\end{gather}
The semantics of bras and kets are the $1 \times q$ and $q \times 1$ matrices with an entry $1$ at the $i$-th position (counting from zero) and $0$ everywhere else. We denote these matrices using Dirac notation with $\bra{i}_q$ and $\ket{i}_q$, where the $q$ is left out if it is clear from context.
\begin{align}
    \denot{\mathsf{bra}(i,q)}_v = \bra{i}_q \qquad \denot{\mathsf{ket}(i,q)}_v = \ket{i}_q
\end{align}
The semantics of other matrix operations are performed simply by evaluating the expression recursively:
\begin{align}
    \begin{aligned}
        \denot{M_2 \cdot M_1}_v &= \denot{M_2}_v \cdot \denot{M_1}_v \\
        \denot{M_1 + M_2}_v &= \denot{M_1}_v + \denot{M_2}_v \\
        \denot{M_1 \otimes M_2}_v &= \denot{M_1}_v \otimes \denot{M_2}_v \\
    \end{aligned}\qquad \begin{aligned}
        \denot{s \cdot M}_v &= \denot{s}_v \cdot \denot{M}_v \\
        \denot{\mathsf{trans}(M)}_v &= \denot{M}_v^T \\
        \denot{\mathsf{apply}(f,M)}_v &= f(\denot{M}_v)
    \end{aligned}
\end{align}
Note that the type rules above prohibit any incompatible matrices from being multiplied or added. The value of $f(M)$ for a field endomorphism $f: \mathbb F \to \mathbb F$ and matrix $M$ is the matrix $M$ with $f$ applied to every entry.

These denotational semantics are in a certain sense valid, because $\denot{e}_v \in \denot{T}_v$ for any expression $e$ of type $T$.

\begin{example}{}{}
    The value semantics of the expression
    \begin{align}
        e = (3 \cdot \mathsf{ket}(0, 2) \cdot \mathsf{bra}(1, 2)) \otimes \mathsf{ket}(0, 2)
    \end{align} 
    can be determined as 
    \[\denot{e}_v = \begin{bmatrix}
            0 & 3 \\
            0 & 0 \\
            0 & 0 \\
            0 & 0
        \end{bmatrix}.\]
    \if{\begin{align}
        \denot{e}_v
        &= \denot{3 \cdot \mathsf{ket}(0,2) \cdot \mathsf{bra}(1,2)}_v \otimes \denot{\mathsf{ket}(0,2)}_v \\
        &= (\denot{3}_v \cdot \denot{\mathsf{ket}(0,2) \cdot \mathsf{bra}(1,2)}_v) \otimes \denot{\mathsf{ket}(0,2)}_v \\
        &= (3 \cdot \denot{\mathsf{ket}(0,2) \cdot \mathsf{bra}(1,2)}_v) \otimes \denot{\mathsf{ket}(0,2)}_v \\
        &= (3 \cdot \denot{\mathsf{ket}(0,2)}_v \cdot \denot{\mathsf{bra}(1,2)}_v) \otimes \denot{\mathsf{ket}(0,2)}_v \\
        &= (3 \cdot \ket{0}_2 \cdot \bra{1}_2) \otimes \ket{0}_2 \\
        &= \left(3 \cdot \begin{bmatrix}
            1 \\ 0
        \end{bmatrix} \cdot \begin{bmatrix}
            0 & 1
        \end{bmatrix}\right) \otimes \begin{bmatrix}
            1 \\ 0
        \end{bmatrix} \\
        &= \begin{bmatrix}
            0 & 3 \\
            0 & 0 \\
            0 & 0 \\
            0 & 0
        \end{bmatrix}
    \end{align}
    The expression is associated with the value we would like it to have.}\fi
\end{example}

\subsection{Representations}
\label{sec:repr}

We introduce representations for both scalars and matrices. Scalars will have a representation that is the solution to a model counting problem $(\phi, W)$ (i.e., $\WMC(\phi, W)$). Meanwhile, matrices are represented with a longer tuple that also includes input and output variables, and a base size $q$: $(\phi, W, x, y, q)$.

\subsubsection{Scalar representation}
\label{sec:scalar_repr}

As mentioned above, scalars are represented by model counting instances $(\phi, W)$. This is formalized in the following definition:

\begin{definition}[Scalar representation]\label{def:scalar_rep}
    A tuple $(\phi, W)$ of a Boolean formula $\phi$ over a set of variables $V$ and a weight function $W: V \times \mathbb B \to \mathbb F$ \textbf{represents} an element $\alpha \in \mathbb F$ if $\WMC(\phi, W) = \alpha$.
    We write in this case
    \begin{align}
        \rep(\phi, W) = \alpha.
    \end{align}
\end{definition}

\begin{example}{}{}
    Suppose we have a Boolean formula $\phi$ and weight function $W: \{x, y\} \times \mathbb B \to \mathbb F$ given by
    \begin{align}
        \SwapAboveDisplaySkip
        \phi &\equiv x \to y \\
        W(x) &= W(\overline x) = 1 \\
        W(y) &= W(\overline y) = 1/2
    \end{align}
    Then $\rep(\phi, W) = \WMC(\phi, W) = 3/2$.
\end{example}

\subsubsection{Matrix representation}
\label{sec:matrix_repr}

The definition of a matrix representation extends on that of a scalar representation by adding input and output variables $x$ and $y$, and a base size $q$. The input and output variables serve as pointers to the different entries in the matrix. The formula $\phi$ and weight function $W$ are used as the basis for a set of model counting problems, one for every entry in the matrix. The same weight function is used at every entry, but the formula $\phi$ is extended with requirements for the input and output variables: $\phi' \equiv \phi \land (x = j) \land (y = i)$. The value at the entry is the weighted model count $\WMC(\phi', W)$.

The input and output of a matrix representation consist of Boolean variables that together represent some number in the range $\{0, \dots, q^n-1\}$. We do this by using strings (of length $n$) of ``$q$-state variable encodings''. These encodings use Boolean variables and formulae to represent numbers from $\{0, \dots, q-1\}$. How these can be implemented is described in Appendix~\ref{chapter:encodings}. However, there are some important properties these encodings need to have. These are outlined below:
\begin{itemize}
    \item For an encoding $v$ we write $\var(v)$ for the set of all Boolean variables $v$ uses.
    \item For an encoding $v$ we can write $v = n$ to indicate $v$ is equal to some number $n$. There should be exactly one assignment $\tau: \var(v) \to \mathbb B$ for which $(v = n)[\tau] = 1$.
    \item Denote $\val_v \equiv \bigvee_{n=0}^{q-1} (v=n)$.
    \item Write $v \leftrightarrow w$ for the equality of two $q$-state encodings $v$ and $w$.
\end{itemize}
We use the same notation for strings of these variable encodings.

\begin{definition}[Matrix representation]\label{def:matrix_rep}
    Suppose we have a tuple $(\phi, W, x, y, q)$ of a Boolean formula $\phi$ over a set of variables $V$, a weight function $W: V \times \mathbb B \to \mathbb F$, two strings of $q$-state variables $x$ and $y$ over $V$, and a base size $q \in \mathbb Z_{\geq 2}$. This tuple \textbf{represents the matrix} $M \in \mathrm{Mat}(\mathbb F, q^{|y|} \times q^{|x|})$ if for all $j \in \{0, \dots, q^{|x|} - 1\}$ and $i \in \{0, \dots, q^{|y|} - 1\}$ we have
    \begin{align}
        \bra{j}M\ket{i} = M_{ij} &= \WMC\left(\phi \land x = j \land y = i, W\right)
        \label{eq:matrix_rep}
    \end{align}
    Every tuple represents exactly one matrix, which justifies the notation
    \begin{align}
        \rep(\phi,W,x,y,q) = M.
    \end{align}
\end{definition}

\if{\begin{example}{}{}
    We give a representation of the matrix
    \begin{align}
        M = \begin{bmatrix}
            0 & 2 \\
            1 & 0
        \end{bmatrix}
    \end{align}
    with base size $q = 2$. This means we can use Boolean variables as our $q$-state input and output variables $x$ and $y$.
    Note that any non-zero entry $(i, j)$ in the matrix has $i \neq j$, which means we use the formula $\phi \equiv x \leftrightarrow \overline y$. We need the model count to be $2$ when $x$ is true and $y$ is false, which is why we set $W(x) = 2$. Any other value of $W$, we set to $1$. When determining the weighted model count for every entry of the matrix, we find that $(\phi,W,x,y,2)$ is a representation of $M$:
    \begin{align}
        \begin{aligned}
            \WMC((x \leftrightarrow \overline y) \land \overline x \land \overline y,W) &= 0 \\
            \WMC((x \leftrightarrow \overline y) \land \overline x \land y,W) &= W(\overline x)W(y) = 1 \\
            \WMC((x \leftrightarrow \overline y) \land x \land \overline y,W) &= W(x)W(\overline y) = 2 \\
            \WMC((x \leftrightarrow \overline y) \land x \land y,W) &= 0
        \end{aligned}
    \end{align}
    The first and last equations are zero since the formulae are unsatisfiable.
\end{example}}\fi

\if{\begin{example}{}{}
    There is no requirement that the input and output variables be different. Such can be the case for diagonal matrices, and the Pauli-$Z$ matrix in particular:
    \begin{align}
        Z = \begin{bmatrix}
            1 & 0 \\
            0 & -1
        \end{bmatrix}
    \end{align}
    This matrix can represented with $(\top, W, x, x, 2)$, using the weight function $W: \{x\} \times \mathbb B \to \mathbb F$ defined by $W(\overline x) = 1$ and $W(x) = -1$. 
\end{example}}\fi

\subsubsection{Representation map}
\label{sec:repr_map}

From Definitions~\ref{def:scalar_rep} and \ref{def:matrix_rep} we introduce the map
\begin{align}
    \rep: \Rep \to \mathbb F \cup \Mat(\mathbb F),    
\end{align}
where $\Rep$ is the set of scalar and matrix representations. This map is neither injective nor surjective. It is not injective because two tuples can represent the same scalar or matrix. Two model counting instances can have the same weighted model count. It is also not surjective because not every matrix shape can be represented. Matrices that can be represented have the shape $q^m \times q^n$, so a $3 \times 2$ matrix cannot be represented, for instance. This is a limitation that arises from the Kronecker product operation on matrix representations, defined in Section~\ref{sec:repr_semantics}.

\subsubsection{Equivalence of representations}
\label{sec:repr_equiv}

Checking if two tuples represent the same value is \#\P-hard in general, since it requires solving weighted model counting instances exactly.  Despite this, it is useful to define the representation denotational semantics as a map to equivalence classes of representations, rather than the representations themselves. For this, we define the equivalence relation $\sim$ on $\Rep$ as follows:
\begin{align}
    r_1 \sim r_2 \quad\Longleftrightarrow\quad \rep(r_1) = \rep(r_2)
\end{align}
This relation induces an injective map $\rep^\#: \Rep/{\sim} \to \mathbb F \cup \Mat(\mathbb F)$. We denote the equivalence class of a representation $r$ under this relation with $[r]$.

\subsubsection{Finding equivalent representations}

We define the representation semantics in the next section as a map from type derivations to classes of representations. This is done inductively. Hence, we can have two classes of representations, and need to combine these in some way to get a new class. We do this by using representatives of the classes. In the rules in Section~\ref{sec:repr_semantics}, we introduce two types of constraints: Constraints on the domains of the representations and a constraint $\WMC(\top,W) \neq 0$.

\paragraph{Domain constraints}

We put some requirements on the domains of these representatives (e.g., the domains need to be disjoint) to combine them. Finding representations that conform to these restrictions is possible by substituting variables in the representations.

We illustrate this with an example: Suppose we define the representation semantics of $\denot{s_1 \cdot s_2}_r$ inductively. Then we already have two representatives $(\phi_1, W_1)$ and $(\phi_2, W_2)$ with $\denot{s_1}_r = [(\phi_1, W_1)]$ and $\denot{s_2}_r = [(\phi_2, W_2)]$. Now we want to combine them by using $\denot{s_1 \cdot s_2}_r = [(\phi_1 \land \phi_2, W_1 \cup W_2)]$. A requirement for this to work is that the domains of $W_1$ and $W_2$ are disjoint. We can accomplish this by substituting variables in the representation $(\phi_2, W_2)$ with fresh ones. This can be implemented efficiently.

\paragraph{Requiring \texorpdfstring{$\WMC(\top,W) \neq 0$}{WMC(T,W) not equal to 0}}

Note that $\WMC(\top, W)$ can be written as
\begin{align}
    \WMC(\top, W) = \prod_{v \in V} (W(\overline v) + W(v))
\end{align}
This quantity can only be $0$ if, for some variable $v \in V$, we have $W(\overline v) + W(v) = 0$. If we have $W(\overline v) = W(v) = 0$, then $\WMC(\phi,W) = 0$ for any Boolean formula $\phi$. Hence we have $\rep(\phi,W) = \rep(\bot,W_0)$, with $W_0: \varnothing \times \mathbb B \to \mathbb F$. Note that $\WMC(\top,W_0) = 1$.

If $W(v) + W(\overline v) \neq 0$, we instead introduce a fresh variable $v'$. We add $v \leftrightarrow v'$ to the formula $\phi$, and introduce the weight function $W': (V \cup \{v'\}) \times \mathbb B \to \mathbb F$ that is the same as $W$ on $V$, except for $W'(\overline v)=2W(\overline v)$, $W'(\overline v') = 1/2$, and $W'(v') = 1$.

Using these two methods, for every model counting instance $(\phi, W)$, we can efficiently find an equivalent instance $(\phi',W')$ with $\WMC(\top,W') \neq 0$. Note that these methods can also be applied to matrix representations. 

\subsection{Representation denotational semantics}
\label{sec:repr_semantics}

In this section, we introduce the representation denotational semantics $\denot{\cdot}_r$ for all scalar and matrix type expressions. Like with the value semantics, the representation semantics are defined on proof trees of type derivations. We refrain from proving the correctness of these operations here, but proofs can be found in Appendix~\ref{chapter:operation_proofs}.

\subsubsection{Scalar representations}

Scalar expressions are mapped to classes of equivalent scalar representations, denoted as $[(\phi, W)]$. Scalar constants form the basis of scalar expressions. These are mapped to the classes of representations of the same value $\alpha$. For the sake of implementation, we give an explicit element of this class:
\begin{operator}{Scalar constant}{}
    \begin{align}
        \denot{\alpha}_r &= [(x, W_\alpha)]
    \end{align}
    
    where $W_\alpha: \{x\} \times \mathbb B \to \mathbb F$ is a constant function $\alpha$.
\end{operator}
For model counting instances with no variables in common, the model counts can be multiplied by combining them as follows:
\begin{operator}{Scalar multiplication}{}
    \begin{align}
        \left.\begin{array}{ll}
            \denot{s_1}_r = [(\phi_1, W_1)] \\
            \denot{s_2}_r = [(\phi_2, W_2)] \\
            \dom(W_1) \cap \dom(W_2) = \varnothing
        \end{array}\right\} \quad \Longrightarrow \quad \denot{s_1 \cdot s_2}_r = [(\phi_1 \land \phi_2, W_1 \cup W_2)]
    \end{align}
    Here $W_1 \cup W_2$ indicates the union of two functions with disjoint domains $f_1: X_1 \to Y_1$ and $f_2: X_2 \to Y_2$ to a function $f_1 \cup f_2: X_1 \cup X_2 \to Y_1 \cup Y_2$.
    
    Category theoretically, this is the map $f_1 \sqcup f_2$ from the coproduct of $X_1$ and $X_2$ to $Y_1 \cup Y_2$. Defining it like this would drop the requirement for the domains to be disjoint. However, there are restrictions in other rules that would make working with this definition difficult.
\end{operator}
Adding scalars is more involved, as there is no property of weighted model counting instances that allows for easily adding results. We add a control variable $c$ that points to either $\phi_1$ or $\phi_2$ as the formula that needs to hold. The other formula does not need to hold, meaning we get a model count that is multiplied by a factor $\WMC(\top,W_i)$. We divide by this quantity by scaling the weights of $c$ appropriately. As outlined before, equivalent representations with $\WMC(\top,W) \neq 0$ can be found efficiently.

\begin{operator}{Scalar addition}{}
    \begin{gather}
        \begin{gathered}
            \left.\begin{array}{ll}
                \denot{s_1}_r = [(\phi_1, W_1)] \\
                \denot{s_2}_r = [(\phi_2, W_2)] \\
                \dom(W_1) \cap \dom(W_2) = \varnothing \\
                c \not\in \dom(W_1) \cup \dom(W_2) \\
                \WMC(\top,W_1) \neq 0, \
                \WMC(\top,W_2) \neq 0
            \end{array}\right\} \quad \Longrightarrow \\[.3cm]
            \denot{s_1 + s_2}_r = [((\overline c \to \phi_1) \land (c \to \phi_2), W_1 \cup W_2 \cup W_c)]
        \end{gathered}
    \end{gather}
    where $W_c: \{c\} \times \mathbb B \to \mathbb F$ with $W(c) = 1 / \WMC(\top,W_1)$ and $W(\overline c) = 1 / \WMC(\top,W_2)$.
\end{operator}

A field endomorphism $f$ has the property that $\WMC(\phi, f \circ W) = f(\WMC(\phi, W))$ for any weight function $W$ and formula $\phi$, which is why it is introduced in the syntax of our language.
\begin{operator}{Field endomorphism on a scalar}{}
    \begin{gather}
        \begin{gathered}
            \denot{s}_r = [(\phi, W)] \quad \Longrightarrow \quad \denot{\mathsf{apply}(f,s)}_r = [(\phi, f \circ W)]
        \end{gathered}
    \end{gather}
\end{operator}

\subsubsection{Matrix representations}

Matrix typed expressions are mapped to equivalence classes of matrix representations $[(\phi, W, x, y, q)]$, as described in Definition~\ref{def:matrix_rep}. Bras and kets form the basis of the matrix-type expressions. These are represented with formulae that fix the values of the input/output variables. The weight function is kept constant $1$. This means that there is exactly one input/output index $i$ for which the model count is $1$, and it is $0$ for all other indices.

\begin{operator}{Bra and ket}{}
    \begin{gather}
        \begin{gathered}
            \denot{\mathsf{bra}(i,q)}_r = [(x = i, W_1, x, -, q)]
        \end{gathered} \\
        \begin{gathered}
            \denot{\mathsf{ket}(i,q)}_r = [(x = i, W_1, -, x, q)]
        \end{gathered}
    \end{gather}
    where $x$ is a $q$-state variable and $W_1: \var(x) \times \mathbb B \to \mathbb F$ is constant $1$ and ``$-$'' denotes an empty string of variables.
\end{operator}

The product of two matrices is essentially the composition of two linear maps. We get the product $M_2 \cdot M_1$ by connecting the output variables of $M_1$ to the input variables of $M_2$. Figure~\ref{fig:matrix_mul} shows this schematically.

It is also necessary to add $\val_y$ for these connected variables $y$ to the formula, since $y$ no longer is an input or output of the resulting matrix $M_2 \cdot M_1$. If this were not added to the formula, it would allow for values of $y$ outside the range it can represent.

\begin{operator}{Matrix multiplication}{}
    \begin{gather}
        \begin{gathered}
            \left.\begin{array}{l}
                \denot{M_1}_r = [(\phi_1, W_1, x, y, q)] \\
                \denot{M_2}_r = [(\phi_2, W_2, y, z, q)] \\
                \dom(W_1) \cap \dom(W_2) = \var(y)
            \end{array}\right\} \quad \Longrightarrow \\[.3cm]
            \denot{M_2 \cdot M_1}_r = [(\phi_1 \land \phi_2 \land \val_y, W_1 \cdot W_2, x, z, q)]
        \end{gathered}
    \end{gather}
    The multiplication of weight functions is using the following rule for multiplying functions $f_1: X_1 \to \mathbb F$ and $f_2: X_2 \to \mathbb F$ to get a function $f_1\cdot f_2: X_1 \cup X_2 \to \mathbb F$.
    \begin{align}
        (f_1 \cdot f_2)(x) = \left\{\begin{array}{ll}
            f_1(x) & \text{if }x \not\in X_2 \\
            f_2(x) & \text{if }x \not\in X_1 \\
            f_1(x) \cdot f_2(x)\quad & \text{if }x \in X_1 \cap X_2
        \end{array}\right.
    \end{align}
\end{operator}

\begin{figure}[ht]
    \centering
    \includegraphics[page=1]{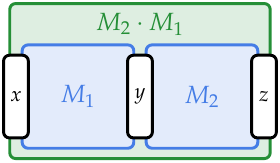}
    \caption{Diagram of multiplication operation on matrix representations.}
    \label{fig:matrix_mul}
\end{figure}

The sum of two matrices is represented in a similar way to scalars, with an extra variable that indicates which matrix should be evaluated. In this case, the input and output variables are also linked with the input and output variables of the respective matrix. Figure~\ref{fig:matrix_add} shows the operation schematically.

\begin{operator}{Matrix addition}{}
    \begin{gather}
        \begin{gathered}
            \left.\begin{array}{ll}
                \denot{M_1}_r = [(\phi_1, W_1, x_1, y_1, q)] \\
                \denot{M_2}_r = [(\phi_2, W_2, x_2, y_2, q)] \\
                \dom(W_1) \cap \dom(W_2) = \varnothing \\
                (\{c\} \cup \var(x) \cup \var(y)) \cap (\dom(W_1) \cup \dom(W_2)) = \varnothing \\
                \WMC(\top,W_1)\neq0, \ \WMC(\top,W_2)\neq0
            \end{array}\right\} \quad \Longrightarrow \\[.3cm]
            \denot{M_1 + M_2}_r = [(\phi, W_1 \cup W_2 \cup W_c \cup W_{xy}, x, y, q)]
        \end{gathered}
    \end{gather}
    with
    \begin{gather}
        \begin{aligned}
            \phi \equiv\ &(\overline c \to ((x \leftrightarrow x_1) \land (y \leftrightarrow y_1) \land \phi_1)) \\
            &\land (c \to ((x \leftrightarrow x_2) \land (y \leftrightarrow y_2) \land \phi_2))
        \end{aligned}
    \end{gather}
    and $W_c: \{c\} \times \mathbb B \to \mathbb F$ and $W_{xy}: (\var(x) \cup \var(y)) \times \mathbb B \to \mathbb F$ defined by $W_c(c) = 1/\WMC(\top,W_1)$, $W_c(\overline c) = 1/\WMC(\top,W_2)$, and $W_{xy}$ constant function $1$.
\end{operator}

\begin{figure}[ht]
    \centering
    \includegraphics[page=3]{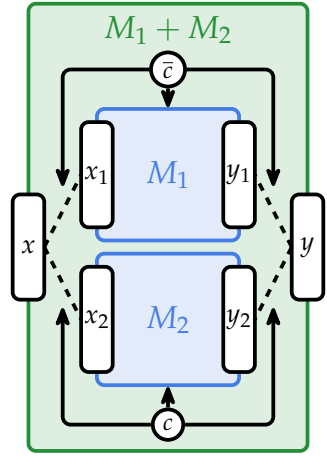}
    \caption{Diagram of addition operation on matrix representations.}
    \label{fig:matrix_add}
\end{figure}

The Kronecker product representation is constructed from the two independent representations of the matrices $M_1$ and $M_2$. The input and output variables of the two matrices are concatenated. Figure~\ref{fig:matrix_kron} shows this schematically.

\begin{operator}{Kronecker product}{}
    \begin{gather}
        \begin{gathered}
            \left.\begin{array}{l}
                \denot{M_1}_r = [(\phi_1, W_1, x_1, y_1, q)] \\
                \denot{M_2}_r = [(\phi_2, W_2, x_2, y_2, q)] \\
                \dom(W_1) \cap \dom(W_2) = \varnothing
            \end{array}\right\} \quad \Longrightarrow \\[.3cm]
            \denot{M_1 \otimes M_2}_r = [(\phi_1 \land \phi_2, W_1 \cup W_2, x_1x_2, y_1y_2, q)]
        \end{gathered}
    \end{gather}
\end{operator}

\begin{figure}[ht]
    \centering
    \includegraphics[page=2]{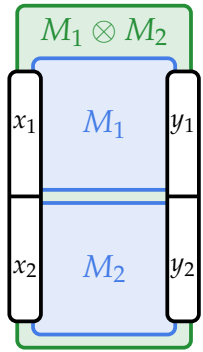}
    \caption{Diagram of Kronecker product operation on matrix representations.}
    \label{fig:matrix_kron}
\end{figure}

The multiplication of a scalar and a matrix can be represented by a conjunction of the two formulae. This operation uses the property that $\WMC(\phi \land \psi,W_1 \cup W_2) = \WMC(\phi,W_1) \cdot \WMC(\psi,W_2)$ for two formulae $\phi$ and $\psi$ for variables in the domains of $W_1$ and $W_2$ respectively (such that the domains do not overlap). Figure~\ref{fig:matrix_scamul} shows the operation schematically.

\begin{operator}{Matrix-scalar multiplication}{}
    \begin{gather}
        \begin{gathered}
            \left.\begin{array}{ll}
                \denot{s}_r = [(\phi_s, W_s)] \\
                \denot{M}_r = [(\phi, W, x, y, q)] \\
                \dom(W_s) \cap \dom(W) = \varnothing
            \end{array}\right\} \quad \Longrightarrow \quad \denot{s \cdot M}_r = [(\phi \land \phi_s, W \cup W_s, x, y, q)]
        \end{gathered}
    \end{gather}
\end{operator}

\begin{figure}[ht]
    \centering
    \includegraphics[page=4]{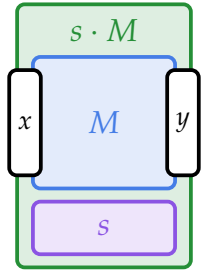}
    \caption{Diagram of multiplying a matrix $M$ with a scalar $s$, using representations for both.}
    \label{fig:matrix_scamul}
\end{figure}

The representation of the transpose of a matrix is the same, but with input and output variables swapped. The effect of this operation can be seen directly in (\ref{eq:matrix_rep}), where swapping the input and output variables replaces $(x = j) \land (y = i)$ with $(x = i) \land (y = j)$. Figure~\ref{fig:matrix_trans} shows the operation schematically.

\begin{operator}{Transpose}{}
    \begin{gather}
        \begin{gathered}
            \denot{M}_r = [(\phi, W, x, y, q)] \quad \Longrightarrow \quad \denot{\mathsf{trans}(M)}_r = [(\phi, W, y, x, q)]
        \end{gathered}
    \end{gather}
\end{operator}

\begin{figure}[ht]
    \centering
    \includegraphics[page=6]{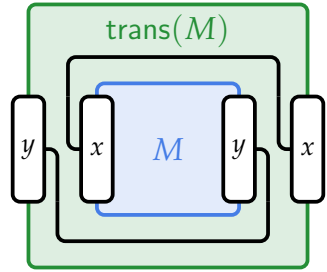}
    \caption{Diagram of the transpose of a matrix representation. Input and output variables are swapped.}
    \label{fig:matrix_trans}
\end{figure}

Applying a field endomorphism to a matrix is similar to applying it to a scalar.

\begin{operator}{Field endomorphism on a matrix}{}
    \begin{gather}
        \begin{gathered}
            \denot{M}_r = [(\phi, W, x, y, q)] \quad \Longrightarrow \quad \denot{\mathsf{apply}(f,M)}_r = [(\phi, f \circ W, x, y, q)]
        \end{gathered}
    \end{gather}
\end{operator}

The trace of a matrix can be calculated by adding a clause to the conjunction requiring the input and output variables to have the same value. In addition, we need this new input/output to be valid. Figure~\ref{fig:matrix_trace} shows the operation schematically.

\begin{operator}{Trace}{}
    \begin{gather}
        \begin{gathered}
            \denot{M}_r = [(\phi, W, x, y, q)] \quad \Longrightarrow \quad \denot{\mathsf{tr}(M)}_r = [(\phi \land (x \leftrightarrow y) \land \val_x, W)]
        \end{gathered}
    \end{gather}
\end{operator}

\begin{figure}[ht]
    \centering
    \includegraphics[page=5]{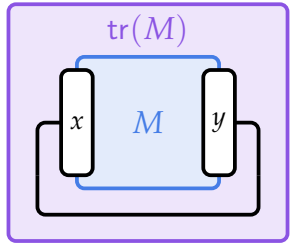}
    \caption{Diagram of taking the trace of a matrix, using a matrix representation to get a scalar representation.}
    \label{fig:matrix_trace}
\end{figure}

An entry in the matrix can be obtained by applying the definition from (\ref{eq:matrix_rep}) directly. Instead of returning the quantity $\WMC(\phi \land (x = j) \land (y = i),W)$, we return the model counting instance.

\begin{operator}{Matrix entry}{}
    \begin{gather}
        \begin{gathered}
            \denot{M}_r = [(\phi, W, x, y, q)] \quad \Longrightarrow \quad \denot{\mathsf{entry}(i,j,M)} = [(\phi \land (x = j) \land (y = i), W)]
        \end{gathered}
    \end{gather}
\end{operator}

\subsection{Correctness}
\label{sec:repr_correctness}

To use these semantics effectively, we need to be able to convert an expression to a representation, then use a model counter to get the actual matrix or scalar that is represented. We want the outcome to be the same as evaluating the expression directly (i.e., using $\denot{\cdot}_v$). What this means is that we need $\rep^\# \circ \denot{\cdot}_r = \denot{\cdot}_v$. We interpret $\denot{\cdot}_r$ and $\denot{\cdot}_v$ as maps
\begin{align}
    \denot{\cdot}_v&: \Exp \to \mathbb F \cup \Mat(\mathbb F) \\
    \denot{\cdot}_r&: \Exp \to \Rep
\end{align}
We define the set $\Exp$ of expressions that have a type.

\begin{theorem}\label{thm:repr_correctness}
    The representation semantics $\denot{\cdot}_r$ are well-defined. Furthermore, for the value semantics $\denot{\cdot}_v$ and the function $\rep^\#$ as defined in Section~\ref{sec:repr_equiv}, we have
    \begin{align}
        \rep^\# \circ \denot{\cdot}_r = \denot{\cdot}_v
    \end{align}
\end{theorem}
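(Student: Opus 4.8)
The plan is to prove both assertions by a single structural induction on the (by construction unique) type derivation of a typed expression $e$, after strengthening the induction hypothesis to: \emph{for any representatives of the sub-expressions satisfying the side conditions of the relevant rule in Section~\ref{sec:repr_semantics}, the tuple produced by that rule satisfies $\rep(\text{result}) = \denot{e}_v$.} Since $\rep^\#$ is injective on $\Rep/{\sim}$, this one statement delivers everything: the equivalence class of the output is pinned down by $\denot{e}_v$, hence it is independent of the choice of representatives (so $\denot{e}_r$ is well-defined as a class), and the identity $\rep^\#\!\circ\denot{\cdot}_r = \denot{\cdot}_v$ holds. Before the induction I would dispose of the \emph{existence} half of well-definedness: by the constructions of Section~\ref{sec:repr_equiv} (renaming Boolean variables, and the fresh-variable trick forcing $\WMC(\top,W)\neq 0$) one can always pass from given representatives to representatives of the same classes that meet all side conditions — disjointness of domains, freshness of the control variable $c$ and of the new output strings, and non-vanishing of $\WMC(\top,\cdot)$; renaming leaves the represented scalar/matrix unchanged and introduces only fresh variables, so no side condition established earlier is disturbed.

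The base cases are direct evaluations of Definition~\ref{def:wmc}. For $\mathsf{bra}(i,q)$: the only model of $x=i$ carries weight $1$ under $W_1$, so the $1\times q$ matrix with entries $\WMC((x=i)\wedge(x=j),W_1)$ has $j$-th entry $1$ if $j=i$ and $0$ otherwise, i.e.\ $\bra{i}_q = \denot{\mathsf{bra}(i,q)}_v$; symmetrically for $\mathsf{ket}$. For the scalar constant, $\WMC(x,W_\alpha)=\alpha=\denot{\alpha}_v$.

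For the inductive step there is one case per operation, and each reduces to a short identity about $\WMC$ combined with the induction hypothesis. The reusable facts are: (i) for disjoint domains $\WMC(\phi_1\wedge\phi_2,W_1\cup W_2)=\WMC(\phi_1,W_1)\cdot\WMC(\phi_2,W_2)$, handling scalar multiplication, matrix--scalar multiplication, and (after decomposing the index constraint on the concatenated strings) the Kronecker product; (ii) a field endomorphism $f$ commutes with the finite sums and products in (\ref{eq:wmc}), so $\WMC(\phi,f\circ W)=f(\WMC(\phi,W))$, handling both $\mathsf{apply}$ rules; (iii) conditioning on a fresh Boolean variable $c$ whose two weights are the reciprocals $1/\WMC(\top,W_1)$ and $1/\WMC(\top,W_2)$ splits $\WMC$ of the guarded formula into exactly the two summands, the reciprocal weights cancelling the free-variable mass $\WMC(\top,W_i)$ of the inactive side — handling scalar addition, and, after also linking the fresh input/output strings via $W_{xy}\equiv 1$, matrix addition; (iv) $\val_x\wedge(x\leftrightarrow y)$ is the disjoint disjunction $\bigvee_{k}((x=k)\wedge(y=k))$, so summing (\ref{eq:matrix_rep}) along the diagonal gives $\tr$. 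The transpose and matrix-entry rules are read off (\ref{eq:matrix_rep}) immediately (swap of $x,y$; specialisation of the constraints). Matrix multiplication unfolds (\ref{eq:matrix_rep}) for $M_2\cdot M_1$, uses $\val_y$ to make the glued string $y$ range over $\{0,\dots,q^{|y|}-1\}$, and uses the definition of $W_1\cdot W_2$ (which multiplies weights exactly on the shared $\var(y)$) to factor the $k$-th summand as $(M_1)_{kj}(M_2)_{ik}$, recovering $(M_2M_1)_{ij}$.

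I expect the main obstacle to be the matrix-multiplication case: one must verify that the weights carried by the shared string $y$ are counted with precisely the right multiplicity — this is exactly why the rule uses the product $W_1\cdot W_2$ instead of a disjoint union and why it demands $\dom(W_1)\cap\dom(W_2)=\var(y)$ as an equality — and one must fix the radix/ordering conventions for $q$-state strings so that the index identities for the matrix product, the Kronecker product, and the trace match the operations recalled in Section~\ref{sec:dirac}. A secondary subtlety, implicit in Section~\ref{sec:repr_equiv}, is checking that the equality constraint $\dom(W_1)\cap\dom(W_2)=\var(y)$ (identifying $M_1$'s output string with $M_2$'s input string) can always be arranged by relabeling. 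Once these bookkeeping points are settled, every case is a routine computation.
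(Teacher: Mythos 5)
Your proposal is correct and follows essentially the same route as the paper: induction on the (unique) type derivation, with one case per rule of Section~\ref{sec:repr_semantics}, each reduced to the same four WMC identities (factorization over disjoint domains, commutation with field endomorphisms, the reciprocal-weighted control variable for addition, and the disjoint-disjunction decomposition of $\val_x$ for trace and matrix product), together with the variable-renaming and $\WMC(\top,W)\neq 0$ normalizations to secure representatives meeting the side conditions. The paper merely packages the cases as separate lemmas in Appendix~\ref{chapter:operation_proofs} rather than as branches of a single strengthened induction, and your identified ``main obstacle'' (the role of $W_1\cdot W_2$ and $\val_y$ in the multiplication case) is handled there exactly as you describe.
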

\begin{proof}
    See Appendix~\ref{chapter:operation_proofs}, which uses induction on the proof trees of the expression types.
\end{proof}

\subsection{Discussion}

Although most of the rules from Section~\ref{sec:repr_semantics} can be implemented efficiently, yielding a compact CNF formula, the addition rules introduce an extra variable that distributes over the already existing formulae when keeping the formulae in CNF. When doing many additions, this can cause the size of the formula to become quadratic in the number of operations.

An alternative representation $(\phi,W,x,y,q,c)$ could be introduced, which adds a ``conditional variable'' $c$. We can let the representation with $\phi \land c$ be of the original matrix, and with $\phi \land \overline c$ of the matrix with the same shape, but filled with ones. This would make the addition operation result in a more compact formula, namely
\begin{align}
    (c \to (c_1 \lor c_2)) \land (\overline c \to (\overline c_1 \land \overline c_2)) \land (\overline c_1 \lor \overline c_2) \land \phi_1 \land \phi_2
\end{align}
This requires only a constant amount of extra space per addition. However, it is not certain that the performance of the model counters would increase when using this definition, since the formula $(c \to (c_1 \lor c_2)) \land (\overline c \to (\overline c_1 \land \overline c_2))$ cannot be simplified easily.

In our method, the model counter is only called at the end of the process, once one big model counting instance is constructed. It can be beneficial to evaluate scalars and small matrices while constructing the representations. This could reduce the total size of the problems the model counter has to solve.

% ising
\section{Application: Ising Model}
\label{sec:ising}

The Ising model is a fundamental model in statistical mechanics, frequently used to study interacting systems such as ferromagnets~\cite{Brush1967}. Of particular interest is its partition function $Z_{\beta,I}$, which encodes the distribution of energy across different configurations and underlies the Boltzmann distribution.

We present two distinct methods for expressing the partition function as a weighted model counting (WMC) problem: (1) the approach of Nagy et al.~\cite{Nagy2024}, and (2) a formulation via matrix representations to which our general WMC framework in Section~\ref{sec:matrix} can be applied. We demonstrate that both approaches yield the same Boolean formula and weight function, and hence the same WMC instance.

\subsection{Definition}
\label{sec:ising_model_def}

The Ising model is defined on a finite set of sites $\Lambda$ with edge weights $J_{ij} \in \mathbb{R}$ denoting pairwise interactions, and external fields $h_i \in \mathbb{R}$. A configuration is an assignment $\sigma \colon \Lambda \to \{-1,1\}$, and its associated energy is given by the Hamiltonian:
\begin{align}
  H_I(\sigma) = -\sum_{i,j \in \Lambda} J_{ij} \sigma_i \sigma_j - \sum_{i \in \Lambda} h_i \sigma_i
\end{align}

The partition function is then defined as:
\begin{align}
  Z_{\beta,I} = \sum_{\sigma \in \{-1,1\}^{|\Lambda|}} e^{-\beta H_I(\sigma)}
\end{align}

At high temperature ($\beta \to 0$), $Z_{\beta,I}$ approximates the uniform distribution over configurations. At low temperature ($\beta \to \infty$), it concentrates on ground states minimizing $H_I(\sigma)$.

\subsection{Conversion to WMC}
\label{sec:ising_to_wmc}

Following Nagy et al.~\cite{Nagy2024}, we associate Boolean variables $x_i$ for each site and $x_{ij}$ for each interaction. The variable assignment encodes the configuration via: $x_i = 1 \Leftrightarrow \sigma_i = 1$. To enforce the interaction structure, we use the Boolean formula:
\begin{align}
  \phi = \bigwedge_{i,j \in \Lambda} (x_{ij} \leftrightarrow (x_i \leftrightarrow x_j))
\end{align}

The partition function factors as:
\begin{align}
  Z_{\beta,I} = \sum_{\sigma} \prod_{i,j} e^{\beta J_{ij} \sigma_i \sigma_j} \prod_i e^{\beta h_i \sigma_i}
\end{align}

We thus define a weight function $W$ by
\begin{align}
  \begin{split}
    W(\bar{x}_{ij}) &= e^{-\beta J_{ij}},\quad W(x_{ij}) = e^{\beta J_{ij}} \\
    W(\bar{x}_i) &= e^{-\beta h_i},\quad W(x_i) = e^{\beta h_i}
  \end{split}
\end{align}
so that $\WMC(\phi, W) = Z_{\beta,I}$.

\subsection{Matrix Representation of the Ising Model}
\label{sec:ising_matrix}

We now give a matrix formulation of the same model. Spins correspond to tensor factors in a $2^{|\Lambda|} \times 2^{|\Lambda|}$ dimensional Hilbert space. Define the diagonal Hamiltonian:
\begin{align}
  H_I = -\sum_{i,j \in \Lambda} J_{ij} Z_i Z_j - \sum_{i \in \Lambda} h_i Z_i
\end{align}
where $Z_i$ is the Pauli-$Z$ operator on qubit $i$. The partition function is obtained as:
\begin{align}
  Z_{\beta,I} = \tr(e^{-\beta H_I})
\end{align}

Since all terms in $H_I$ commute (being diagonal), we use:
\begin{align}
  e^{-\beta H_I} = \prod_{i,j} e^{\beta J_{ij} Z_i Z_j} \cdot \prod_i e^{\beta h_i Z_i}
\end{align}

Using the representation semantics from Section~\ref{sec:matrix}, we convert each matrix $e^{\theta Z}$ and $e^{\theta (Z \otimes Z)}$ into a Boolean formula with weights.

\paragraph{Encoding $e^{\theta Z}$} Define variable $x$, formula $\top$, and weight function $W(x) = e^{-\theta}$, $W(\bar{x}) = e^{\theta}$. Then:
\begin{align}
  e^{\theta Z} = \rep(\top, W, x, x, 2)
\end{align}

\paragraph{Encoding $e^{\theta (Z \otimes Z)}$} Introduce auxiliary variable $z$ and use formula $z \leftrightarrow (x \leftrightarrow y)$. Let $W(z) = e^{\theta}, W(\bar{z}) = e^{-\theta}$, and $W(x) = W(y) = 1$. Then:
\begin{align}\label{eq:tensorencoding}
  e^{\theta(Z \otimes Z)} = \rep(z \leftrightarrow (x \leftrightarrow y), W, xy, xy, 2)
\end{align}
To see that encoding~\eqref{eq:tensorencoding} is correct, note that 
$z \leftrightarrow (x \leftrightarrow y)$ is satisfied precisely 
when $z$ takes the value forced by the parity of $x$ and $y$. 
The weights on $z$ and $\bar{z}$ then contribute $e^{\theta}$ 
or $e^{-\theta}$ according to whether $x = y$ or $x \neq y$, 
reproducing the matrix elements of $e^{\theta(Z \otimes Z)}$.

\subsection{Comparison with Direct Encoding}
\label{sec:ising_reproduced}

By multiplying all representations as per Section~\ref{sec:repr_semantics}, we recover the same formula and weight function as the direct method. The variables $x_i$ and $x_{ij}$ from Nagy et al. align respectively with the encoding variables from $Z_i$ and $Z_i Z_j$ representations.

We compare the two approaches on square lattice and random graph Ising models using three WMC solvers. As shown in Figures~\ref{fig:ising_square_lattice_matrix} and~\ref{fig:ising_random_graph_matrix}, the runtimes are comparable.

Importantly, the matrix method generalizes to non-diagonal Hamiltonians such as in the quantum Ising model and Potts model, which are less amenable to direct WMC translation. Thus, our representation framework provides a reusable interface across classical and quantum models.
\begin{figure}[ht]
    \begin{minipage}[t]{0.48\textwidth}
        \centering
        \includegraphics[scale=1.35,page=3]{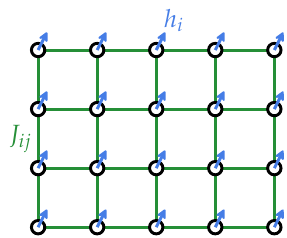}
        \caption{Runtime of calculating the partition function of an $L \times L$ square lattice Ising model with interaction strengths and external field strengths from the standard normal distribution, averaged over five runs. The problem is converted to a matrix representation from Chapter~\ref{sec:matrix}, after which the trace is calculated using a model counter. Comparison between the model counters Cachet, DPMC, and TensorOrder. Direct method from Nagy et al. in dotted lines~\cite{Nagy2024}.}
        \label{fig:ising_square_lattice_matrix}
    \end{minipage}\quad
    \begin{minipage}[t]{0.48\textwidth}
        \centering
        \includegraphics[scale=1.35,page=5]{src/tikz/ising_model.pdf}
        \caption{Runtime of calculating the partition function of a random graph Ising model for different numbers of spins (nodes), averaged over five runs. The expected degree of each node is three. The interaction strengths are uniformly chosen from $[-1, 1]$, and there is no external field. The problem is converted to a matrix representation from Chapter~\ref{sec:matrix}, after which the trace is calculated using a model counter. Comparison between the model counters Cachet, DPMC, and TensorOrder. Direct method from Nagy et al. in dotted lines~\cite{Nagy2024}.}
        \label{fig:ising_random_graph_matrix}
    \end{minipage}
\end{figure}

% quantumising
\section{Transverse-field Ising Model}
\label{sec:quantumising}

We next consider the quantum extension of the Ising model, where the Hamiltonian may contain non-diagonal components. In quantum mechanics, the Hamiltonian governs the time evolution of the state $\ket{\psi(t)}$ via the Schrödinger equation:
\begin{align}
    H\ket{\psi(t)} = i\hbar \frac{d}{dt}\ket{\psi(t)}.
\end{align}
The quantum partition function is defined analogously to the classical case:
\begin{align}
    Z_\beta = \mathrm{Tr}\left(e^{-\beta H}\right) = \mathrm{Tr}\left(\sum_{k=0}^\infty \frac{(-\beta H)^k}{k!}\right).
\end{align}
In the quantum case, the difficulty arises from the non-commutative nature of the Hamiltonian components. When $A$ and $B$ do not commute, we no longer have $e^{A+B} = e^Ae^B$, complicating the evaluation.

This computation plays a central role in understanding phase transitions and calculating the Helmholtz free energy:
\begin{align}
    F = -\frac{1}{\beta} \log Z_\beta.
\end{align}

\subsection{Model Definition}
\label{sec:tfim-definition}

We adopt the transverse-field Ising model introduced by Suzuki~\cite{Suzuki1976}, where interaction strengths vary pairwise but external field strengths are uniform across sites. This is a genuine quantum model and extends the classical Ising model from Section~\ref{sec:ising}, though it is not the most general quantum spin model.

\begin{definition}[Transverse-field Ising model]\label{sec:transverse_field_ising_def}
    A transverse-field (quantum) Ising model is a tuple $Q = (\Lambda, J, \mu_z, \mu_x)$ with:
    \begin{itemize}
        \item $\Lambda$: a finite set of sites;
        \item $J: \Lambda^2 \to \mathbb{R}$: symmetric coupling strengths;
        \item $\mu_z, \mu_x \in \mathbb{R}$: global field strengths.
    \end{itemize}
    The Hamiltonian is given by
    \begin{align}
        H_Q = -\sum_{i,j \in \Lambda} J_{ij} Z_i Z_j - \mu_z \sum_{i \in \Lambda} Z_i - \mu_x \sum_{i \in \Lambda} X_i,
    \end{align}
    where $Z_i$ and $X_i$ are Pauli matrices applied at site $i$.  Furthermore, we distinguish terms that contain each kind of Pauli matrices: \begin{align}
    H_{Q,Z} &= -\sum_{i,j \in \Lambda} J_{ij} Z_iZ_j - \mu_z \sum_{i \in \Lambda} Z_i \\
    H_{Q,X} &= -\mu_x \sum_{i \in \Lambda} X_i
\end{align}
    The partition function at inverse temperature $\beta > 0$ is
    \begin{align}
        Z_{\beta,Q} = \mathrm{Tr}(e^{-\beta H_Q}).
    \end{align}
\end{definition}

\paragraph{Example (Two-spin system)}
Let $\Lambda = \{1,2\}$ with $J_{12}=1$, $\mu_z=0$, $\mu_x=1$. Then the Hamiltonian matrix is
\begin{align}
    H_Q = -Z_1Z_2 - X_1 - X_2 =
    \begin{bmatrix}
        1 & 1 & 1 & 0 \\
        1 & -1 & 0 & 1 \\
        1 & 0 & -1 & 1 \\
        0 & 1 & 1 & 1
    \end{bmatrix}.
\end{align}
Evaluating the partition function at $\beta=1$ yields
\begin{align}
    Z_{\beta,Q} &\approx \mathrm{Tr}
    \begin{bmatrix}
        1.52 & -2.07 & -2.07 & 1.15 \\
        -2.07 & 4.76 & 2.04 & -2.07 \\
        -2.07 & 2.04 & 4.76 & -2.07 \\
        1.15 & -2.07 & -2.07 & 1.52
    \end{bmatrix} \approx 12.55.
\end{align}

\subsection{Trotterization and Encoding}

We use Trotterization to approximate the exponential of a non-commuting sum:
\begin{align}
    e^{A+B} = \lim_{k \to \infty} \left(e^{A/k} e^{B/k}\right)^k.
\end{align}
With $H_Q = H_{Q,Z}+H_{Q,X}$, we can approximate the partition function as
\begin{align}
    Z_{\beta,Q} \approx \tr\left(\left(e^{-\beta\frac{H_{Q,Z}}{k}} \cdot e^{-\beta\frac{H_{Q,X}}{k}}\right)^k\right),
\end{align} with increasing degree of accuracy as $k$ increases.
Another useful property of the matrix exponential is that, for an invertible matrix $P$ and any matrix $A$, we have $e^{P^{-1}AP} = P^{-1}e^AP$. In our problem, the Pauli-$X$ matrix can be diagonalized as $X = HZH$, where $H$ is the involutory Hadamard operator
\begin{align}
    H = \frac1{\sqrt2}\begin{bmatrix}
        1 & 1 \\
        1 & -1
    \end{bmatrix}
\end{align}
Define the sum of Pauli-$Z$ matrices
\begin{align}
    H_{Q,X}' &= -\mu_x\sum_{i \in \Lambda} Z_i
\end{align}
This is the same as $H_{Q,X}$, but with all Pauli-$X$ matrices replaced with a Pauli-$Z$. Therefore, we have $H_{Q,X} = H^{\otimes |\Lambda|}H_{Q,X'}H^{\otimes |\Lambda|}$, which gives
\begin{align}
    Z_{\beta,Q} &\approx \tr\left(\left(e^{-\beta\frac{H_{Q,Z}}{k}} \cdot e^{-\beta\frac{H^{\otimes|\Lambda|}H_{Q,X}'H^{\otimes|\Lambda|}}{k}}\right)^k\right) \\
    &= \tr\left(\left(e^{-\beta\frac{H_{Q,Z}}{k}}H^{\otimes|\Lambda|}e^{-\beta\frac{H_{Q,X}'}{k}}H^{\otimes|\Lambda|}\right)^k\right)
\end{align}
Note that we are left with only Hadamard matrices and exponentials of Pauli-$Z$ matrices.
Each exponential is diagonal and can be encoded as a weighted model counting instance as in Section~\ref{sec:ising}. Hadamard gates are encoded using~\cite{Mei2024}:
\begin{align}
    (r \leftrightarrow (x \land y), W, x, y, 2),
\end{align}
with $W$ assigning $-1$ to $r$ and $1$ elsewhere. Kronecker products and matrix compositions follow from Section~\ref{sec:matrix}.

\subsection{Experimental Results}

Figure~\ref{fig:quantum_ising} benchmarks the DPMC model counter on TFIM instances under Trotterization. Performance degrades as graph density increases. This is partly due to lack of decomposition: sparse graphs yield disconnected components, enabling logical formulae to decompose into conjunctions with disjoint variables.

In contrast, the \texttt{expm} routine from SciPy~\cite{SciPyExpm}, based on scaling and squaring~\cite{Al-Mohy2010}, avoids Trotterization. Figure~\ref{fig:quantum_ising_scipy} illustrates runtime scaling with the number of qubits.

\begin{figure}[ht]
    \begin{subfigure}[t]{0.48\textwidth}
        \centering
        \includegraphics[scale=1.35,page=2]{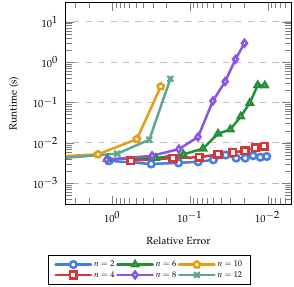}
        \caption{Average degree $1$}
        \label{fig:quantum_ising_1}
    \end{subfigure}\quad
    \begin{subfigure}[t]{0.48\textwidth}
        \centering
        \includegraphics[scale=1.35,page=1]{src/tikz/quantum_ising_model.pdf}
        \caption{Average degree $3$}
        \label{fig:quantum_ising_3}
    \end{subfigure}
    \caption{Runtime vs. relative error for TFIM partition function estimation using DPMC and Trotterization. Random graphs with Gaussian couplings. Only DPMC runtime is reported.}
    \label{fig:quantum_ising}
\end{figure}

\begin{figure}[ht]
    \centering
    \includegraphics[scale=1.35,page=4]{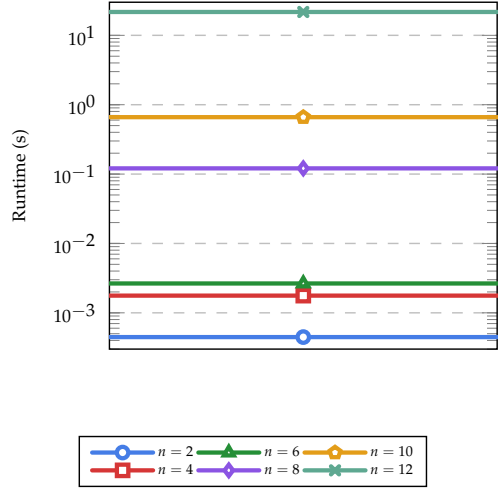}
    \caption{Runtime of SciPy \texttt{expm} method for TFIM partition function. Uses same y-axis scale as Figure~\ref{fig:quantum_ising}.}
    \label{fig:quantum_ising_scipy}
\end{figure}

% potts
\section{Potts Model}
\label{sec:potts}

We generalize the Ising model by allowing each site to take on $q \ge 2$ states. In the \emph{generalized Potts model}, interactions depend on each pair of site states, making it applicable to tasks such as image segmentation~\cite{Krahenbuhl2011,Dann2012,Portela2013} and protein modeling~\cite{Levy2016,Li2023}. We describe both the generalized and standard variants, and how each maps to weighted model counting (WMC).

\subsection{Definition}
\label{sec:potts_model_def}

Let $\Lambda$ be the set of sites and $q$ the number of possible states per site. A configuration is a map $s \colon \Lambda \to \{0,\dots,q-1\}$.

\begin{definition}
A \emph{generalized Potts model} is a tuple $P = (\Lambda, J, h, q)$ with:
\begin{itemize}
    \item $J: \Lambda^2 \times \{0,\dots,q-1\}^2 \to \mathbb{R}$: interaction strength between sites $i,j$ in states $s_i,s_j$;
    \item $h: \Lambda \times \{0,\dots,q-1\} \to \mathbb{R}$: external field on site $i$ in state $s_i$.
\end{itemize}
The Hamiltonian is
\begin{align}
    H_P(s) = -\sum_{i,j} J_{ij}(s_i, s_j) - \sum_i h_i(s_i),
\end{align}
and the partition function is
\begin{align}
    Z_{\beta,P} = \sum_{s} e^{-\beta H_P(s)}.
\end{align}
\end{definition}

\paragraph{Standard Potts Model} This special case has no external field and uses $J$ only when $s_i = s_j$ for neighbors $(i,j) \in E$ where $E$ is the set of edges:
\begin{align}
    H_P(s) = -J \sum_{(i,j) \in E} \mathbb{1}\{s_i = s_j\}.
\end{align}

\begin{example}[Three-site Standard Potts Model]
Let $q = 3$, sites $A$, $B$, $C$, edges $A$--$B$, $B$--$C$, and $J = 4$. Then
\begin{align}
    H_P(s) = -4 \cdot \mathbb{1}\{s_A = s_B\} - 4 \cdot \mathbb{1}\{s_B = s_C\},
\end{align}
with $Z_{\beta,P} = \sum_{s} e^{-H_P(s)}$. Evaluating this sum for all $3^3 = 27$ configurations yields $Z_{\beta,P} \approx 9610.05$ at $\beta = 1$.
\end{example}

\subsection{Encoding Standard Potts Model as WMC}

The Hamiltonian can be written using diagonal matrices:
\begin{align}
    H_P = -J \sum_{(i,j) \in E} M_{ij},
\end{align}
where $M = \sum_{k=0}^{q-1} \ket{k,k}\bra{k,k}$, i.e., diagonal entries equal $1$ iff $s_i = s_j$. The partition function becomes:
\begin{align}
    Z_{\beta,P} = \mathrm{tr}\left(\prod_{(i,j) \in E} e^{\beta J M_{ij}}\right).
\end{align}
Each $e^{\beta JM_{ij}}$ has diagonal entries $e^{\beta J}$ if $s_i = s_j$ and 1 otherwise. We encode this using:
\begin{align}
    (z \leftrightarrow (x \leftrightarrow y), W, x, y), \quad W(z) = e^{\beta J}.
\end{align}

\subsubsection{Empirical Comparison of Solvers}

Figure~\ref{fig:potts_solvers} benchmarks Cachet, DPMC, and TensorOrder on random graph Potts models. At $q=3$, DPMC outperforms TensorOrder. At $q=4$, TensorOrder scales better on larger instances, possibly due to encoding differences.  This also lines up with results from Nagy et al. \cite{Nagy2024} where TensorOrder works better on larger instances too.  

\subsubsection{Encoding Comparison}

Figure~\ref{fig:potts_q_varrep} compares logarithmic (uses $\lceil\log_2 q\rceil$ bits) and order encodings of the input/output vars of the matrices (uses $q-1$ bits). For small $q$, order encoding is competitive. For large $q$, logarithmic encoding is more compact and efficient.

\subsection{Encoding Generalized Potts Model as WMC}

We encode each term of $H_P$ using matrices $M(s_i,s_j)$ and $N(s_i)$ with only one nonzero diagonal entry:
\begin{align}
    H_P &= -\sum_{i,j} \sum_{s_i,s_j} M(s_i,s_j)_{ij} - \sum_i \sum_{s_i} N(s_i)_i, \\
    Z_{\beta,P} &= \mathrm{Tr}\left(\prod_{i,j} \prod_{s_i,s_j} e^{\beta M(s_i,s_j)_{ij}} \prod_i \prod_{s_i} e^{\beta N(s_i)_i}\right).
\end{align}

We encode the terms using:
\begin{align}
    (z \leftrightarrow (x = s_i \land y = s_j), W), \quad W(z) = e^{\beta J_{ij}(s_i,s_j)}, \\
    (z \leftrightarrow (x = s_i), W), \quad W(z) = e^{\beta h_i(s_i)}.
\end{align}

\paragraph{Discussion} If only a few $J_{ij}(s_i,s_j)$ are nonzero, this encoding is tractable. In general, the standard Potts encoding is more compact, requiring fewer clauses than the generalized form, and can be more efficiently compiled.

\begin{figure}[ht]
    \centering
    \begin{subfigure}[t]{0.48\textwidth}
        \includegraphics[scale=1.35,page=2]{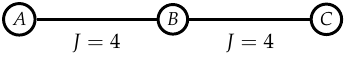}
        \caption{$q = 3$}
    \end{subfigure}
    \begin{subfigure}[t]{0.48\textwidth}
        \includegraphics[scale=1.35,page=3]{src/tikz/potts_model.pdf}
        \caption{$q = 4$}
    \end{subfigure}
    \caption{Runtime comparison of model counters on standard Potts models. Logarithmic encoding, 5 runs averaged, edge degree $4$.}
    \label{fig:potts_solvers}
\end{figure}

\begin{figure}[ht]
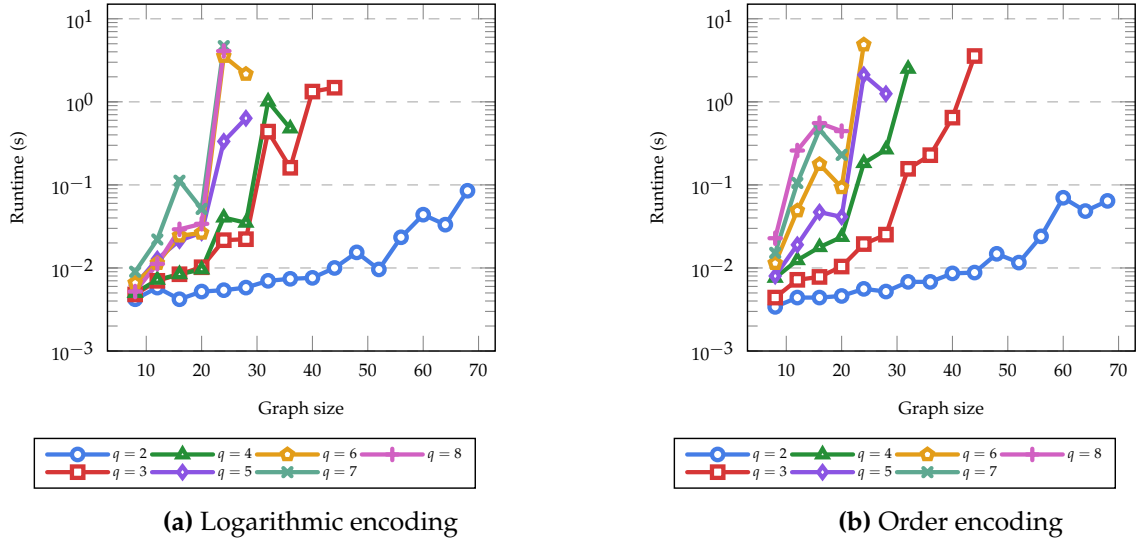

    \centering
    \begin{subfigure}[t]{0.48\textwidth}
        \includegraphics[scale=1.35,page=4]{src/tikz/potts_model.pdf}
        \caption{Logarithmic encoding}
    \end{subfigure}
    \begin{subfigure}[t]{0.48\textwidth}
        \includegraphics[scale=1.35,page=5]{src/tikz/potts_model.pdf}
        \caption{Order encoding}
    \end{subfigure}
    \caption{Encoding comparison on Potts partition computation using DPMC. Log encoding outperforms at higher $q$.}
    \label{fig:potts_q_varrep}
\end{figure}

% related works

\section{Related work}

\subsection{D-Hammer}

Xu et al.~\cite{Xu2025} introduced D-Hammer: A tool that can check the equivalence of quantum expressions using labeled Dirac notation. For this, they introduce rewriting rules to normalize terms. The type system and syntax they use are similar to those we introduced in Section~\ref{sec:matrix}.  This work is itself based on their earlier work on DiracDec~\cite{Xu2025b}, which uses plain Dirac notation. Their implementation of D-Hammer can be considered a generalization of $ZX$-calculus~\cite{Coecke2011}, extending it with various operations on Hilbert spaces. This comes at a performance cost on problems that can be encoded using both D-Hammer and the $ZX$-calculus.

Our work uses a subset of the operators that can be expressed using D-Hammer, since the aim of our work is to evaluate expressions written in Dirac notation, rather than checking more general equivalences. We specifically limited ourselves to common matrix operations that can be expressed relatively easily using Boolean formulae. For this reason we introduced a new (more limited) language than D-Hammer.

At the same time, our work allows for the more general notion of qudits, rather than qubits. Furthermore, our theoretical framework does not support labeling matrices. However, our implementation does have this functionality.

\subsection{Category theory}
\label{sec:cat_theory}

In general, monoidal categories~\cite{Selinger2011} can be used to provide a syntax for the sequential and parallel composition of matrices (i.e., multiplication and Kronecker product). This allows for the use of the rich field of category theory in the language definition of Chapter~\ref{sec:matrix}. Villoria et al.~\cite{villoria2024enriching} showed that, using enrichment, these operations can be extended to include other algebraic operations like convex combinations. This can be useful in simulating noise quantum circuits, for example. The language defined in Chapter~\ref{sec:matrix} can be seen as explicit syntax for enriched monoidal categories, although there are two important features of our syntax that allow the translation to WMC and make the language easy to use. First of all, we no not work with abstract finite dimensional vector spaces, but rather with the concrete choice of Cartesian spaces $\mathbb{F}^n$. On the side of monoidal categories, this amounts to working with PROPs, which are one-sorted monoidal categories~\cite{Lack04:ComposingPROPs,BCR18:PropsNetworkTheorya} and provide the foundation for a syntax where explicit coherence morphisms for associativity etc. of the tensor are not needed. Second, we add to the syntax matrix-specific operations, like applying field morphisms, and algebraic operations, like addition, to provide a usable syntax and an efficient translation to WMC instances.

\subsection{Quantum circuit simulation using WMC}

Mei et al.~\cite{Mei2024b, Mei2024c} showed that model counting can be used in quantum computing for simulating circuits and equivalence checking. They showed that quantum states can be encoded using variables~\cite{Mei2024}. Gates can then be encoded by expressing a relationship between input and output variables, which are the states before and after the gate is applied. More recently, Zak et al.~\cite{Zak2025} extended this work by showing that model counting techniques can be used for the synthesis of quantum circuits. We also build on this work by generalizing the expression of quantum operators using weighted model counting, and consequently applying it to practical applications.

\subsection{Ising model partition function}

Nagy et al.~\cite{Nagy2024} showed how the Ising model partition function problem can be converted to a WMC instance. They proved that existing model counters like TensorOrder~\cite{Dudek2021} show competitive performance compared to existing techniques. In addition, they relate the problem of calculating the partition function to $\#\mathrm{CSP}$, using powerful theoretical tools to gain insights into where the hardness of the problem comes from. We reproduced some of the experimental results and performed experiments on the transverse-field Ising and Potts models. We did this while using the matrix representations instead of converting these problems directly to WMC.

\subsection{Hamiltonian simulation using decision diagrams}

Sander et al.~\cite{Sander2023} showed how Hamiltonian simulation can be performed using decision diagrams. At every node in the decision diagram, four outgoing edges represent the four different quadrants of a square $2^n \times 2^n$ matrix. This is then applied recursively to these quadrants. The edges contain weights, such that the value at a specific entry in the matrix can be found by traversing the decision diagram from the root to a leaf. Their technique of splitting up the matrix into quadrants is similar to our technique of using strings of input and output variables. We use a weighted model counting representation, where Sander et al. use decision diagrams.

\subsection{Model counters}

Model counters can employ several different techniques to calculate weighted model counts efficiently. The three solvers we used in this work are: (1) Cachet~\cite{Sang2005}, an older tool that uses clause learning, (2) DPMC~\cite{Dudek2020}, which employs a dynamic programming technique, and (3) TensorOrder~\cite{Dudek2021}, which uses tensor-network contraction to solve WMC problems. Other successful model counters include Ganak~\cite{Sharma2019} and ProCount~\cite{Dudek2021ProCount}.
Ganak is a natural candidate for future experiments; 
we leave a systematic comparison to future work while ProCount targets weighted \emph{projected} model counting, 
which is a different task from the plain WMC used in our framework.

\subsection{Comparison to Tensor Networks}
Tensor-network approaches such as matrix product states (MPS) excel when the underlying quantum state has low entanglement and admits efficient contraction~\cite{Schollwoeck2011,Orus2014}. Our method, in contrast, leverages symbolic factorization and constraint sharing, making it advantageous in settings where the structure of the problem dominates over entanglement properties. In such cases, weighted model counting can reuse large portions of the computation across different instances or parameter regimes, producing exact or certified results at little additional cost. For example, once a diagram is compiled, scanning over parameters is essentially free, while tensor-network methods must redo the contraction. Thus, rather than competing directly with MPS on large generic systems, our approach offers a different perspective of efficiency: structural reuse and explainability.

% conclusion
\section{Conclusion}
\label{sec:conclusion}

We presented a general framework for encoding quantum and classical operators as Boolean formulae with weight functions, enabling a reduction from problems written in Dirac notation to weighted model counting (WMC) instances. This framework facilitates the classical evaluation of otherwise intractable quantum problems by leveraging the power of modern model counters. We demonstrated its effectiveness on two nontrivial physical models: the transverse-field Ising model (quantum) and the Potts model (classical).

Our approach generalizes prior work by supporting arbitrary $q^n \times q^m$ matrices, beyond specific domains such as quantum circuit evaluation. We formally defined a representation language and semantics for linear-algebraic operations,such as matrix multiplication, addition and trace, and proved the correctness of these constructions. The framework provides a reusable foundation that separates the problem modeling from the WMC encoding, enabling future applications across various domains.

We validated the framework by recovering known results for the classical Ising model~\cite{Nagy2024}, and then extended its reach to new problems: we encoded and evaluated the Potts model partition function and approximated the partition function of the transverse-field Ising model using Trotterization. These results show that WMC can be applied beyond qubit systems and quantum circuit simulation.

\subsection{Evaluation}

First, we construct a generic framework that encodes arbitrary matrices with defined semantics for basic operations. The representation is currently restricted to a specific matrix shape and exhibits quadratic size blow-up under addition, but already marks a significant generalization compared to approaches that target specific model classes, such as the reduction of the classical Ising partition function to WMC \cite{Nagy2024}, or WMC-based inference in graphical models \cite{Sang2005, Chavira2008}.

Second, we implement the framework in \texttt{DiracWMC}~\cite{DiracWMC} and successfully applied it to compute the partition functions of the transverse-field Ising and Potts models. The performance on the Potts model is comparable to that of the Ising model, while the quantum model was tractable only for small systems. Nevertheless, we expect scalability to improve with advances in WMC solvers.

\subsection{Future Work}

Several directions remain for future exploration:
\begin{itemize}
    \item \textbf{Tensor extensions:} Extending the framework to represent higher-order tensors would broaden its applicability in quantum many-body physics.
    \item \textbf{Alternative representations:} New encodings could address limitations such as formula size growth under matrix addition, potentially yielding performance benefits. For example, one could add a new operator that would specifically encode addition into the Boolean formula. This would, however, mean model counters have to be adapted to take this new operator into account, or an extra translation step is needed. 
    \item \textbf{Categorical formulations:} Reformulating the framework using monoidal categories may offer a richer mathematical foundation and simplify correctness proofs.
    \item \textbf{Complexity and compilation:} Understanding the complexity of intermediate representations and optimizing for model counter performance could improve scalability.
    \item \textbf{Max-WMC and ground states:} Generalizations to maximum weighted model counting could enable applications to optimization problems such as ground-state estimation.
\end{itemize}

% notation

\section*{Notation}
\addcontentsline{toc}{section}{Notation}  

Below is a table with the notation used in this work, along with the section where the notation is introduced.

\renewcommand*{\arraystretch}{1.4}
\begin{longtable}{|p{3cm}|p{1.4cm}|p{9.5cm}|}
    \hline
    \textbf{Notation} & \textbf{Intr.} & \textbf{Meaning} \\
    \hline\hline \endhead

    \hline
    \endfoot

    % Chapter 3
    $\mathbb B$ & \ref{sec:boolean_logic} & The set of binary values $\{0, 1\}$. \\
    $\phi[\tau]$ & \ref{sec:boolean_logic} & A Boolean formula $\phi$ over a set of variables $V$ evaluated for an assignment $\tau: V \to \mathbb B$. \\
    $\mathbb 1\{c\}$ & \ref{sec:boolean_logic} & Indicator function returning $1$ if the condition $c$ is true, and $0$ otherwise. \\
    $\overline v$ & \ref{sec:boolean_logic} & The negation of a Boolean variable $v$. \\
    $\phi \equiv \psi$ & \ref{sec:boolean_logic} & Logical equivalence of Boolean formulae $\phi$ and $\psi$. \\
    $\mathbb F$ & \ref{def:wmc} & Some arbitrary field, which is assumed to be the same field throughout this work. \\
    $\WMC(\phi, W)$ & \ref{def:wmc} & Weighted model count of $\phi$ with respect to $W$. \\
    $\top$ & \ref{def:wmc} & ``Always true'' Boolean formula. Often referred to as ``top''. \\
    $\bot$ & \ref{def:wmc} & ``Always false'' Boolean formula. Often referred to as ``bottom''. \\

    % Chapter 4
    $\mathcal S$ & \ref{sec:syntax} & Scalar type in the language from Section~\ref{sec:matrix}. \\
    $\mathcal M(q,m\to n)$ & \ref{sec:syntax} & Type of a $q^n \times q^m$ matrix in the language from Section~\ref{sec:matrix}. \\
    $\mathsf{tr}(M)$ & \ref{sec:syntax} & Matrix trace expression in the language from Section~\ref{sec:matrix}. \\
    $\mathsf{entry}(i,j,M)$ & \ref{sec:syntax} & Expression for the matrix entry at row $i$ and column $j$, in the language from Section~\ref{sec:matrix}. \\
    $\mathsf{apply}(f,s)$ & \ref{sec:syntax} & Expression for the application of a field endomorphism on a scalar in the language from Section~\ref{sec:matrix}. \\
    $\mathsf{bra}(q,i)$ & \ref{sec:syntax} & Expression in the language from Section~\ref{sec:matrix} for the length-$q$ row matrix $\bra{i}_q$. \\
    $\mathsf{ket}(q,i)$ & \ref{sec:syntax} & Expression in the language from Section~\ref{sec:matrix} for the length-$q$ column matrix $\ket{i}_q$. \\
    $\mathsf{trans}(M)$ & \ref{sec:syntax} & Expression for the transpose of a matrix in the language from Section~\ref{sec:matrix}. \\
    $\mathsf{apply}(f,M)$ & \ref{sec:syntax} & Expression for the entry-wise application of a field endomorphism on a matrix. \\
    $\vdash e : T$ & \ref{sec:type_system} & Expression $e$ has type $T$. \\
    $\denot{e}_v$ & \ref{sec:value_semantics} & Value denotational semantics of an expression $e$. \\
    $\tr(M)$ & \ref{sec:value_semantics} & The trace of a matrix. \\
    $\bra{i}_q$, $\bra{i}$ & \ref{sec:value_semantics} & A row vector of width $q$ with a $1$ at the $i$-th position counting from $0$, and $0$ everywhere else. The $q$ is left out if it is clear from context. \\
    $\ket{i}_q$, $\ket{i}$ & \ref{sec:value_semantics} & A column vector of height $q$ with a $1$ at the $i$-th position counting from $0$, and $0$ everywhere else. The $q$ is left out if it is clear from context. \\
    $\rep(\phi,W)$ & \ref{sec:scalar_repr} & Value that the tuple $(\phi,W)$ represents, which is equal to $\WMC(\phi,W)$. \\
    $\var(v)$ & \ref{sec:matrix_repr}, \ref{chapter:encodings} & Set of Boolean variables used in the variable representation $v$. \\
    $v = n$ & \ref{sec:matrix_repr}, \ref{chapter:encodings} & Formula for equality of a variable encoding $v$ to a value $n$ \\
    $\val_v$ & \ref{sec:matrix_repr}, \ref{chapter:encodings} & Validity formula of a variable encoding $v$. \\
    $v \leftrightarrow w$ & \ref{sec:matrix_repr}, \ref{chapter:encodings} & Equality formula of two variable encodings $v$ and $w$. \\
    $\rep(\phi,W,x,y,q)$ & \ref{sec:matrix_repr} & Matrix that the tuple $(\phi,W,x,y,q)$ represents. \\
    $\Rep$ & \ref{sec:repr_map} & Set of scalar and matrix representations. \\
    $\Mat(\mathbb F)$,\newline$\Mat(\mathbb F, n \times m)$ & \ref{sec:repr_map} & Set of matrices over a field $\mathbb F$, optionally with the given shape. \\
    $[r]$ & \ref{sec:repr_equiv} & Equivalence class of a representation $r$ under the relation of equal outputs when applying the function $\rep$. \\
    $\denot{e}_r$ & \ref{sec:repr_semantics} & Representation denotational semantics of an expression $e$. \\
    $f_1 \cup f_2$ & \ref{sec:repr_semantics} & Union of two functions with disjoint domains. \\
    $f_1 \cdot f_2$ & \ref{sec:repr_semantics} & Multiplication of functions on where their domains overlap, and the value of one of the functions elsewhere. \\
    $\Exp$ & \ref{sec:repr_correctness} & Set of expressions that have a type. \\

    % Chapter 5
    $\Lambda$ & \ref{sec:ising_model_def}, \ref{sec:transverse_field_ising_def}, \ref{sec:potts_model_def} & Set of sites in an Ising model, transverse-field Ising model, or Potts model. \\
    $J_{ij}$ & \ref{sec:ising_model_def}, \ref{sec:transverse_field_ising_def} & Interaction strength between sites in an Ising model, transverse-field Ising model. \\
    $h_i$ & \ref{sec:ising_model_def} & External field strnegth at site $i$ in an Ising model. \\
    $\sigma$ & \ref{sec:ising_model_def} & Configuration of spins of an Ising model. \\
    $H_I(\sigma)$ & \ref{sec:ising_model_def} & Hamiltonian of an Ising model with spin configuration $\sigma$. \\
    $Z_{\beta,I}$ & \ref{sec:ising_model_def} & Partition function of an Ising model at inverse temperature $\beta$. \\
    $X$, $Y$, $Z$ & \ref{sec:ising_matrix} & Pauli $X$, $Y$, and $Z$ matrices. \\
    $e^M$ & \ref{sec:ising_matrix} & Matrix exponential $\sum_{k=0}^\infty M^k/k!$. \\

    % Chapter 6
    $\mu_x$, $\mu_z$ & \ref{sec:transverse_field_ising_def} & Transverse-field Ising model external field strengths in $X$ and $Z$ directions. \\
    $H_Q$ & \ref{sec:transverse_field_ising_def} & Transverse-field Ising model Hamiltonian matrix. \\
    $Z_{\beta,Q}$ & \ref{sec:transverse_field_ising_def} & Partition function of a transverse-field Ising model at inverse temperature $\beta$. \\

    % Chapter 7
    $J_{ij}(s_i,s_j)$ & \ref{sec:potts_model_def} & Interaction strength between sites $i$ and $j$ in a generalized Potts model, given their states $s_i$ and $s_j$. \\
    $h_i(s_i)$ & \ref{sec:potts_model_def} & External field strength at site $i$ in a generalized Potts model, given its state $s_i$. \\
    $s$ & \ref{sec:potts_model_def} & Configuration of spins of a Potts model. \\
    $H_P(s)$ & \ref{sec:potts_model_def} & Hamiltonian of a Potts model with spin configuration $s$. \\
    $Z_{\beta,P}$ & \ref{sec:potts_model_def} & Partition function of a Potts model $P$ at inverse temperature $\beta$. \\
\end{longtable}

\appendix
% encodings
\section{Variable encodings}
\label{chapter:encodings}

In this section, we describe several ways of encoding a $q$-state variable (A variable that can take values $\{0, \dots, q-1\}$) using Boolean variables. There are several ways of doing this, each of which is useful in certain situations~\cite{Nguyen2021, Prestwich2021, Klieber2007, Abio2014, Tanjo2011}. We will introduce several of these variable encodings that can be used in the matrix representations described in Section~\ref{sec:matrix}. For the matrix representations, we need support for several operations on these variable encodings. Hence, we introduce the following formal definition:
\begin{definition}{Variable encoding}{}
    A \textbf{variable encoding} is a tuple $v = (q, V, =)$, with $q \in \mathbb Z_{\geq 2}$ the base of the encoding, $V$ a set of variables the encoding uses, and $=: \{0, \dots, q-1\} \to \Formula(V)$ a function sending a value $n$ to a formula over $V$ that indicates the value of $v$ is equal to $n$. We use the notation $v = n$. The following should be true:
    \begin{enumerate}
        \item $(v = n) \land (v = m) \equiv \bot$ for all $n \neq m$;
        \item For every $n$, there exists exactly one $\tau: V \to \mathbb B$ such that $(v = n)[\tau]$ holds.
    \end{enumerate}
\end{definition}
In addition to this definition, we also introduce notation for formulae that indicate an encoding ``has a valid value'' and that two encodings ``have the same value''. These formulae can often be simplified, as will be done in the sections discussing different encodings.
\begin{notation}
    For encodings $v=(q, V, =)$ and $w=(q, V', =)$, use the following notation:
    \begin{itemize}
        \item $q(v) = q$;
        \item $\var(v) = V$;
        \item $\val_v \equiv \bigvee_{n=0}^{q-1} (v=n)$;
        \item $v \leftrightarrow w \equiv \bigvee_{n=0}^{q-1} (v=n \land w=n)$.
    \end{itemize}
\end{notation}

Note that a string/tuple of encodings $x=x_{k-1}\dots x_0$ can represent numbers from $\{0, \dots, q^k-1\}$ by using base-$q$ expansions. We can then use the same notation as defined above for these strings, with:
\begin{align}
    x = n &\equiv \bigwedge_{i=0}^{k-1}(x_i = n_i) \qquad\text{ for }n = \sum_{i=0}^{k-1}q^in_i\text{ with }0 \leq n_i < q \\
    \var(x) &= \bigcup_{i=0}^{k-1} \var(x_i) \\
    \val_x &\equiv \bigwedge_{i=0}^{k-1} \val_{x_i} \\
    x \leftrightarrow y &\equiv \bigwedge_{i=0}^{k-1} (x_i \leftrightarrow y_i)
\end{align}
Below we list some example encodings. Note that in our implementation, some auxiliary variables may be introduced to make the Boolean formulae more compact in CNF form. However, the structure of the encodings is largely the same.

\subsection{Logarithmic encoding}

First, we introduce an encoding that uses a logarithmic number of variables relative to the base $q$~\cite{Prestwich2021}. To be precise, we use variables $v_0, \dots, v_{k-1}$, where $k = \lceil \log_2 q\rceil$. Naturally, the set of used variables of an encoding $v$ is
\begin{align}
    \var(v) = \{v_0, \dots, v_{k-1}\}.
\end{align}
The equality formula for a number $n$ is a cube (conjunction of literals) where $v_i$ or $\overline v_i$ is present depending on whether $n_i$ from the binary representation $n=\sum_{i=0}^{k=1} 2^in_i$ is equal to $1$ or $0$ respectively.

The formula $\val_v$ can be rewritten as follows, making use of the binary representation $q - 1 = \sum_{i=0}^{k-1}2^iq_i$:
\begin{align}
    \val_v \equiv \bigwedge_{\stackrel{i \in \{0, \dots, k-1\}}{q_i=0}} \left(\overline v_i \lor \bigvee_{\stackrel{j \in \{i+1, \dots, k-1\}}{q_j=1}} \overline v_j\right).
\end{align}
For some other base-$q$ encoding $w$ with variables $w_0, \dots, w_{k-1}$, the formula $v \leftrightarrow w$ can be rewritten by comparing all variables separately:
\begin{align}
    v \leftrightarrow w \equiv \bigwedge_{i=0}^{k-1} (v_i \leftrightarrow w_i).
\end{align}

\subsection{Order encoding}
\label{sec:order_encoding}

An alternative encoding, which is beneficial in some cases for SAT solvers and model counters, is an order encoding~\cite{Abio2014}. This encoding uses $q-1$ variables $v_0, \dots, v_{q-2}$, where each variable $v_i$ should be true if the represented number is strictly larger than $i$. A big advantage of this representation is the compact formula for $\val_v$, which is a conjunction of implications, making sure no false value comes before a true one:
\begin{align}
    \val_v \equiv \bigwedge_{i=1}^{k-1} (v_i \rightarrow v_{i-1}).
\end{align}
For another base-$q$ encoding $w$ using variables $w_0, \dots, w_{k-1}$, we can again check for equality by checking all variables have the same value:
\begin{align}
    v \leftrightarrow w \equiv \bigwedge_{i=0}^{k-1} (v_i \leftrightarrow w_i).
\end{align}

\subsection{One-hot encoding}

Finally, we introduce a one-hot encoding, also known as a direct encoding~\cite{Prestwich2021}. The encoding requires exactly one out of $q$ variables to be true. The set of variables is $\var(v) = \{v_0, \dots, v_{q-1}\}$. The validity formula makes sure exactly one variable is true, which can be done with
\begin{align}
    \val_v \equiv \left(\bigvee_{i=0}^{q-1} v_i\right) \land \bigwedge_{i \neq j} (\overline v_i \lor \overline v_j).
\end{align}
Again, equality of two encodings can be checked by checking if all variables have the same value:
\begin{align}
    v \leftrightarrow w \equiv \bigwedge_{i=0}^{k-1} (v_i \leftrightarrow w_i).
\end{align}
Note that there are more efficient methods similar to this encoding~\cite{Nguyen2021,Klieber2007}. From these methods, only the addition of auxiliary variables has been (partially) used in our implementation of the matrix representations.

% operation_proofs

\section{Correctness of representation denotational semantics}
\label{chapter:operation_proofs}

In this section we prove Theorem~\ref{thm:repr_correctness}, which states that the semantics $\denot{\cdot}_r$ are well-defined and that $\rep^\# \circ \denot{\cdot}_r = \denot{\cdot}_v$. We do this using induction on the expression type proof tree. Each separate rule has its own Lemma below. Since $\denot{\cdot}_v$ is defined as evaluating the expression using the usual rules, we refrain from mentioning $\denot{\cdot}_v$ explicitly in the remainder of this section.

Section~\ref{sec:wmc_properties} lists some general properties of weighted model counting. The proofs in Section~\ref{sec:correctness_proof} rely on these properties. The proof of Theorem~\ref{sec:repr_correctness} is split up into Lemmas, one for every rule in Section~\ref{sec:repr_semantics}.

\subsection{Properties of WMC}
\label{sec:wmc_properties}

\begin{lemma}\label{lem:wmc_product_indep}
    Let $\phi_1$ and $\phi_2$ be Boolean formulae over sets of variables $V_1$ and $V_2$, respectively. Let $W_1: V_1 \times \mathbb B \to \mathbb F$ and $W_2: V_2 \times \mathbb B \to \mathbb F$ be weight functions. If $V_1 \cap V_2 = \varnothing$ we have
    \begin{align}
        \WMC(\phi_1 \land \phi_2, W_1 \cup W_2) = \WMC(\phi_1, W_1) \cdot \WMC(\phi_2, W_2)
    \end{align}
\end{lemma}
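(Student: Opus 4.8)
The plan is to unfold both sides via the definition of weighted model counting (equation~(\ref{eq:wmc})) and exploit the bijection between assignments $\tau : V_1 \cup V_2 \to \mathbb B$ and pairs $(\tau_1,\tau_2)$ with $\tau_i : V_i \to \mathbb B$. Such a bijection exists precisely because $V_1 \cap V_2 = \varnothing$: given $\tau$, set $\tau_i = \tau|_{V_i}$; conversely $\tau_1 \cup \tau_2$ is a well-defined assignment on $V_1 \cup V_2$. Note also that $\dom(W_1 \cup W_2) = V_1 \cup V_2$, so the sum defining $\WMC(\phi_1 \land \phi_2, W_1 \cup W_2)$ ranges over exactly these assignments $\tau$.

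First I would observe that, since $\phi_1$ is a formula over $V_1$ only, its truth value depends only on $\tau|_{V_1}$, i.e.\ $\phi_1[\tau] = \phi_1[\tau_1]$, and likewise $\phi_2[\tau] = \phi_2[\tau_2]$; hence $(\phi_1 \land \phi_2)[\tau] = \phi_1[\tau_1]\cdot\phi_2[\tau_2]$ as elements of $\mathbb B \subseteq \mathbb F$. Next I would split the weight product: because $W_1 \cup W_2$ is the union of two functions with disjoint domains, for $v \in V_1$ we have $(W_1 \cup W_2)(v,\tau(v)) = W_1(v,\tau_1(v))$, and for $v \in V_2$ we have $(W_1 \cup W_2)(v,\tau(v)) = W_2(v,\tau_2(v))$. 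Therefore $\prod_{v \in V_1 \cup V_2}(W_1 \cup W_2)(v,\tau(v)) = \bigl(\prod_{v \in V_1} W_1(v,\tau_1(v))\bigr)\bigl(\prod_{v \in V_2} W_2(v,\tau_2(v))\bigr)$, the two products being over disjoint index sets.

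Combining these, the summand at $\tau$ factors as a product of a term depending only on $\tau_1$ and a term depending only on $\tau_2$. I would then re-index the single sum over $\tau$ as a double sum over $\tau_1 : V_1 \to \mathbb B$ and $\tau_2 : V_2 \to \mathbb B$ (using the bijection above) and apply the distributive law $\sum_{a}\sum_{b} f(a)g(b) = \bigl(\sum_a f(a)\bigr)\bigl(\sum_b g(b)\bigr)$, valid in the commutative ring $\mathbb F$, to obtain $\WMC(\phi_1 \land \phi_2, W_1 \cup W_2) = \WMC(\phi_1,W_1)\cdot\WMC(\phi_2,W_2)$.

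I do not expect a serious obstacle here; the only points requiring care are (i) making the assignment bijection explicit and checking that restricting an assignment does not change the truth value of a formula over the corresponding variable subset, and (ii) being precise that $W_1 \cup W_2$ denotes the union of functions with disjoint domains, so that the weight product genuinely splits. Everything else is just the distributive law. This lemma is essentially the multiplicativity of WMC over variable-disjoint conjunctions, and the same factorization argument will be reused in the later lemmas for scalar multiplication, matrix--scalar multiplication, and the Kronecker product.
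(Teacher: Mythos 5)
Your proposal is correct and follows essentially the same route as the paper's proof: both unfold the definition, split each assignment $\tau$ on $V_1 \cup V_2$ into restrictions $\tau_1, \tau_2$, use that $\phi_i[\tau] = \phi_i[\tau_i]$ and that the weight product factors over the disjoint domains, and then apply distributivity to factor the sum. Your write-up is slightly more explicit about the assignment bijection and the splitting of the weight product, but there is no substantive difference.
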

\begin{proof}
    Write $S = \WMC(\phi_1 \land \phi_2, W_1 \cup W_2)$. We have
    \begin{align}
        S &= \sum_{\tau: V_1 \cup V_2 \to \mathbb B} (\phi_1 \land \phi_2)[\tau] \prod_{v \in V_1 \cup V_2} (W_1 \cup W_2)(v,\tau(v))
    \end{align}
    The function $\tau$ can be split up into two functions $\tau_1$ and $\tau_2$ over the two domains $V_1$ and $V_2$, respectively. Since $\phi_1$ only contains variables from $V_1$, we have $\phi_1[\tau] = \phi_1[\tau_1]$, and likewise for $\phi_2$.
    \begin{align}
        S &= \sum_{\tau_1: V_1 \to \mathbb B} \sum_{\tau_2: V_2 \to \mathbb B} \phi_1[\tau_1] \phi_2[\tau_2] \left(\prod_{v \in V_1} W_1(v,\tau_1(v))\right) \left(\prod_{v \in V_2} W_2(v,\tau_2(v))\right) \\
        &= \left(\sum_{\tau_1: V_1 \to \mathbb B}\phi_1[\tau_1]\prod_{v \in V_1} W_1(v,\tau_1(v))\right) \left(\sum_{\tau_2: V_2 \to \mathbb B}\phi_2[\tau_2]\prod_{v \in V_2} W_2(v,\tau_2(v))\right) \\
        &= \WMC(\phi_1, W_1) \cdot \WMC(\phi_2, W_2)
    \end{align}
    This proves the lemma.
\end{proof}

\begin{lemma}\label{lem:wmc_disjoint_formulae}
    Let $\phi_1$ and $\phi_2$ be Boolean formulae over a set of variables $V$, with $\phi_1 \land \phi_2 \equiv \bot$ (i.e., $\phi_1 \land \phi_2$ is unsatisfiable). Let $W: V \times \mathbb B \to \mathbb F$ be a weight function. Then
    \begin{align}
        \WMC(\phi_1 \lor \phi_2, W) = \WMC(\phi_1, W) + \WMC(\phi_2, W)
    \end{align}
\end{lemma}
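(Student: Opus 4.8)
The plan is to unfold the definition of $\WMC$ from (\ref{eq:wmc}) and argue assignment by assignment. The key observation is that $\phi_1$, $\phi_2$, and $\phi_1 \vee \phi_2$ are all formulae over the \emph{same} variable set $V$, so all three weighted model counts are sums over the same index set $\{\tau : V \to \mathbb B\}$, each term carrying the identical weight product $\prod_{v \in V} W(v, \tau(v))$. Hence everything reduces to the pointwise identity $(\phi_1 \vee \phi_2)[\tau] = \phi_1[\tau] + \phi_2[\tau]$ for every $\tau$, where the sum on the right is taken in $\mathbb F$ after viewing $\mathbb B = \{0,1\} \subseteq \mathbb F$.

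First I would prove that pointwise identity. Fix an assignment $\tau$. The hypothesis $\phi_1 \wedge \phi_2 \equiv \bot$ rules out $\phi_1[\tau] = \phi_2[\tau] = 1$, so at most one of the two truth values is $1$; a three-way case split (only $\phi_1[\tau]=1$, only $\phi_2[\tau]=1$, or neither) shows that in each case $\phi_1[\tau] + \phi_2[\tau] \in \{0,1\}$ and coincides with the truth value of $\phi_1 \vee \phi_2$ under $\tau$.

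Then I would substitute this into the definition and split the sum by linearity:
\begin{align}
\WMC(\phi_1 \vee \phi_2, W)
&= \sum_{\tau : V \to \mathbb B} \bigl(\phi_1[\tau] + \phi_2[\tau]\bigr) \prod_{v \in V} W\bigl(v, \tau(v)\bigr) \\
&= \sum_{\tau : V \to \mathbb B} \phi_1[\tau] \prod_{v \in V} W\bigl(v,\tau(v)\bigr) + \sum_{\tau : V \to \mathbb B} \phi_2[\tau] \prod_{v \in V} W\bigl(v,\tau(v)\bigr) \\
&= \WMC(\phi_1, W) + \WMC(\phi_2, W),
\end{align}
which finishes the argument.

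I do not expect a genuine obstacle here; the only subtlety is to keep Boolean disjunction distinct from field addition. The disjointness hypothesis $\phi_1 \wedge \phi_2 \equiv \bot$ is precisely what licenses replacing $\vee$ by $+$ at the level of truth values; dropping it would force an inclusion--exclusion correction term $-\WMC(\phi_1 \wedge \phi_2, W)$. This plays a role analogous to the disjoint-domain hypothesis in Lemma~\ref{lem:wmc_product_indep}, except that here it is the sets of \emph{models} that must be disjoint rather than the sets of variables.
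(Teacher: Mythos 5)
Your proof is correct and follows essentially the same route as the paper's: both unfold the definition of $\WMC$, use the unsatisfiability of $\phi_1 \land \phi_2$ to justify the pointwise identity $(\phi_1 \lor \phi_2)[\tau] = \phi_1[\tau] + \phi_2[\tau]$, and then split the sum by linearity. Your version merely spells out the case analysis and the remark about inclusion--exclusion in more detail than the paper does.
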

\begin{proof}
    Writing out the definition, we have
    \begin{align}
        \WMC(\phi_1 \lor \phi_2,W)
        &= \sum_{\tau: V \to \mathbb B} (\phi_1 \lor \phi_2)[\tau] \prod_{v \in V} W(v, \tau(v))
    \end{align}
    Because $\phi_1$ and $\phi_2$ cannot be satisfied at the same time, we can write $(\phi_1 \lor \phi_2)[\tau] = \phi_1[\tau] + \phi_2[\tau]$. Taking this outside of the entire sum gives the desired result.
\end{proof}

\begin{lemma}\label{lem:wmc_fix_var}
    Let $\phi$ be a Boolean formula, $W: V \times \mathbb B \to \mathbb F$ a weight function, and $v \in V$ a variable. Then
    \begin{align}
        \WMC(\phi,W) = \WMC(\phi \land \overline v, W) + \WMC(\phi \land v, W)
    \end{align}
\end{lemma}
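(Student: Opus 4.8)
The plan is to reduce this to Lemma~\ref{lem:wmc_disjoint_formulae} by a trivial case split on the variable $v$. First I would observe that $\phi \equiv (\phi \land v) \lor (\phi \land \overline v)$, since every assignment $\tau : V \to \mathbb B$ either sets $\tau(v) = 1$ or $\tau(v) = 0$, so the distributive law gives this logical equivalence directly. Next I would note that $(\phi \land v) \land (\phi \land \overline v) \equiv \bot$, because the conjunct $v \land \overline v$ can never be satisfied. These two facts are exactly the hypotheses needed to invoke Lemma~\ref{lem:wmc_disjoint_formulae} with $\phi_1 := \phi \land v$ and $\phi_2 := \phi \land \overline v$ (both over the same variable set $V$, and with the same weight function $W$), which immediately yields
\begin{align}
    \WMC(\phi, W) = \WMC\bigl((\phi \land v) \lor (\phi \land \overline v), W\bigr) = \WMC(\phi \land v, W) + \WMC(\phi \land \overline v, W),
\end{align}
where the first equality uses that $\WMC$ depends only on the logical equivalence class of the formula (clear from Definition~\ref{def:wmc}, since $\phi[\tau]$ is determined by the equivalence class). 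Reordering the two summands gives the statement.

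Alternatively, if one prefers a self-contained argument, I would expand the definition of $\WMC(\phi, W)$ as a sum over all $\tau : V \to \mathbb B$ and partition the index set according to the value $\tau(v) \in \{0,1\}$; on the block where $\tau(v) = 1$ the summand $\phi[\tau]\prod_{w \in V} W(w, \tau(w))$ agrees term-by-term with the corresponding summand of $\WMC(\phi \land v, W)$ (and the assignments with $\tau(v) = 0$ contribute $0$ to the latter since $(\phi \land v)[\tau] = 0$), and symmetrically for $\tau(v) = 0$. Summing the two blocks recovers $\WMC(\phi, W)$.

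There is essentially no obstacle here: the only point requiring a word of care is the implicit assumption that $\phi \land v$ and $\phi \land \overline v$ are regarded as formulae over the \emph{same} variable set $V$ as $\phi$ (so that the weight function $W$ with $\dom(W) = V$ applies unchanged to all three), which is automatic since $v \in V$. I expect the authors' proof to be the one-line reduction to Lemma~\ref{lem:wmc_disjoint_formulae}.
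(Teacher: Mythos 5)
Your proposal is correct and matches the paper's proof exactly: the paper likewise derives the result as a one-line consequence of Lemma~\ref{lem:wmc_disjoint_formulae} using the decomposition $\phi \equiv (\phi \land \overline v) \lor (\phi \land v)$. Your additional remarks on mutual unsatisfiability and the invariance of $\WMC$ under logical equivalence just make explicit what the paper leaves implicit.
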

\begin{proof}
    This follows from Lemma~\ref{lem:wmc_disjoint_formulae} by using $\phi \equiv (\phi \land \overline v) \lor (\phi \land v)$
\end{proof}

\begin{lemma}\label{lem:wmc_field_endomorphism}
    Let $\phi$ be a Boolean formula, $W: V \times \mathbb B \to \mathbb F$ a weight function, and $f: \mathbb F \to \mathbb F$ a field endomorphism. Then
    \begin{align}
        \WMC(\phi, f \circ W) = f(\WMC(\phi, W))
    \end{align}
\end{lemma}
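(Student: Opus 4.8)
The plan is to expand both sides using Definition~\ref{def:wmc} and then push the endomorphism $f$ through the outer sum and the inner products, using that a field endomorphism is in particular a (unital, equivalently nonzero) ring homomorphism: it is additive, multiplicative, sends $1$ to $1$, and sends $0$ to $0$.

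First I would unfold $\WMC(\phi, f\circ W) = \sum_{\tau: V \to \mathbb B} \phi[\tau]\prod_{v\in V} f\bigl(W(v,\tau(v))\bigr)$, using that $(f\circ W)(v,b) = f(W(v,b))$ by definition of the composite weight function. Since $V$ is finite and $f$ is multiplicative with $f(1)=1$ (which also handles the empty product when $V=\varnothing$), a short induction on $|V|$ gives $\prod_{v\in V} f(W(v,\tau(v))) = f\bigl(\prod_{v\in V} W(v,\tau(v))\bigr)$ for each fixed $\tau$. Next, because $\phi[\tau]\in\{0,1\}\subseteq\mathbb F$ and $f(0)=0$, $f(1)=1$, checking the two cases $\phi[\tau]=0$ and $\phi[\tau]=1$ shows $\phi[\tau]\cdot f(c) = f(\phi[\tau]\cdot c)$ for every $c\in\mathbb F$; applying this with $c=\prod_{v\in V} W(v,\tau(v))$ moves the indicator inside $f$ as well. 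Finally, since the set of assignments $\tau\colon V\to\mathbb B$ is finite and $f$ is additive, $f$ commutes with the outer (finite) sum, yielding $\WMC(\phi,f\circ W) = f\bigl(\sum_{\tau} \phi[\tau]\prod_{v} W(v,\tau(v))\bigr) = f(\WMC(\phi,W))$.

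There is no genuine obstacle here; the only points requiring care are the degenerate cases — an empty variable set gives an empty product equal to $1$, handled by $f(1)=1$ — and treating the indicator $\phi[\tau]$ as the field element it denotes (rather than as a Boolean) when passing it across $f$, together with recording why a field endomorphism must fix $1$. The finiteness of $V$ and of the set of assignments is exactly what licenses interchanging $f$ with the product and the sum, so no limiting or continuity argument is needed.
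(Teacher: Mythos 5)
Your proposal is correct and follows essentially the same route as the paper's proof: unfold the definition of $\WMC$, use multiplicativity of $f$ to pull it through the finite product, use $f(0)=0$ and $f(1)=1$ to absorb the indicator $\phi[\tau]$, and finally use additivity to commute $f$ with the finite sum. Your extra remarks on the empty product and on finiteness are minor refinements of the same argument, not a different approach.
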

\begin{proof}
    Since $f$ is a field endomorphism, it has properties $f(x+y) = f(x)+f(y)$ and $f(xy)=f(x)f(y)$. This means
    \begin{align}
        \WMC(\phi, f \circ W)
        &= \sum_{\tau: V \to \mathbb B} \phi[\tau] \prod_{v\in V} (f\circ W)(v, \tau(v)) \\
        &= \sum_{\tau: V \to \mathbb B} \phi[\tau] \cdot f\left(\prod_{v \in V} W(v,\tau(v))\right).
    \end{align}
    Note that $\phi[\tau] \in \{0,1\}$. If $\phi[\tau] = 0$, for any $x$ we have $f(\phi[\tau]\cdot x) = 0 = f(0) = \phi[\tau]\cdot f(x)$. If $\phi[\tau] = 1$ we have the same property: $f(\phi[\tau]\cdot x) = f(x) = \phi[\tau]\cdot f(x)$. Therefore, we can rewrite the equation above as
    \begin{align}
        \WMC(\phi, f \circ W)
        &= \sum_{\tau: V \to \mathbb B} f\left(\phi[\tau] \cdot \prod_{v \in V} W(v,\tau(v))\right) \\
        &= f\left(\sum_{\tau: V \to \mathbb B} \phi[\tau] \cdot \prod_{v \in V} W(v,\tau(v))\right) \\
        &= f(\WMC(\phi, W))
    \end{align}
\end{proof}

\subsection{Correctness proof}
\label{sec:correctness_proof}

Combining the Lemmas below with induction on type derivation trees proves Theorem~\ref{thm:repr_correctness}.

\begin{lemma}[Scalar constant]{}
    Let $\alpha \in \mathbb F$. Then $\denot{\alpha}_r$ is well-defined and $\rep^\#(\denot{\alpha}_r) = \alpha$.
\end{lemma}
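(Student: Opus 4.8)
The plan is to unwind the definition of the scalar-constant operator and reduce the statement to a one-line weighted-model-count computation. Recall from the operator \emph{Scalar constant} that $\denot{\alpha}_r = [(x, W_\alpha)]$, where $x$ is a fresh Boolean variable and $W_\alpha : \{x\} \times \mathbb B \to \mathbb F$ is the constant function $\alpha$. The type derivation of $\alpha$ consists of a single application of the (Const) rule and is unique, so $\denot{\alpha}_r$ is defined exactly once.

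First I would check well-definedness in the sense relevant to $\denot{\cdot}_r$: that the displayed tuple really names an element of $\Rep/{\sim}$ and does not depend on the arbitrary choice made in its construction. The pair $(x, W_\alpha)$ is a Boolean formula (the literal $x$) over the variable set $\{x\}$ together with a weight function on that set, hence an element of $\Rep$, so $[(x, W_\alpha)]$ is a well-formed $\sim$-class. The only genuine choice is the name of the fresh variable; renaming $x$ to some $x'$ yields $(x', W_\alpha')$ with the same weighted model count, since $\WMC$ is invariant under bijective renaming of variables, and therefore $(x, W_\alpha) \sim (x', W_\alpha')$. Thus $\denot{\alpha}_r$ is well-defined.

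Next I would compute $\rep^\#(\denot{\alpha}_r)$. By the definition of $\rep^\#$ in Section~\ref{sec:repr_equiv} we have $\rep^\#([r]) = \rep(r)$, so $\rep^\#(\denot{\alpha}_r) = \rep(x, W_\alpha) = \WMC(x, W_\alpha)$ by Definition~\ref{def:scalar_rep}. Expanding Definition~\ref{def:wmc} over the two assignments $\tau : \{x\} \to \mathbb B$: the assignment $\tau(x) = 0$ contributes $0$ because $x[\tau] = 0$, while $\tau(x) = 1$ contributes $x[\tau]\cdot W_\alpha(x,1) = 1 \cdot \alpha = \alpha$. Hence $\WMC(x, W_\alpha) = \alpha$, and since $\denot{\alpha}_v = \alpha$ by the value semantics, we get $\rep^\#(\denot{\alpha}_r) = \alpha = \denot{\alpha}_v$.

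There is no real obstacle here: this lemma is the base case of the induction proving Theorem~\ref{thm:repr_correctness}, and its whole content is bookkeeping — reading ``well-defined'' as independence of the fresh-variable choice, and unwinding the chain $\rep^\# \to \rep \to \WMC$ correctly. The only point requiring a moment's care is that the formula $\phi = x$ is satisfiable and singles out exactly the assignment $\tau(x) = 1$, so that the single surviving summand is precisely $W_\alpha(x,1) = \alpha$.
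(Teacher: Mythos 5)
Your proposal is correct and follows essentially the same route as the paper: unwind $\rep^\# \to \rep \to \WMC$ and compute $\WMC(x, W_\alpha) = \alpha$ by summing over the two assignments. The paper dismisses well-definedness as immediate from the definition, whereas you additionally justify invariance under renaming of the fresh variable, which is a harmless (and arguably welcome) elaboration rather than a different argument.
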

\begin{proof}
    The fact that $\denot{\alpha}_r$ is well-defined follows directly from the definition. For $W: \{x\} \times \mathbb B \to \mathbb F$ constant $\alpha$ and $\phi \equiv x$, we have
    \begin{align}
        \rep^\#(\denot{\alpha}_r)
        &= \rep^\#[(\phi, W)] \\
        &= \rep(\phi, W) \\
        &= \WMC(\phi,W) \\
        &= \sat(\phi \land \overline x) \cdot W(\overline x) + \sat(\phi \land x) \cdot W(x) \\
        &= W(x) \\
        &= \alpha
    \end{align}
\end{proof}

\begin{lemma}[Scalar multiplication]{}
    Let $s_1$ and $s_2$ be expressions of type $\mathcal S$. Suppose $\denot{s_1}_r$ and $\denot{s_2}_r$ are well-defined. Then $\denot{s_1 \cdot s_2}_r$ is well-defined, and
    \begin{align}
        \rep^\#(\denot{s_1 \cdot s_2}_r) = \rep^\#(\denot{s_1}_r) \cdot \rep^\#(\denot{s_2}_r)
    \end{align}
\end{lemma}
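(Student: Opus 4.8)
The plan is to unwind both the definition of the scalar multiplication operation on representations and the definition of $\rep^\#$, and then invoke Lemma~\ref{lem:wmc_product_indep}. First I would pick representatives: by the induction hypothesis, $\denot{s_1}_r = [(\phi_1, W_1)]$ and $\denot{s_2}_r = [(\phi_2, W_2)]$ for some well-defined representations, with $\rep^\#(\denot{s_1}_r) = \WMC(\phi_1, W_1)$ and $\rep^\#(\denot{s_2}_r) = \WMC(\phi_2, W_2)$. To apply the rule ``Scalar multiplication'' I need $\dom(W_1) \cap \dom(W_2) = \varnothing$; this is where I would note that, as explained in Section~\ref{sec:repr_semantics} under ``Domain constraints,'' we may rename the Boolean variables of $(\phi_2, W_2)$ to fresh ones without changing its equivalence class, so WLOG the domains are disjoint. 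Then the operation gives $\denot{s_1 \cdot s_2}_r = [(\phi_1 \land \phi_2, W_1 \cup W_2)]$.

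Next I would check well-definedness, which amounts to two things: that the equivalence class $[(\phi_1 \land \phi_2, W_1 \cup W_2)]$ does not depend on the choice of representatives within $[(\phi_1,W_1)]$ and $[(\phi_2,W_2)]$, and does not depend on the choice of variable renaming. Both follow once the value computation below is done, since it shows $\rep(\phi_1 \land \phi_2, W_1 \cup W_2)$ depends only on $\rep(\phi_1, W_1)$ and $\rep(\phi_2, W_2)$; hence any two admissible representatives yield $\sim$-equivalent results.

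For the value identity, I would compute
\begin{align}
    \rep^\#(\denot{s_1 \cdot s_2}_r)
    &= \rep(\phi_1 \land \phi_2, W_1 \cup W_2) \\
    &= \WMC(\phi_1 \land \phi_2, W_1 \cup W_2) \\
    &= \WMC(\phi_1, W_1) \cdot \WMC(\phi_2, W_2) \\
    &= \rep^\#(\denot{s_1}_r) \cdot \rep^\#(\denot{s_2}_r),
\end{align}
where the third equality is exactly Lemma~\ref{lem:wmc_product_indep}, applicable because $\dom(W_1) \cap \dom(W_2) = \varnothing$, and the first and last equalities unfold the definitions of $\rep^\#$, $\rep$ on scalar representations, and the induction hypothesis.

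I do not expect a serious obstacle here; the only subtlety is bookkeeping around the disjoint-domain precondition and confirming that renaming variables is harmless — i.e., that substituting fresh variables in $(\phi_2, W_2)$ yields a representation in the same $\sim$-class. That is intuitively clear (a bijective renaming of Boolean variables is a bijection on assignments that preserves both $\phi[\tau]$ and the weight product) and could be stated as a small auxiliary remark if a fully rigorous treatment is wanted, but it is essentially routine.
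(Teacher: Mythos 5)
Your proposal is correct and follows essentially the same route as the paper's proof: choose disjoint-domain representatives, apply Lemma~\ref{lem:wmc_product_indep} to factor $\WMC(\phi_1\land\phi_2, W_1\cup W_2)$, and conclude both the product identity and well-definedness from the fact that the result depends only on $\rep(\phi_1,W_1)$ and $\rep(\phi_2,W_2)$. Your extra remark about variable renaming preserving the $\sim$-class is a welcome bit of added care that the paper delegates to the ``Domain constraints'' discussion in Section~\ref{sec:repr_semantics}.
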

\begin{proof}
    Suppose $\denot{s_1}_r = [(\phi_1, W_1)]$ and $\denot{s_2}_r = [(\phi_2, W_2)]$ with the domains of $W_1$ and $W_2$ disjoint. Then by definition
    \begin{align}
        \denot{s_1 \cdot s_2}_r = [(\phi_1 \land \phi_2, W_1 \cup W_2)]
    \end{align}
    We show that this expression is well-defined by showing the value of $\rep^\#(\denot{s_1 \cdot s_2}_r)$ is independent of the choice of $\phi_1$, $\phi_2$, $W_1$, and $W_2$. We have
    \begin{align}
        \rep^\#(\denot{s_1 \cdot s_2}_r)
        = \WMC(\phi_1 \land \phi_2,W_1 \cup W_2)
    \end{align}
    Since $W_1$ and $W_2$ have non-overlapping domains and $\phi_1$ and $\phi_2$ have variables in the domains of $W_1$ and $W_2$ respectively, Lemma~\ref{lem:wmc_product_indep} gives
    \begin{align}
        \rep^\#(\denot{s_1 \cdot s_2}_r)
        &= \WMC(\phi_1,W_1) \cdot \WMC(\phi_2,W_2) \\
        &= \rep^\#([(\phi_1, W_1)]) \cdot \rep^\#([(\phi_2, W_2)]) \\
        &= \rep^\#(\denot{s_1}_r)\cdot\rep^\#(\denot{s_2}_r)
    \end{align}
    By assumption, the above expression is well-defined.
\end{proof}

\begin{lemma}[Scalar addition]{}
    Let $s_1$ and $s_2$ be expressions of type $\mathcal S$. Suppose $\denot{s_1}_r$ and $\denot{s_2}_r$ are well-defined. Then $\denot{s_1 + s_2}_r$ is well-defined, and
    \begin{align}
        \rep^\#(\denot{s_1 + s_2}_r) = \rep^\#(\denot{s_1}_r) + \rep^\#(\denot{s_2}_r)
    \end{align}
\end{lemma}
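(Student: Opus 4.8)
The plan is to compute $\WMC$ of the combined formula directly by conditioning on the fresh control variable $c$, invoking the WMC lemmas of Section~\ref{sec:wmc_properties}, and to obtain well-definedness as a by-product: the value I compute will be exactly $\rep^\#(\denot{s_1}_r) + \rep^\#(\denot{s_2}_r)$, which by the inductive hypothesis does not depend on the chosen representatives.

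First I would fix good representatives. By the constructions in Section~\ref{sec:repr_semantics} under ``Finding equivalent representations,'' from $\denot{s_1}_r$ and $\denot{s_2}_r$ I can choose representatives $(\phi_1, W_1)$ and $(\phi_2, W_2)$ with $\dom(W_1)\cap\dom(W_2)=\varnothing$, with $\WMC(\top,W_1)\neq 0$ and $\WMC(\top,W_2)\neq 0$, and a fresh variable $c\notin\dom(W_1)\cup\dom(W_2)$. These are exactly the preconditions of the Scalar-addition rule, so $\denot{s_1+s_2}_r=[(\psi,W)]$ is obtained, where $\psi\equiv(\overline c\to\phi_1)\land(c\to\phi_2)$ and $W=W_1\cup W_2\cup W_c$.

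Next I would split on $c$ using Lemma~\ref{lem:wmc_fix_var}:
\[
\WMC(\psi,W)=\WMC(\psi\land\overline c,W)+\WMC(\psi\land c,W).
\]
Since $\psi\land\overline c\equiv\phi_1\land\overline c$ and $\psi\land c\equiv\phi_2\land c$, and $\phi_1$ mentions only variables of $\dom(W_1)$ while $\phi_2$ mentions only variables of $\dom(W_2)$, I would apply Lemma~\ref{lem:wmc_product_indep} twice in each branch (peeling off first the unused block $\dom(W_2)$ resp. $\dom(W_1)$, then $\{c\}$), together with $\WMC(\overline c,W_c)=W_c(\overline c)$ and $\WMC(c,W_c)=W_c(c)$, to get
\[
\WMC(\psi\land\overline c,W)=\WMC(\phi_1,W_1)\,\WMC(\top,W_2)\,W_c(\overline c),\qquad
\WMC(\psi\land c,W)=\WMC(\phi_2,W_2)\,\WMC(\top,W_1)\,W_c(c).
\]
Substituting $W_c(\overline c)=1/\WMC(\top,W_2)$ and $W_c(c)=1/\WMC(\top,W_1)$ — which are legitimate precisely by the nonvanishing hypotheses — the $\WMC(\top,\cdot)$ factors cancel, leaving $\WMC(\psi,W)=\WMC(\phi_1,W_1)+\WMC(\phi_2,W_2)=\rep^\#(\denot{s_1}_r)+\rep^\#(\denot{s_2}_r)$. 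Well-definedness then follows: this value is independent of all choices of $\phi_i,W_i,c$ because it equals the right-hand side, which is independent of those choices by hypothesis; hence $[(\psi,W)]$ is a well-defined equivalence class satisfying the stated identity.

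The main obstacle I expect is not conceptual but bookkeeping: carefully justifying the two-step factorization with Lemma~\ref{lem:wmc_product_indep} (checking that $\dom(W_1),\dom(W_2),\{c\}$ are pairwise disjoint and that the formula in each branch only mentions the intended block of variables). The degenerate case where some $\phi_i$ is unsatisfiable needs no special treatment: then $\WMC(\phi_i,W_i)=0$ and the identity still holds termwise.
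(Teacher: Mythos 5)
Your proposal is correct and follows essentially the same route as the paper's proof: split on the control variable $c$ via Lemma~\ref{lem:wmc_fix_var}, factor each branch with Lemma~\ref{lem:wmc_product_indep}, and cancel the $\WMC(\top,\cdot)$ normalization factors against the weights of $c$, obtaining well-definedness as a by-product. The only cosmetic difference is that you spell out the pairwise-disjointness bookkeeping and the degenerate unsatisfiable case slightly more explicitly.
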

\begin{proof}
    Suppose $\denot{s_1}_r = [(\phi_1, W_1)]$ and $\denot{s_2}_r = [(\phi_2, W_2)]$ with the domains of $W_1$ and $W_2$ disjoint, $\WMC(\top,W_1) \neq 0$, and $\WMC(\top,W_2) \neq 0$. By definition
    \begin{align}
        \denot{s_1 + s_2}_r = [((\overline c \implies \phi_1) \land (c \implies \phi_2), W_1 \cup W_2 \cup W_c)]
    \end{align}
    with $W_c: \{c\} \times \mathbb B \to \mathbb F$ defined by $W_c(c) = 1/\WMC(\top,W_1)$ and $W_c(\overline c) = 1/\WMC(\top,W_2)$. Again, we show that this is well-defined by showing $\rep^\#(\denot{s_1+s_2}_r)$ yields the same value independent of the formulae and weight functions chosen at the start.
    
    Write $W = W_1 \cup W_2 \cup W_c$ and $\psi \equiv (\overline c \implies \phi_1) \land (c \implies \phi_2)$. Then
    \begin{align}
        \rep^\#(\denot{s_1 + s_2}_r) = \WMC(\psi,W)
    \end{align}
    By Lemma~\ref{lem:wmc_fix_var} we have
    \begin{align}
        \rep^\#(\denot{s_1 + s_2}_r) = \WMC(\psi \land \overline c,W) + \WMC(\psi \land c,W)
    \end{align}
    We will show that $\WMC(\psi \land \overline c,W) = \rep^\#(\denot{s_1}_r)$, the case $\WMC(\psi \land c,W)$ is symmetric. Note that $\psi \land \overline c \equiv \phi_1 \land \overline c$. Since $\phi_1$ contains only variables variables in the domain of $W_1$, Lemma~\ref{lem:wmc_product_indep} gives
    \begin{align}
        \WMC(\psi \land \overline c,W)
        &= \WMC(\phi_1 \land \overline c,W) \\
        &= \WMC(\phi_1,W_1) \cdot \WMC(\top,W_2) \cdot \WMC(\overline c, W_c) \\
        &= \WMC(\phi_1,W_1) \cdot \WMC(\top,W_2) \cdot \frac1{\WMC(\top,W_2)} \\
        &= \WMC(\phi_1,W_1)
    \end{align}
    Similarly it can be shown that $\WMC(\psi \land c,W) = \WMC(\phi_2,W_2)$, which means
    \begin{align*}
        \rep^\#(\denot{s_1 + s_2}_r)
        &= \WMC(\phi_1,W_1) + \WMC(\phi_2,W_2) \\
        &= \rep^\#[(\phi_1, W_1)] + \rep^\#[(\phi_2, W_2)] \\
        &= \rep^\#(\denot{s_1}_r) + \rep^\#(\denot{s_2}_r)
    \end{align*}
\end{proof}

\begin{lemma}[Field endomorphism on a scalar]\label{lem:scalar_endomorphism}
    Let $s$ be an expression of type $\mathcal S$ and $f: \mathbb F \to \mathbb F$ a field endomorphism. Suppose $\denot{s}_r$ is well-defined. Then $\denot{\mathsf{apply}(f,s)}_r$ is well-defined, and
    \begin{align}
        \rep^\#(\denot{\mathsf{apply}(f,s)}_r) = f(\rep^\#(\denot{s}_r))
    \end{align}
\end{lemma}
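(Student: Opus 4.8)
The plan is to unwind the definition of the operation \textbf{Field endomorphism on a scalar} and then invoke Lemma~\ref{lem:wmc_field_endomorphism}. Concretely, I would start by picking any representative $(\phi, W)$ with $\denot{s}_r = [(\phi, W)]$, which exists by the assumption that $\denot{s}_r$ is well-defined. Note first that $f \circ W$ is again a legitimate weight function $V \times \mathbb B \to \mathbb F$, so $(\phi, f \circ W)$ is a genuine scalar representation and the operation rule $\denot{\mathsf{apply}(f,s)}_r = [(\phi, f \circ W)]$ makes sense syntactically.

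Next I would establish well-definedness, i.e.\ independence of the output from the chosen representative. Suppose $(\phi_1, W_1)$ and $(\phi_2, W_2)$ are two representatives of $\denot{s}_r$, so $\WMC(\phi_1, W_1) = \WMC(\phi_2, W_2)$ by definition of $\sim$. Applying Lemma~\ref{lem:wmc_field_endomorphism} to each side gives
\begin{align}
    \WMC(\phi_1, f \circ W_1) = f(\WMC(\phi_1, W_1)) = f(\WMC(\phi_2, W_2)) = \WMC(\phi_2, f \circ W_2),
\end{align}
hence $(\phi_1, f \circ W_1) \sim (\phi_2, f \circ W_2)$ and $[(\phi_1, f \circ W_1)] = [(\phi_2, f \circ W_2)]$. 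This shows $\denot{\mathsf{apply}(f,s)}_r$ is well-defined.

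Finally, the equality itself: with a fixed representative $(\phi, W)$,
\begin{align}
    \rep^\#(\denot{\mathsf{apply}(f,s)}_r) = \rep^\#[(\phi, f \circ W)] = \WMC(\phi, f \circ W) = f(\WMC(\phi, W)) = f(\rep^\#(\denot{s}_r)),
\end{align}
where the third equality is Lemma~\ref{lem:wmc_field_endomorphism}. I expect no real obstacle here; the only things to be careful about are (i) checking $f \circ W$ is still a valid weight function, and (ii) spelling out the representative-independence argument rather than silently assuming it, since $\rep^\#$ is only defined on equivalence classes. The substantive content is entirely packaged in Lemma~\ref{lem:wmc_field_endomorphism}, which itself rests on $f$ being additive and multiplicative.
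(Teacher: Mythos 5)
Your proposal is correct and follows essentially the same route as the paper: pick a representative $(\phi,W)$, apply Lemma~\ref{lem:wmc_field_endomorphism} to get $\WMC(\phi, f\circ W) = f(\WMC(\phi,W))$, and conclude. If anything, you spell out the representative-independence step more explicitly than the paper does, which is a small improvement rather than a divergence.
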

\begin{proof}
    Suppose $\denot{s}_r = [(\phi, W)]$. By definition,
    \begin{align}
        \denot{\mathsf{apply}(f,s)}_r = [(\phi, f \circ W)]
    \end{align}
    We show this is well-defined by showing the value of the expression below is independent of the choice of $\phi$ and $W$: 
    \begin{align}
        \rep^\#(\denot{\mathsf{apply}(f,s)}_r) = \WMC(\phi,f \circ W)
    \end{align}
    It follows from Lemma~\ref{lem:wmc_field_endomorphism} that
    \begin{align}
        \rep^\#(\denot{\mathsf{apply}(f,s)}_r) = f(\WMC(\phi,W)) = f(\rep^\#(\denot{s}_r))
    \end{align}
    This completes the proof.
\end{proof}

\begin{lemma}[Bra and ket]{}
    Let $q \in \mathbb Z_{\geq2}$ and $0 \leq i < q$. Then $\denot{\mathsf{bra}(i,q)}_r$ and $\denot{\mathsf{ket}(i,q)}_r$ are well-defined and
    \begin{align}
        \rep^\#(\denot{\mathsf{bra}(i,q)}_r) &= \bra{i}_q \\
        \rep^\#(\denot{\mathsf{ket}(i,q)}_r) &= \ket{i}_q
    \end{align}
\end{lemma}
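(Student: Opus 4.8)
The plan is to verify the two representation equalities directly from the definitions of $\denot{\cdot}_r$ on bras and kets (Operation ``Bra and ket'') and of $\rep^\#$ via Definition~\ref{def:matrix_rep}. Well-definedness is immediate here, since the operation gives an explicit representative $(x = i, W_1, x, -, q)$ (resp.\ $(x = i, W_1, -, x, q)$) with $x$ a fresh $q$-state variable and $W_1$ constant $1$, and any two choices of the fresh variable $x$ yield representations with the same $\rep$ value by the substitution argument of Section~\ref{sec:repr_equiv}; so there is nothing to check beyond computing one model count per entry.

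For the ket, I would compute $\rep(x = i, W_1, -, x, q)$ entrywise. By Definition~\ref{def:matrix_rep}, the matrix represented has shape $q^{|x|} \times q^{|{-}|} = q \times 1$, and its entry at row $r$ (with $0 \le r < q$) is $\WMC\bigl((x = i) \land (x = r), W_1\bigr)$ (the constraint $-\,{=}\,0$ on the empty input string is vacuously $\top$). By property~1 of variable encodings, $(x = i) \land (x = r) \equiv \bot$ whenever $r \ne i$, so that $\WMC = 0$; and by property~2, when $r = i$ there is exactly one assignment $\tau:\var(x)\to\mathbb B$ with $(x=i)[\tau]=1$, so $\WMC\bigl(x=i, W_1\bigr) = \prod_{v \in \var(x)} W_1(v, \tau(v)) = 1$ since $W_1$ is constant $1$. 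Hence the represented column vector has a $1$ in position $i$ and $0$ elsewhere, i.e.\ it equals $\ket{i}_q$, which gives $\rep^\#(\denot{\mathsf{ket}(i,q)}_r) = \ket{i}_q$. The bra case is entirely symmetric: the representation $(x = i, W_1, x, -, q)$ has shape $q^{|{-}|} \times q^{|x|} = 1 \times q$, its entry at column $c$ is $\WMC\bigl((x=i) \land (x = c), W_1\bigr)$, which is $1$ for $c = i$ and $0$ otherwise by the same two encoding properties, so the represented row vector is $\bra{i}_q$.

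This argument is routine; the only mild subtlety is keeping the index conventions straight in Definition~\ref{def:matrix_rep} — namely that the \emph{input} string $x$ indexes columns (via $x = j$) and the \emph{output} string $y$ indexes rows (via $y = i$) — and handling the degenerate case where one of the two strings is the empty string ``$-$'', for which $\var(-) = \varnothing$ and the constraint $(- = 0)$ is $\top$, so it contributes a factor $\WMC(\top, W_0) = 1$ and does not affect the count. I expect no real obstacle; if anything, the main thing to be careful about is making the appeal to encoding properties~1 and~2 explicit rather than treating ``$x = i$ has a unique satisfying assignment'' as obvious.
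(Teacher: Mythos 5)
Your proof is correct and follows essentially the same route as the paper's: both reduce the claim to the single computation $\WMC\bigl((x=i)\land(x=j), W_1\bigr) = \mathbb 1\{i=j\}$ for the explicit representative $(x=i, W_1, x, -, q)$ (resp.\ its ket counterpart), with the other case by symmetry. Your version is if anything slightly more careful, spelling out the appeal to the two encoding properties and the degenerate empty-string index, which the paper treats as immediate.
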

\begin{proof}
    We will prove the correctness of the bra. The case of ket is symmetric. The semantics are well-defined by definition.

    We need to show that $\rep^\#(\denot{\mathsf{bra}(i,q)}_r)$ is a row vector with a $1$ at entry $i$ and $0$ everywhere else. This can be done by verifying that, for every $0\leq j < q$, we have
    \begin{align}
        \rep^\#(\denot{\mathsf{bra}(i,q)}_r)\ket{j} = \mathbb 1\{i = j\}
    \end{align}
    Note that we have
    \begin{align}
        \denot{\mathsf{bra}(i,q)}_r = [(x=i,W_1,x,-,q)]
    \end{align}
    with $x$ a $q$-state variable encoding and $W: \var(x) \times \mathbb B \to \mathbb F$ constant $1$. Using the definition of matrix representations, we have
    \begin{align}
        \rep^\#(\denot{\mathsf{bra}(i,q)}_r)\ket{j} &= \WMC(x=i \land x=j, W_1) = \mathbb 1\{i=j\}
    \end{align}
    This proves the lemma.
\end{proof}

\begin{lemma}[Matrix product]{}
    Let $M_1$ and $M_2$ be expressions with types $\mathcal M(q, m \to k)$ and $\mathcal M(q, k \to n)$ respectively. Suppose $\denot{M_1}_r$ and $\denot{M_2}_r$ are well-defined. Then $\denot{M_2\cdot M_1}_r$ is well-defined, and
    \begin{align}
        \rep^\#(\denot{M_2 \cdot M_1}_r) = \rep^\#(\denot{M_2}_r) \cdot \rep^\#(\denot{M_1}_r)
    \end{align}
\end{lemma}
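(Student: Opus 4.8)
The plan is to verify the matrix-representation identity~(\ref{eq:matrix_rep}) entry-by-entry for the tuple produced by the Matrix multiplication rule, and to read off both well-definedness and the product formula from the resulting expression.

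First I would fix representatives: using the variable-substitution techniques of Section~\ref{sec:repr_semantics}, choose $(\phi_1, W_1, x, y, q)$ representing $M_1 := \rep^\#(\denot{M_1}_r)$ and $(\phi_2, W_2, y, z, q)$ representing $M_2 := \rep^\#(\denot{M_2}_r)$ so that the output string of the first literally coincides with the input string of the second (both are length-$k$ strings of $q$-state encodings) and $\dom(W_1)\cap\dom(W_2) = \var(y)$, with $\var(x)$ and $\var(z)$ otherwise fresh. The rule then gives $\denot{M_2\cdot M_1}_r = [(\phi_1 \wedge \phi_2 \wedge \val_y,\ W_1\cdot W_2,\ x,\ z,\ q)]$, a tuple of the shape $q^n\times q^m$ demanded by the type $\mathcal M(q, m\to n)$.

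Next I would fix indices $j\in\{0,\dots,q^m-1\}$ and $i\in\{0,\dots,q^n-1\}$ and evaluate $\WMC(\phi_1\wedge\phi_2\wedge\val_y\wedge (x=j)\wedge (z=i),\ W_1\cdot W_2)$. Since $\val_y \equiv \bigvee_{\ell=0}^{q^k-1}(y=\ell)$ with the disjuncts pairwise unsatisfiable (property~1 of variable encodings, lifted to strings), Lemma~\ref{lem:wmc_disjoint_formulae} rewrites this count as $\sum_{\ell=0}^{q^k-1}\WMC(\phi_1\wedge\phi_2\wedge(y=\ell)\wedge (x=j)\wedge (z=i),\ W_1\cdot W_2)$. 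For fixed $\ell$, the clause $(y=\ell)$ forces the variables $\var(y) = \dom(W_1)\cap\dom(W_2)$ to their unique satisfying assignment; identifying a model of the conjunction with a pair of assignments $\tau_1$ over $\dom(W_1)$ and $\tau_2$ over $\dom(W_2)$ agreeing on $\var(y)$ with that assignment, the weight $\prod_v(W_1\cdot W_2)(v,\tau(v))$ splits as $\bigl(\prod_{v\in\dom(W_1)}W_1(v,\tau_1(v))\bigr)\bigl(\prod_{v\in\dom(W_2)}W_2(v,\tau_2(v))\bigr)$ — the shared variables $\var(y)$ contributing exactly the product $W_1\cdot W_2$ — and the subformulas $\phi_1\wedge (x=j)\wedge (y=\ell)$ over $\dom(W_1)$ and $\phi_2\wedge (y=\ell)\wedge (z=i)$ over $\dom(W_2)$ are satisfied independently once the common restriction to $\var(y)$ is fixed. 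Hence the $\ell$-summand equals $\WMC(\phi_1\wedge (x=j)\wedge (y=\ell),W_1)\cdot\WMC(\phi_2\wedge (y=\ell)\wedge (z=i),W_2) = (M_1)_{\ell j}(M_2)_{i\ell}$ by Definition~\ref{def:matrix_rep}, and summing over $\ell$ yields $\sum_\ell (M_2)_{i\ell}(M_1)_{\ell j} = (M_2 M_1)_{ij}$. This is precisely~(\ref{eq:matrix_rep}) for the constructed tuple, so it represents $M_2 M_1$; since $M_2 M_1$ depends only on the classes $\denot{M_1}_r,\denot{M_2}_r$ (well-defined by hypothesis), the class $\denot{M_2\cdot M_1}_r$ is well-defined and $\rep^\#(\denot{M_2\cdot M_1}_r) = M_2 M_1 = \rep^\#(\denot{M_2}_r)\cdot\rep^\#(\denot{M_1}_r)$.

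The main obstacle is the bookkeeping around the shared variables $\var(y)$: one must show that conditioning on $(y=\ell)$ simultaneously (i) decouples the $\phi_1$-part from the $\phi_2$-part despite their overlapping variable sets and (ii) makes $W_1\cdot W_2$ factor cleanly into a $W_1$-product and a $W_2$-product. This is where the precise definition of $f_1\cdot f_2$ (product on the overlap, single factor off it) and the uniqueness property of the encodings are both needed; I would isolate this as a short lemma — a variant of Lemma~\ref{lem:wmc_product_indep} in which the two formulas may share variables but those shared variables are pinned to a single assignment — before assembling the final sum. Everything else (the shape check, invariance under the choice of representatives, and the appeal to the substitution techniques to meet the rule's domain hypothesis) is routine.
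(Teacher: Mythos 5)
Your proposal is correct and follows essentially the same route as the paper's proof: verify~(\ref{eq:matrix_rep}) entry-by-entry, expand $\val_y$ as a disjunction of the mutually exclusive clauses $(y=\ell)$ via Lemma~\ref{lem:wmc_disjoint_formulae}, and factor each summand into $\WMC(\phi_1\wedge x{=}j\wedge y{=}\ell,W_1)\cdot\WMC(\phi_2\wedge y{=}\ell\wedge z{=}i,W_2)$ before recognizing the matrix-product sum. Your suggestion to isolate the factorization step as a pinned-overlap variant of Lemma~\ref{lem:wmc_product_indep} is a sound refinement, since the paper invokes that lemma somewhat loosely at the same point even though its hypotheses require fully disjoint variable sets.
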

\begin{proof}
    Suppose we have
    \begin{align}
        \denot{M_1}_r &= [(\phi_1, W_1, x, y, q)] \\
        \denot{M_2}_r &= [(\phi_2, W_2, y, z, q)]
    \end{align}
    with $\dom(W_1) \cap \dom(W_2) = \var(y)$. We will show that $\denot{M_2\cdot M_1}_r$ is well-defined by showing that the result of $\rep^\#(\denot{M_2 \cdot M_1}_r)$ is independent of the choice of formulae and weight functions earlier. We can prove the lemma by showing that, for all $0 \leq i < q^n$ and $0 \leq j < q^m$, we have
    \begin{align}
        \bra{i}\rep^\#(\denot{M_2 \cdot M_1}_r)\ket{j} = \bra{i}\rep^\#(\denot{M_2}_r) \cdot \rep^\#(\denot{M_1}_r)\ket{j}
    \end{align}
    By definition, we have
    \begin{align}
        \denot{M_2 \cdot M_1}_r = [\phi_1 \land \phi_2 \land \val_y, W_1 \cdot W_2,x,z,q]
    \end{align}
    Furthermore,
    \begin{align}
        \bra{i}\rep^\#(\denot{M_2 \cdot M_1}_r)\ket{j}
        &= \WMC(\phi_1 \land \phi_2 \land \val_y \land x=j \land z=i, W_1 \cdot W_2)
    \end{align}
    Using the property $\val_v \equiv \bigwedge_{a=0}^{q^k-1} (v=a)$ and Lemma~\ref{lem:wmc_disjoint_formulae}, we have
    \begin{align}
        \bra{i}\rep^\#(\denot{M_2 \cdot M_1}_r)\ket{j}
        &= \sum_{a=0}^{q^k-1} \WMC(\phi_1 \land \phi_2 \land y=a \land x=j \land z=i, W_1 \cdot W_2)
    \end{align}
    Since the only overlap $\phi_1 \land x = j$ and $\phi_2 \land z = i$ have is $\var(y)$, which is the only overlap in the domains of $W_1$ and $W_2$, and all variables in $\var(y)$ are fixed by $y = a$, we can rewrite this as
    \begin{align}
        \bra{i}\rep^\#(\denot{M_2 \cdot M_1}_r)\ket{j}
        &= \sum_{a=0}^{q^k-1} \WMC(\phi_1 \land x = j \land y=a, W_1) \\
        &\quad\ \cdot \WMC(\phi_2 \land z=i \land y=a, W_2) \\
        &= \sum_{a=0}^{q^k-1} \bra{a}\rep^\#(\denot{M_1}_r) \ket{j}\bra{i}\rep^\#(\denot{M_2}_r)\ket{a} \\
        &= \sum_{a=0}^{q^k-1} \bra{i}\rep^\#(\denot{M_2}_r)\ket{a}\bra{a}\rep^\#(\denot{M_1}_r) \ket{j} \\
        &= \bra{i}\rep^\#(\denot{M_2}_r)\cdot \rep^\#(\denot{M_1}_r) \ket{j}
    \end{align}
    This proves the lemma.
\end{proof}

%%%%%%%%%%%%%%%%%%%%%%%%%%%%%%%%
% CONTINUE FROM HERE

\begin{lemma}[Matrix sum]{}
    Let $M_1$ and $M_2$ be expressions with type $\mathcal M(q,m \to n)$. Suppose $\denot{M_1}_r$ and $\denot{M_2}_r$ are well-defined. Then $\denot{M_1 + M_2}_r$ is also well-defined, and
    \begin{align}
        \rep^\#(\denot{M_1 + M_2}_r) = \rep^\#(\denot{M_1}_r) + \rep^\#(\denot{M_2}_r)
    \end{align}
\end{lemma}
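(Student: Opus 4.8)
The plan is to mirror the scalar-addition lemma, working entry by entry. Fix representatives $\denot{M_1}_r = [(\phi_1, W_1, x_1, y_1, q)]$ and $\denot{M_2}_r = [(\phi_2, W_2, x_2, y_2, q)]$ satisfying the side conditions of the matrix-addition rule (disjoint domains, a fresh control variable $c$, fresh strings $x,y$, and $\WMC(\top,W_1),\WMC(\top,W_2)\neq 0$); such representatives always exist by the variable-renaming and $\WMC(\top,W)\neq 0$ constructions of Section~\ref{sec:repr_equiv}, and since $M_1+M_2$ has type $\mathcal M(q, m\to n)$ we have $|x_1|=|x_2|=|x|=m$ and $|y_1|=|y_2|=|y|=n$, so all the biconditionals in the rule are between equal-length strings. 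Writing $W = W_1 \cup W_2 \cup W_c \cup W_{xy}$ and $\phi$ for the formula of the rule, it suffices to prove that for all $0 \le j < q^m$ and $0 \le i < q^n$,
\[
\bra{i}\rep^\#(\denot{M_1 + M_2}_r)\ket{j} = \bra{i}\rep^\#(\denot{M_1}_r)\ket{j} + \bra{i}\rep^\#(\denot{M_2}_r)\ket{j},
\]
since the right-hand side is fixed by the induction hypothesis; this simultaneously gives well-definedness (independence of the chosen representatives, as $\rep^\#$ is injective on $\Rep/{\sim}$) and the stated identity.

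To evaluate the left-hand side I would expand it as $\WMC(\phi \land x = j \land y = i, W)$ and split on $c$ via Lemma~\ref{lem:wmc_fix_var}, obtaining the sum of $\WMC(\phi \land x = j \land y = i \land \overline c, W)$ and the analogous term with $c$. For the $\overline c$ summand, observe $\phi \land \overline c \equiv \phi_1 \land (x \leftrightarrow x_1) \land (y \leftrightarrow y_1) \land \overline c$, because the $c \to (\cdots)$ conjunct becomes vacuous. Conditioning on $x = j$ pins the unique assignment $\tau_x$ to $\var(x)$ (such an assignment exists and is unique by the variable-encoding axioms, using $j < q^m$), after which $x \leftrightarrow x_1$ collapses to $x_1 = j$; likewise $y = i$ with $y \leftrightarrow y_1$ gives $y_1 = i$. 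Since $\var(x)\cup\var(y)$, $\{c\}$, $\dom(W_1)$, and $\dom(W_2)$ are pairwise disjoint and $\phi_2$ is unconstrained, repeated application of Lemma~\ref{lem:wmc_product_indep} yields
\[
\WMC(\phi \land x = j \land y = i \land \overline c, W) = 1 \cdot \WMC(\phi_1 \land x_1 = j \land y_1 = i, W_1) \cdot \WMC(\top, W_2) \cdot W_c(\overline c),
\]
where the leading factor $1$ is the weight of the forced $\var(x)\cup\var(y)$ assignment (as $W_{xy}$ is constant $1$). Because $W_c(\overline c) = 1/\WMC(\top, W_2)$, the middle factors cancel, leaving exactly $\WMC(\phi_1 \land x_1 = j \land y_1 = i, W_1) = \bra{i}\rep^\#(\denot{M_1}_r)\ket{j}$ by Definition~\ref{def:matrix_rep}. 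The $c$ summand is symmetric — the unconstrained $\dom(W_1)$ block contributes $\WMC(\top, W_1)$, cancelling $W_c(c) = 1/\WMC(\top, W_1)$ — and equals $\bra{i}\rep^\#(\denot{M_2}_r)\ket{j}$; adding the two finishes the proof.

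The main obstacle I anticipate is the bookkeeping around the linking biconditionals: one must argue carefully that, inside the WMC, conditioning on $x = j$ together with $x \leftrightarrow x_1$ is equivalent to conditioning on $x_1 = j$ once the $\var(x)$ variables are fixed — this relies on the uniqueness-of-assignment axiom for variable encodings and on representability of $j$ and $i$ — and that the fresh variables $\var(x)\cup\var(y)$ contribute weight exactly $1$ so they do not perturb the count. Everything else is a direct reprise of the scalar-addition argument, built from Lemmas~\ref{lem:wmc_product_indep} and~\ref{lem:wmc_fix_var} together with the $W_c$ cancellation.
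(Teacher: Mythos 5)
Your proposal is correct and follows essentially the same route as the paper's proof: fix representatives satisfying the rule's side conditions, work entrywise, split on the control variable $c$ via Lemma~\ref{lem:wmc_fix_var}, collapse the linking biconditionals $x \leftrightarrow x_1$, $y \leftrightarrow y_1$ to $x_1 = j$, $y_1 = i$ under the conditioning $x = j$, $y = i$, factor with Lemma~\ref{lem:wmc_product_indep}, and cancel $\WMC(\top,W_2)$ against $W_c(\overline c)$ (and symmetrically for $c$). The bookkeeping you flag about the fresh variables contributing weight $1$ and the uniqueness of the forced assignment is exactly how the paper handles it as well.
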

\begin{proof}
    Suppose that
    \begin{align}
        \denot{M_1}_r &= [(\phi_1, W_1, x_1, y_1, q)] \\
        \denot{M_2}_r &= [(\phi_2, W_2, x_2, y_2, q)]
    \end{align}
    with the domains of $W_1$ and $W_2$ disjoint, $\WMC(\top, W_1) \neq 0$, and $\WMC(\top, W_2) \neq 0$. Let $c$ be a Boolean variable and $x$ and $y$ strings of variable encodings, of the same lengths as $x_1$ (or $x_2$) and $y_1$ (or $y_2$), respectively. Let these be chosen in such a way that neither $c$ nor the variables in $x$ and $y$ are contained in either of the domains of $W_1$ and $W_2$. We have defined
    \begin{align}
        \denot{M_1 + M_2}_r &= [(\phi, W_1 \cup W_2 \cup W_c \cup W_{xy}, x, y, q)]
    \end{align}
    with
    \begin{gather}
        \begin{aligned}
            \phi \equiv\ &(\overline c \implies ((x \iff x_1) \land (y \iff y_1) \land \phi_1)) \\
            &\land (c \implies ((x \iff x_2) \land (y \iff y_2) \land \phi_2))
        \end{aligned}
    \end{gather}
    and $W_c: \{c\} \times \mathbb B \to \mathbb F$ and $W_{xy}: (\var(x) \cup \var(y)) \times \mathbb B \to \mathbb F$ defined by $W_c(c) = 1/\WMC(\top,W_1)$, $W_c(\overline c) = 1/\WMC(\top,W_2)$, and $W_{xy}$ constant $1$.

    We will show that $\denot{M_1 + M_2}_r$ is well-defined by showing $\rep^\#(\denot{M_1 + M_2}_r)$ yields the same value, independent of the choice of representations at the start of the proof. Let $0 \leq i < q^n$ and $0 \leq j < q^m$. Write $W = W_1 \cup W_2 \cup W_c \cup W_{xy}$ and $\psi \equiv \phi \land x=j \land y=i$. We have
    \begin{align}
        \bra{i}\rep^\#(\denot{M_1 + M_2}_r)\ket{j}
        &= \WMC(\psi, W)
    \end{align}
    We will show that $\WMC(\psi \land \overline c, W) = \bra{i}\rep^\#(\denot{M_1}_r)\ket{j}$, the case $\WMC(\psi \land c, W)$ is symmetric. Note that
    \begin{align}
        \psi \land \overline c \equiv (x \iff x_1) \land (y \iff y_1) \land \phi_1 \land x=j \land y=i \land \overline c
    \end{align}
    Since $\phi_1$ does not contain the variable $c$, Lemma~\ref{lem:wmc_product_indep} gives
    \begin{align}
        \WMC(\psi \land \overline c, W)
        &= \WMC(\psi, W_1 \cup W_{xy}) \cdot \WMC(\top, W_2) \cdot \WMC(\overline c, W_c) \\
        &= \WMC(\psi, W_1 \cup W_{xy}) \cdot \WMC(\top, W_2) \cdot \frac1{\WMC(\top, W_2)} \\
        &= \WMC(\psi, W_1 \cup W_{xy})
    \end{align}
    We can rewrite $\psi$ to
    \begin{align}
        \psi \equiv \phi_1 \land x = j \land x_1 = j \land y = i \land y_1 = i
    \end{align}
    Since $\psi_1$ only contains variables in the domain of $W_1$, we can rewrite further to
    \begin{align}
        \WMC(\psi \land \overline c, W)
        &= \WMC(\phi_1 \land x_1 = j \land y_1 = i, W_1) \cdot \WMC(x = j \land y = i, W_{xy})
    \end{align}
    Note that there is exactly one satisfying assignment of $x = j \land y = i$, which means the term on the right is $1$, which means
    \begin{align}
        \WMC(\psi \land \overline c, W)
        &= \WMC(\phi \land x_1 = j \land y_1 = i, W_1) \\
        &= \bra{i}\rep^\#[(\phi, W, x_1, y_1, q)] \ket{j} \\
        &= \bra{i}\rep^\#(\denot{M_1}_r)\ket{j}
    \end{align}
    Similarly, it can be proven that
    \begin{align}
        \WMC(\psi \land c, W) = \bra{i}\rep^\#(\denot{M_2}_r)\ket{j}
    \end{align}
    Combining these with Lemma~\ref{lem:wmc_fix_var} gives
    \begin{align}
        \bra{i}\rep^\#(\denot{M_1 + M_2}_r)\ket{j}
        &= \WMC(\psi, W) \\
        &= \WMC(\psi \land c, W) + \WMC(\psi \land \overline c) \\
        &= \bra{i}\rep^\#(\denot{M_1}_r)\ket{j} + \bra{i}\rep^\#(\denot{M_2}_r)\ket{j} \\
        &= \bra{i}(\rep^\#(\denot{M_1}_r) + \rep^\#(\denot{M_2}_r))\ket{j}
    \end{align}
    This proves the lemma.
\end{proof}

\begin{lemma}[Kronecker product]{}
    Let $M_1$ and $M_2$ be expressions with types $\mathcal M(q,m_1 \to n_1)$ and $\mathcal M(q,m_2 \to m_1)$ respectively. Suppose $\denot{M_1}_r$ and $\denot{M_2}_r$ are well-defined. Then $\denot{M_1 \otimes M_2}_r$ is well-defined, and
    \begin{align}
        \rep^\#(\denot{M_1 \otimes M_2}_r) = \rep^\#(\denot{M_1}_r) \otimes \rep^\#(\denot{M_2}_r)
    \end{align}
\end{lemma}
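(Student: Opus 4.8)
The plan is to reuse the template of the preceding lemmas: unfold the Kronecker-product rule, compute an arbitrary entry of $\rep^\#(\denot{M_1 \otimes M_2}_r)$ via Definition~\ref{def:matrix_rep}, show it agrees with the corresponding entry of $\rep^\#(\denot{M_1}_r) \otimes \rep^\#(\denot{M_2}_r)$, and conclude well-definedness from the fact that the final expression mentions only the (already well-defined) matrices $\rep^\#(\denot{M_1}_r)$ and $\rep^\#(\denot{M_2}_r)$. Concretely, assume $\denot{M_1}_r = [(\phi_1, W_1, x_1, y_1, q)]$ and $\denot{M_2}_r = [(\phi_2, W_2, x_2, y_2, q)]$ with $\dom(W_1) \cap \dom(W_2) = \varnothing$, so that by definition $\denot{M_1 \otimes M_2}_r = [(\phi_1 \land \phi_2,\ W_1 \cup W_2,\ x_1x_2,\ y_1y_2,\ q)]$. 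Fix $0 \le i < q^{n_1+n_2}$ and $0 \le j < q^{m_1+m_2}$, and write the splits $i = i_1 q^{n_2} + i_2$ and $j = j_1 q^{m_2} + j_2$ with $0 \le i_1 < q^{n_1}$, $0 \le i_2 < q^{n_2}$, $0 \le j_1 < q^{m_1}$, $0 \le j_2 < q^{m_2}$.

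The key observation, read off from the string-encoding conventions in Appendix~\ref{chapter:encodings}, is that concatenating two encoding strings juxtaposes their base-$q$ digit blocks, so $(x_1x_2 = j) \equiv (x_1 = j_1) \land (x_2 = j_2)$ and $(y_1y_2 = i) \equiv (y_1 = i_1) \land (y_2 = i_2)$. Using this together with Definition~\ref{def:matrix_rep}, $\bra{i}\rep^\#(\denot{M_1 \otimes M_2}_r)\ket{j}$ equals $\WMC(\phi_1 \land \phi_2 \land x_1 = j_1 \land y_1 = i_1 \land x_2 = j_2 \land y_2 = i_2,\ W_1 \cup W_2)$. Since $x_1,y_1$ are strings over $\dom(W_1)$ and $x_2,y_2$ over $\dom(W_2)$, the subformula $\phi_1 \land x_1 = j_1 \land y_1 = i_1$ uses only variables in $\dom(W_1)$, the subformula $\phi_2 \land x_2 = j_2 \land y_2 = i_2$ only variables in $\dom(W_2)$, and these domains are disjoint; Lemma~\ref{lem:wmc_product_indep} then factorises this into $\WMC(\phi_1 \land x_1 = j_1 \land y_1 = i_1, W_1) \cdot \WMC(\phi_2 \land x_2 = j_2 \land y_2 = i_2, W_2)$, which by Definition~\ref{def:matrix_rep} is $\bigl(\rep^\#(\denot{M_1}_r)\bigr)_{i_1 j_1} \cdot \bigl(\rep^\#(\denot{M_2}_r)\bigr)_{i_2 j_2}$. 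By the block formula for $A \otimes B$ in Section~\ref{sec:dirac}, this product is exactly $\bigl(\rep^\#(\denot{M_1}_r) \otimes \rep^\#(\denot{M_2}_r)\bigr)_{ij}$, giving the entrywise identity for all $i,j$; since every entry has been written purely in terms of $\rep^\#(\denot{M_1}_r)$ and $\rep^\#(\denot{M_2}_r)$, the value $\rep^\#(\denot{M_1 \otimes M_2}_r)$ does not depend on the chosen representatives, so $\denot{M_1 \otimes M_2}_r$ is well-defined.

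The only real obstacle is bookkeeping of index conventions: one must carefully match the juxtaposition order of the variable strings $x_1x_2$ and $y_1y_2$ to the ``coarse times fine'' split of $i$ and $j$ and to the block layout of $A \otimes B$ fixed in Section~\ref{sec:dirac}, and state explicitly that $\var(x_1) \cup \var(y_1) \subseteq \dom(W_1)$ (and likewise for index $2$), which is what licenses the disjointness hypothesis of Lemma~\ref{lem:wmc_product_indep}. Beyond that the argument is routine and is in fact a strict simplification of the matrix-product lemma, as no summation over an intermediate index is required here.
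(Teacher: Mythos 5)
Your proposal is correct and follows essentially the same route as the paper's own proof: unfold the Kronecker rule, split the concatenated index strings so that $(x_1x_2 = j) \equiv (x_1 = j_1) \land (x_2 = j_2)$, factorise the resulting WMC via Lemma~\ref{lem:wmc_product_indep}, and match against the entrywise formula for $A \otimes B$. The only cosmetic difference is that you write the index split explicitly as $i = i_1 q^{n_2} + i_2$ where the paper works directly with the compound kets $\ket{i_1 i_2}$.
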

\begin{proof}
    Suppose we have
    \begin{align}
        \denot{M_1}_r &= [(\phi_1, W_1, x_1, y_1, q)] \\
        \denot{M_2}_r &= [(\phi_2, W_2, x_2, y_2, q)]
    \end{align}
    with $\dom(W_1) \cap \dom(W_2) = \varnothing$. Then, by definition,
    \begin{align}
        \denot{M_1 \otimes M_2}_r = [(\phi_1 \land \phi_2, W_1 \cup W_2, x_1x_2, y_1y_2, q)]
    \end{align}
    We show that this is well-defined by showing that the value of $\rep^\#(\denot{M_1 \otimes M_2}_r)$ is independent of the choice of the representations at the start of the proof. We need to show that, for all $0 \leq i_1 < q^{n_1}$, $0 \leq i_2 < q^{n_2}$, $0 \leq j_1 < q^{m_1}$, and $0 \leq j_2 < q^{m_2}$:
    \begin{align}
        \bra{i_1i_2}\rep^\#(\denot{M_1 \otimes M_2}_r)\ket{j_1j_2} &= \bra{i_1i_2}(\rep^\#(\denot{M_1}_r) \otimes \rep^\#(\denot{M_2}_r))\ket{j_1j_2}
    \end{align}
    We have
    \begin{align}
        &\quad\ \bra{i_1i_2}\rep^\#(\denot{M_1 \otimes M_2}_r)\ket{j_1j_2} \\
        &= \WMC(\phi_1 \land \phi_2 \land x_1x_2 = j_1j_2 \land y_1y_2 = i_1i_2, W_1 \cup W_2) \\
        &= \WMC(\phi_1 \land \phi_2 \land x_1 = j_1 \land x_2 = j_2 \land y_1 = i_1 \land y_2 = i_2, W_1 \cup W_2)
    \end{align}
    Lemma~\ref{lem:wmc_product_indep} gives
    \begin{align}
        &\quad\ \bra{i_1i_2}\rep^\#(\denot{M_1 \otimes M_2}_r)\ket{j_1j_2} \\
        &= \WMC(\phi_1 \land x_1 = j_1 \land y_1 = i_1,W_1) \WMC(\phi_2 \land x_2 = j_2 \land y_2 = i_2, W_2) \\
        &= \bra{i_1}\rep^\#(\denot{M_1}_r)\ket{j_1}\bra{i_2}\rep^\#(\denot{M_2}_r)\ket{j_2} \\
        &= \bra{i_1i_2}(\rep^\#(\denot{M_1}_r) \otimes \rep^\#(\denot{M_2}_r))\ket{j_1j_2}
    \end{align}
    This proves the lemma.
\end{proof}

\begin{lemma}[Matrix-scalar multiplication]{}
    Let $s$ be an expression of type $\mathcal S$ and $M$ an expression of type $\mathcal M(q,m\to n)$. Suppose $\denot{s}_r$ and $\denot{M}_r$ are well-defined. Then $\denot{s \cdot M}_r$ is also well-defoned, and
    \begin{align}
        \rep^\#(\denot{s \cdot M}_r) = \rep^\#(\denot{s}_r) \cdot \rep^\#(\denot{M}_r)
    \end{align}
\end{lemma}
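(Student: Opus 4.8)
The plan is to mirror the structure of the preceding lemmas (in particular the Matrix product and Matrix-scalar cases for scalars): fix arbitrary representatives of the two input classes, apply the defining rule for $\denot{s\cdot M}_r$, and then verify the identity entry by entry using the factorization of $\WMC$ over disjoint variable sets. Concretely, suppose $\denot{s}_r = [(\phi_s, W_s)]$ and $\denot{M}_r = [(\phi, W, x, y, q)]$ with $\dom(W_s)\cap\dom(W)=\varnothing$ (which can always be arranged by the domain-disjointness trick of Section~\ref{sec:repr_semantics}). By definition $\denot{s\cdot M}_r = [(\phi\land\phi_s, W\cup W_s, x, y, q)]$.

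\textbf{Key steps.} First I would fix $0\le i < q^n$ and $0\le j < q^m$ and write out
\begin{align}
    \bra{i}\rep^\#(\denot{s\cdot M}_r)\ket{j} = \WMC(\phi\land\phi_s\land x=j\land y=i,\ W\cup W_s).
\end{align}
Next, I observe that $\phi_s$ is a formula over $\dom(W_s)$, while $\phi\land x=j\land y=i$ is a formula over $\dom(W)$ (recall the strings $x,y$ use only variables in $V=\dom(W)$), and these two variable sets are disjoint. Hence Lemma~\ref{lem:wmc_product_indep} applies and gives
\begin{align}
    \bra{i}\rep^\#(\denot{s\cdot M}_r)\ket{j}
    &= \WMC(\phi\land x=j\land y=i,\ W)\cdot\WMC(\phi_s, W_s) \\
    &= \bigl(\bra{i}\rep^\#(\denot{M}_r)\ket{j}\bigr)\cdot\rep^\#(\denot{s}_r) \\
    &= \bra{i}\bigl(\rep^\#(\denot{s}_r)\cdot\rep^\#(\denot{M}_r)\bigr)\ket{j}.
\end{align}
Since this holds for every entry, the two matrices coincide, which is the claimed identity. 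Well-definedness of $\denot{s\cdot M}_r$ then follows for free: the right-hand side of the displayed chain depends only on $\rep^\#(\denot{s}_r)$ and $\rep^\#(\denot{M}_r)$, which are well-defined by hypothesis, so the value of $\rep^\#$ on $[(\phi\land\phi_s, W\cup W_s, x, y, q)]$ is independent of the chosen representatives.

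\textbf{Main obstacle.} There is no deep difficulty here; the only point requiring care is the bookkeeping that justifies invoking Lemma~\ref{lem:wmc_product_indep}, namely confirming that the conjunct $x=j\land y=i$ introduces no variables outside $\dom(W)$ and that the two halves of the conjunction really are over disjoint variable sets, so that the product factorization is valid. Everything else is a direct unfolding of the definitions, exactly as in the scalar-multiplication and Kronecker-product lemmas above.
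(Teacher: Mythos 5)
Your proposal is correct and follows essentially the same route as the paper's proof: fix representatives with disjoint weight-function domains, unfold the definition entry by entry, and factor the weighted model count via Lemma~\ref{lem:wmc_product_indep} to obtain $\rep^\#(\denot{s}_r)\cdot\bra{i}\rep^\#(\denot{M}_r)\ket{j}$ for each $i,j$. The bookkeeping point you flag (that $x=j\land y=i$ only involves variables in $\dom(W)$) is exactly the justification the paper relies on as well.
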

\begin{proof}
    We prove that $\denot{s \cdot M}_r$ is well-defined by showing the result of $\rep^\#(\denot{s \cdot M}_r)$ is independent of the choices of representations
    \begin{align}
        \denot{s}_r &= [(\phi_s, W_s)] \\
        \denot{M}_r &= [(\phi, W, x, y, q)]
    \end{align}
    with $\dom(W) \cap \dom(W_s) = \varnothing$. For every $0 \leq i < q^n$ and $0 \leq j < q^m$, we have
    \begin{align}
        \bra{i} \rep^\#(\denot{s \cdot M}_r)\ket{j}
        &= \bra{i}\rep^\#[(\phi \land \phi_s, W \cup W_s, x, y, q)]\ket{j} \\
        &= \WMC(\phi \land \phi_s \land x=j \land y=i, W \cup W_s)
    \end{align}
    Lemma~\ref{lem:wmc_product_indep} gives
    \begin{align}
        \bra{i} \rep^\#(\denot{s \cdot M}_r)\ket{j}
        &= \WMC(\phi_s, W_s)\cdot \WMC(\phi \land x=j \land y = i, W) \\
        &= \rep^\#(\denot{s}_r) \cdot \bra{i}\rep^\#(\denot{M}_r)\ket{j}
    \end{align}
    Since this is true for any $i$ and $j$, this proves the lemma.
\end{proof}

\begin{lemma}[Matrix transpose]{}
    Let $M$ be an expression of type $\mathcal M(q, m \to n)$ and suppose $\denot{M}_r$ is well-defined. Then $\denot{\mathsf{trans}(M)}_r$ is well-defined, and
    \begin{align}
        \rep^\#(\denot{\mathsf{trans}(M)}_r) = (\rep^\#(\denot{M}_r))^T
    \end{align}
\end{lemma}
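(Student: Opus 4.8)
The plan is simply to unfold the Transpose rule from Section~\ref{sec:repr_semantics} and verify the defining equation~(\ref{eq:matrix_rep}) of a matrix representation entry by entry. Unlike the sum, product, and Kronecker rules, the transpose operation disturbs neither the formula $\phi$, the weight function $W$, nor any control or auxiliary variables, so none of Lemmas~\ref{lem:wmc_product_indep}--\ref{lem:wmc_field_endomorphism} are needed.

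First I would fix a representative $(\phi, W, x, y, q)$ with $\denot{M}_r = [(\phi, W, x, y, q)]$, which exists since $\denot{M}_r$ is well-defined by hypothesis. Because $M$ has type $\mathcal M(q, m\to n)$, the string $x$ has length $m$ and $y$ has length $n$, so the tuple $(\phi, W, y, x, q)$ obtained by swapping the input and output strings is a syntactically valid matrix representation of shape $q^m \times q^n$ --- exactly the shape demanded by the type $\mathcal M(q, n\to m)$ of $\mathsf{trans}(M)$. By the Transpose rule, $\denot{\mathsf{trans}(M)}_r = [(\phi, W, y, x, q)]$.

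For the equality, take any $0 \le i < q^m$ and $0 \le j < q^n$. Applying~(\ref{eq:matrix_rep}) to the tuple $(\phi, W, y, x, q)$, whose input string is $y$ and output string is $x$,
\begin{align}
    \bra{i}\rep^\#(\denot{\mathsf{trans}(M)}_r)\ket{j} = \WMC(\phi \land y = j \land x = i, W).
\end{align}
Conjunction is commutative, so the formula on the right coincides with $\phi \land x = i \land y = j$, and applying~(\ref{eq:matrix_rep}) to the original tuple $(\phi, W, x, y, q)$ gives
\begin{align}
    \WMC(\phi \land x = i \land y = j, W) = \bra{j}\rep^\#(\denot{M}_r)\ket{i} = \bra{i}\bigl(\rep^\#(\denot{M}_r)\bigr)^{T}\ket{j}.
\end{align}
Since $i$ and $j$ were arbitrary, $\rep^\#(\denot{\mathsf{trans}(M)}_r) = (\rep^\#(\denot{M}_r))^{T}$. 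Well-definedness is obtained from the same computation: every entry of $\rep^\#(\denot{\mathsf{trans}(M)}_r)$ has been expressed purely in terms of entries of $\rep^\#(\denot{M}_r)$, which is well-defined by assumption, so the output is independent of the chosen representative of $\denot{M}_r$.

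I expect essentially no obstacle here; the only point requiring care is the index bookkeeping --- the reversal $m \leftrightarrow n$ between the types of $M$ and $\mathsf{trans}(M)$, together with the observation that exchanging the input string $x$ with the output string $y$ in~(\ref{eq:matrix_rep}) is precisely what swaps the row index $i$ with the column index $j$, i.e.\ realises the transpose at the level of entries.
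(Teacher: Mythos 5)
Your proof is correct and follows essentially the same route as the paper's: unfold the Transpose rule to $[(\phi,W,y,x,q)]$, apply the defining equation~(\ref{eq:matrix_rep}) entrywise, and observe that commutativity of conjunction makes the WMC instance for entry $(i,j)$ of the transpose identical to that for entry $(j,i)$ of the original, with no auxiliary WMC lemmas required. The extra remarks on shape/type bookkeeping and on deducing well-definedness from the entrywise identity are consistent with what the paper does implicitly.
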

\begin{proof}
    Suppose $\denot{M}_r = [(\phi, W, x, y, q)]$. By definition,
    \begin{align}
        \denot{\mathsf{trans}(M)}_r = [(\phi, W, y, x, q)]
    \end{align}
    We show that this is well-defined by showing $\rep^\#(\denot{\mathsf{trans}(M)}_r)$ yields the same value, independent of the choice of the representation before.
    
    We want to show that $\bra{i}\rep^\#(\denot{\mathsf{trans}(M)}_r)\ket{j} = \bra{j}\rep^\#(\denot{M}_r))\ket{i}$. Using the formula above, we get
    \begin{align}
        \bra{i}\rep^\#(\denot{\mathsf{trans}(M)}_r)\ket{j}
        &= \WMC(\phi \land y = j \land x = i, W) \\
        &= \WMC(\phi \land x = i \land y = j, W) \\
        &= \bra{j}\rep^\#(\denot{M}_r)\ket{i}
    \end{align}
    This proves the lemma.
\end{proof}

\begin{lemma}[Field endomorphism on a matrix]{}
    Let $M$ be an expression of type $\mathcal M(q, m \to n)$ and $f: \mathbb F \to \mathbb F$ a field endomorphism. Suppose $\denot{M}_r$ is well-defined. Then $\denot{\mathsf{apply}(f,M)}_r$ is well-defined, and
    \begin{align}
        \rep^\#(\denot{\mathsf{apply}(f,M)}_r) = f(\rep^\#(\denot{M}_r))
    \end{align}
    We interpret $f$ being applied to a matrix as it being applied to every entry in the matrix, as it is done for the value semantics $\denot{\cdot}_v$.
\end{lemma}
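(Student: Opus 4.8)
The plan is to follow the same pattern as the scalar case (Lemma~\ref{lem:scalar_endomorphism}), working entry-by-entry and invoking Lemma~\ref{lem:wmc_field_endomorphism}. First I would fix a representative $\denot{M}_r = [(\phi, W, x, y, q)]$ so that, by the definition of the rule, $\denot{\mathsf{apply}(f,M)}_r = [(\phi, f\circ W, x, y, q)]$. As in the other lemmas, well-definedness amounts to showing that $\rep^\#(\denot{\mathsf{apply}(f,M)}_r)$ does not depend on the choice of representative, which will fall out of the computation below since the final expression is written purely in terms of $\rep^\#(\denot{M}_r)$, which is well-defined by hypothesis.

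Next I would show the matrices agree entrywise: for all $0 \le i < q^n$ and $0 \le j < q^m$,
\begin{align}
    \bra{i}\rep^\#(\denot{\mathsf{apply}(f,M)}_r)\ket{j}
    &= \WMC(\phi \land x = j \land y = i,\ f\circ W) \\
    &= f\bigl(\WMC(\phi \land x = j \land y = i,\ W)\bigr) \\
    &= f\bigl(\bra{i}\rep^\#(\denot{M}_r)\ket{j}\bigr),
\end{align}
where the first line is the definition of matrix representation (Definition~\ref{def:matrix_rep}), the second is Lemma~\ref{lem:wmc_field_endomorphism} applied with the formula $\phi \land x = j \land y = i$, and the third is again Definition~\ref{def:matrix_rep}. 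Since, by the convention stated in the lemma, $f$ acting on a matrix is $f$ applied to each entry, the right-hand side is exactly the $(i,j)$ entry of $f(\rep^\#(\denot{M}_r))$, so the two matrices coincide.

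There is essentially no obstacle here: the only mild point to be careful about is checking that the base size $q$ and the input/output strings $x,y$ are genuinely unchanged by the rule, so that the range of valid indices $(i,j)$ is the same on both sides and the two matrices have the same shape; this is immediate from the form of the operation. I would close by noting that, as the resulting value is expressed solely via $\rep^\#(\denot{M}_r)$ and $f$, it is independent of the chosen representative, establishing that $\denot{\mathsf{apply}(f,M)}_r$ is well-defined and completing the proof.
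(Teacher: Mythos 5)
Your proposal is correct and follows essentially the same route as the paper's proof: fix a representative, compute entrywise via Definition~\ref{def:matrix_rep}, apply Lemma~\ref{lem:wmc_field_endomorphism} to the formula $\phi \land x = j \land y = i$, and conclude well-definedness because the result depends only on $\rep^\#(\denot{M}_r)$. No gaps.
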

\begin{proof}
    For $\denot{M}_r = [(\phi, W, x, y, q)]$ we have defined
    \begin{align}
        \denot{\mathsf{apply}(f,M)}_r = [(\phi, f \circ W, x, y, q)]
    \end{align}
    We prove this is well-defined by showing that $\rep^\#(\denot{\mathsf{apply}(f,M)}_r)$ has the same value, independent of the choice of the representation at the start of the proof. Using Lemma~\ref{lem:wmc_field_endomorphism}, we get
    \begin{align}
        \bra{i} \rep^\#(\denot{\mathsf{apply}(f,M)}_r)\ket{j}
        &= \WMC(\phi \land x=j \land y=i, f \circ W) \\
        &= f(\WMC(\phi \land x = j \land y = i, W)) \\
        &= f(\bra{i}\rep^\#(\denot{M}_r)\ket{j})
    \end{align}
    Since this is true for any $0 \leq i < q^n$ and $0 \leq j < q^m$, this proves the lemma.
\end{proof}

\begin{lemma}[Trace]{}
    Let $M$ be an expression of type $\mathcal M(q,n\to n)$ and suppose $\denot{M}_r$ is well-defined. Then $\denot{\mathsf{tr}(M)}_r$ is also well-defined, and
    \begin{align}
        \rep^\#(\denot{\mathsf{tr}(M)}_r) = \tr(\rep^\#(\denot{M}_r))
    \end{align}
\end{lemma}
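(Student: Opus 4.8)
The plan is to follow the same template as the preceding lemmas. Fix a representative $(\phi, W, x, y, q)$ of the well-defined class $\denot{M}_r$ and write $M' = \rep^\#(\denot{M}_r)$, a $q^n \times q^n$ matrix. By the trace rule in Section~\ref{sec:repr_semantics} we have $\denot{\mathsf{tr}(M)}_r = [(\phi \land (x \leftrightarrow y) \land \val_x, W)]$, so it suffices to show $\rep^\#(\denot{\mathsf{tr}(M)}_r) = \WMC(\phi \land (x \leftrightarrow y) \land \val_x, W) = \tr(M')$; since the right-hand side depends only on $\denot{M}_r$ and not on the chosen representative, this simultaneously establishes that $\denot{\mathsf{tr}(M)}_r$ is well-defined. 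Note this rule carries no side conditions, so no substitution of variables or normalization of $\WMC(\top, W)$ is required.

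The core step is to rewrite the two added constraints. Using the notation from Appendix~\ref{chapter:encodings}, $\val_x \equiv \bigvee_{a=0}^{q^n-1}(x = a)$, and distributing $x \leftrightarrow y \equiv \bigwedge_i (x_i \leftrightarrow y_i)$ over the disjunctions $x_i \leftrightarrow y_i \equiv \bigvee_n (x_i = n \land y_i = n)$ shows that $(x \leftrightarrow y) \land \val_x \equiv \bigvee_{a=0}^{q^n-1}(x = a \land y = a)$. These disjuncts are pairwise unsatisfiable because $(x = a) \land (x = b) \equiv \bot$ for $a \ne b$ by the encoding axioms, so applying Lemma~\ref{lem:wmc_disjoint_formulae} — extended from two to $q^n$ pairwise-exclusive formulae by an immediate induction — yields
\[
    \rep^\#(\denot{\mathsf{tr}(M)}_r) = \sum_{a=0}^{q^n-1} \WMC(\phi \land x = a \land y = a, W).
\]
By Definition~\ref{def:matrix_rep}, each summand equals $\bra{a}M'\ket{a} = M'_{aa}$, so the sum is $\sum_{a=0}^{q^n-1} M'_{aa} = \tr(M') = \tr(\rep^\#(\denot{M}_r))$, completing the argument.

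The only bookkeeping I expect to be nontrivial is (i) justifying the Boolean equivalence $(x \leftrightarrow y) \land \val_x \equiv \bigvee_a (x = a \land y = a)$ purely from the abstract properties of $q$-state variable encodings (and their lifting to strings), and (ii) promoting Lemma~\ref{lem:wmc_disjoint_formulae} from a binary disjunction to a finite disjunction of mutually exclusive formulae. Both are routine and require no WMC identity beyond those already collected in Section~\ref{sec:wmc_properties}; in particular, unlike the scalar- and matrix-addition lemmas, no fresh control variable appears here, which is what makes this the simplest of the matrix operations to verify.
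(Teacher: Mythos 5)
Your proposal is correct and follows essentially the same route as the paper's proof: both decompose $(x \leftrightarrow y) \land \val_x$ into the pairwise-exclusive disjunction $\bigvee_{a}(x=a \land y=a)$, apply Lemma~\ref{lem:wmc_disjoint_formulae} (implicitly extended to $q^n$ disjuncts) to split the count into a sum, and identify each summand as a diagonal entry $\bra{a}\rep^\#(\denot{M}_r)\ket{a}$ via Definition~\ref{def:matrix_rep}. The only cosmetic difference is the order of operations—you rewrite the constraint into the disjoint disjunction before splitting, while the paper first splits on $\val_x$ and then simplifies $(x\leftrightarrow y)\land x=a$ to $x=a\land y=a$.
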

\begin{proof}
    Suppose that $\denot{M}_r = [(\phi,W,x,y,q)]$, then
    \begin{align}
        \denot{\mathsf{tr}(M)}_r = [(\phi \land (x \iff y) \land \val_x, W)]
    \end{align}
    We prove that this is well-defined by showing the result of $\rep^\#(\denot{\mathsf{tr}(M)}_r)$ is independent of the choice of the representation $\denot{M}_r$. We have
    \begin{align}
        \rep^\#(\denot{\mathsf{tr}(M)}_r)
        &= \rep^\#[(\phi \land (x \iff y) \land \val_x, W)] \\
        &= \WMC(\phi \land (x \iff y) \land \val_x, W)
    \end{align}
    Using the fact that $\val_x \equiv \bigwedge_{a=0}^{q^n-1} (x=a)$ and Lemma~\ref{lem:wmc_disjoint_formulae}, we get
    \begin{align}
        \rep^\#(\denot{\mathsf{tr}(M)}_r)
        &= \sum_{a=0}^{q^n-1}\WMC(\phi \land (x \iff y) \land x = a, W) \\
        &= \sum_{a=0}^{q^n-1}\WMC(\phi \land x = a \land y = a, W) \\
        &= \sum_{a=0}^{q^n-1} \bra{a}\rep^\#(\denot{M}_r)\ket{a} \\
        &= \tr(\rep^\#(\denot{M}_r))
    \end{align}
    This proves the lemma.
\end{proof}

\begin{lemma}[Matrix entry]{}
    Let $M$ be an expression of type $\mathcal M(q, m \to n)$. Suppose we have indices $i$ and $j$ with $0 \leq i < q^n$ and $0 \leq j < q^m$. Furthermore, suppose $\denot{M}_r$ is well-defined. Then $\denot{\mathsf{entry}(i,j,M)}_r$ is well-defined, and
    \begin{align}
        \rep^\#(\denot{\mathsf{entry}(i,j,M)}_r) = (\rep^\#(\denot{M}_r))_{ij}
    \end{align}
\end{lemma}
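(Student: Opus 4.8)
The plan is to read the result off directly from the definition of the matrix-entry operation together with Definition~\ref{def:matrix_rep}. By the induction hypothesis on the type-derivation tree (the (Entry) rule applied to the derivation of $\vdash M : \mathcal M(q, m \to n)$), $\denot{M}_r$ is a well-defined equivalence class; pick any representative $\denot{M}_r = [(\phi, W, x, y, q)]$. Since $M$ has type $\mathcal M(q, m \to n)$, the string $x$ has length $m$ and $y$ has length $n$, so the constraints $x = j$ and $y = i$ are well-formed for $0 \leq j < q^m$ and $0 \leq i < q^n$. The operation then gives $\denot{\mathsf{entry}(i,j,M)}_r = [(\phi \land (x=j) \land (y=i), W)]$, so that
\[
\rep^\#(\denot{\mathsf{entry}(i,j,M)}_r) = \WMC(\phi \land (x = j) \land (y = i), W).
\]

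The next step is to invoke equation~(\ref{eq:matrix_rep}) from the definition of matrix representations: by that defining property, $\WMC(\phi \land (x = j) \land (y = i), W) = M_{ij}$, where $M = \rep(\phi, W, x, y, q) = \rep^\#(\denot{M}_r)$. This is exactly the claimed identity $\rep^\#(\denot{\mathsf{entry}(i,j,M)}_r) = (\rep^\#(\denot{M}_r))_{ij}$.

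Finally, for well-definedness I would observe that the right-hand side $M_{ij}$ depends only on the matrix $M$, hence only on the class $[(\phi, W, x, y, q)] = \denot{M}_r$ and not on the chosen representative; therefore $\rep^\#(\denot{\mathsf{entry}(i,j,M)}_r)$ takes the same value for every representative, i.e.\ $\denot{\mathsf{entry}(i,j,M)}_r$ is well-defined. There is essentially no obstacle here — this is the most direct of the operation lemmas, since the matrix-entry rule was designed to mirror~(\ref{eq:matrix_rep}) verbatim. The only point requiring care is the row/column index convention, namely that $y = i$ fixes the output (row) index and $x = j$ fixes the input (column) index, matching $\bra{j}M\ket{i} = M_{ij}$.
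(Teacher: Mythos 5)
Your proposal is correct and follows essentially the same route as the paper: pick a representative, apply the entry rule, and observe that equation~(\ref{eq:matrix_rep}) in Definition~\ref{def:matrix_rep} makes $\WMC(\phi \land (x=j) \land (y=i), W) = M_{ij}$ hold by definition, with well-definedness following because the right-hand side depends only on the represented matrix. Your write-up is in fact slightly more careful than the paper's (which elides the $(\cdot)_{ij}$ subscript and a $\rep^\#$ in its final chain of equalities), and you correctly track the paper's $\bra{j}M\ket{i}=M_{ij}$ convention.
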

\begin{proof}
    Suppose $\denot{M}_r = [(\phi, W, x, y, q)]$. Then, by definition, we have
    \begin{align}
        \denot{\mathsf{entry}(i,j,M)}_r = [(\phi \land x = j \land y = i, W)]
    \end{align}
    We prove this is well-defined by showing that $\rep^\#(\denot{\mathsf{entry}(i,j,M)}_r)$ yields the same result, independent of the choice of the representation before. From the definition of matrix representations, we get
    \begin{align}
        (\rep^\#(\denot{M}_r))
        &= \WMC(\phi \land x = j \land y = i, W) \\
        &= \rep^\#[(\phi \land x = j \land y = i, W)] \\
        &= \denot{\mathsf{entry}(i,j,M)}_r
    \end{align}
    This proves the lemma.
\end{proof}

\newpage
\addcontentsline{toc}{section}{References}
\bibliographystyle{plain}
\bibliography{src/bibliography}

\end{document}